\pgfplotsset{compat=1.15}
\begin{document}

\title{Verifying generalised and structural soundness of workflow nets via relaxations}
\titlerunning{Verifying generalised and structural soundness of workflow nets}

\author{Michael Blondin\inst{1}\orcidID{0000-0003-2914-2734} \and
  Filip Mazowiecki\inst{2}\orcidID{0000-0002-4535-6508} \and
  Philip Offtermatt\inst{1,2}\orcidID{0000-0001-8477-2849}}

\institute{Universit\'{e} de Sherbrooke, Sherbrooke, Canada \and Max Planck Institute for Software Systems, Saarbr\"ucken, Germany}

\maketitle

\begin{abstract}
  Workflow nets are a well-established mathematical formalism for the
  analysis of business processes arising from either modeling tools or
  process mining. The central decision problems for workflow nets are
  $k$-soundness, generalised soundness and structural soundness. Most
  existing tools focus on $k$-soundness. In this work, we propose
  novel scalable semi-procedures for generalised and structural
  soundness. This is achieved via integral and continuous Petri net
  reachability relaxations. We show that our approach is competitive
  against state-of-the-art tools.
\end{abstract}

\section{Introduction}
\emph{Workflow nets} are a well-established mathematical formalism for the
description of business processes arising from software modelers and
process mining (\eg, see~\cite{van1998application,Aalst00}), and
further notations such as UML activity diagrams~\cite{Aalst2002}. More
precisely, a workflow net consists of
\emph{places} that contain resources, and \emph{transitions} that can
consume, create and move resources concurrently. Two designated
places, denoted $\initial$ and $\output$, respectively model the
initialization and completion of a process. Workflow nets, which form
a subclass of Petri nets, enable the automatic formal verification of
business processes. For example, \emph{$1$-soundness} states that from
the initial configuration $\imarked{1}$, every reachable configuration
can reach the final configuration $\fmarked{1}$. Informally, this
means that given any partial execution of a business process, it is
possible to complete it properly.


\paragraph*{Soundness.}

The main decision problems concerning workflow nets revolve around
soundness properties. The generalisation of $1$-soundness to several
resources is \emph{$k$-soundness}. It asks whether from $\imarked{k}$,
every reachable configuration can reach $\fmarked{k}$ (here,
$\{p \colon k\}$ indicates that place $p$ contains $k$
resources). \emph{Generalised soundness} asks whether $k$-soundness
holds for all $k \geq 1$. Unlike $k$-soundness, generalised soundness
preserves desirable properties like
composition~\cite{van2003soundness}. \emph{Structural soundness} is
the existential counterpart of generalised soundness, \ie\ it asks
whether $k$-soundness holds for some $k \geq 1$. These problems are
all decidable~\cite{AL97,HSV04,ctiplea2005structural}, but with high
complexity: either PSPACE- or EXPSPACE-complete~\cite{BMO22}. Most of
the (software) tools focus on $k$-soundness, with an emphasis on $k =
1$. Existing algorithms for generalised and structural soundness rely
on Petri net
reachability~\cite{HSV04,ctiplea2005structural,van2007verifying},
which was recently shown Ackermann-complete~\cite{LS19,Ler21,CO21}, so
not primitive recursive. In this work, we describe
\emph{novel scalable semi-procedures for generalised and
  structural soundness}.

We focus on ``negative instances'', \ie\ where soundness does
\emph{not} hold. Let us motivate this. It is known that given a
workflow net $\pn$, one can iteratively apply simple reduction rules
to $\pn$. The resulting workflow net $\pn'$ is sound iff
$\pn$ is as well~\cite{bride2017reduction,hoffmann2017workflow}. In
practice, one infers that $\pn$ is sound from
the fact that $\pn'$ has been reduced to a trivial workflow net where
only $\initial$ and $\output$ remain. However, if $\pn$ is
\emph{not} sound, one obtains some nontrivial $\pn'$ that
must be verified via some other approach such as model checking. In
this work, we provide algorithmic building blocks for this case, where state-space exploration is prohibitive.

\paragraph*{Relaxations.}

This is achieved by considering two reachability relaxations, namely
integer reachability and continuous reachability. As their name
suggests, these two notions relax some forbidden behaviour of workflow
nets. Informally, integer reachability allows for the amount of
resources to become temporarily negative, while continuous
reachability allows the fragmentation of resources into pieces. Such
relaxations possibly introduce spurious behaviour, but enjoy
significantly better algorithmic properties (\eg,
see~\cite{Blo20}). For example, they have been successfully employed
for the verification of multi-threaded program
skeletons~\cite{ELMMN14,ALW16,BFHH17}.

\paragraph*{Generalised soundness.}

Based on these
relaxations, we provide two necessary conditions for generalised
soundness: \emph{integer boundedness} and \emph{continuous
  soundness}. The former states that the state-space of a given
workflow net is bounded (from above) even under integer
reachability. The latter states that a given workflow net is $1$-sound
under continuous reachability. We show the following for integer
boundedness and continuous soundness:
\begin{itemize}
\item Well-established classical reduction rules preserve both
  properties;
  
\item Integer boundedness is testable in polynomial time,
and continuous soundness is coNP-complete;

\item From a practical viewpoint, they are respectively translatable
  into instances of linear programming and linear arithmetic (which
  can be solved efficiently by dedicated tools such as SMT solvers);

\item Under a mild computational assumption, continuous soundness
  implies integer boundedness.
\end{itemize}
  
Thus, altogether, in order to check whether a workflow net $\pn$ is
generalised \emph{unsound}, one may first use classical reduction
rules to obtain a smaller workflow net $\pn'$; test integer
\emph{unboundedness} in polynomial time; and, if needed, move onto
testing continuous \emph{unsoundness}.

The fact that continuous reachability can be used to semi-decide
generalised soundness is arguably surprising. Using the notation of
computation temporal logic (CTL), $k$-soundness can be rephrased as
$\imarked{k} \models \forall \mathsf{G}\, \exists \mathsf{F}\,
\fmarked{k}$. Some other well-studied properties have a similar structure,
\eg\ liveness and home-stateness amount to ``$\m_\text{init} \models
\bigwedge_{t \in T} \forall \mathsf{G}\, \exists \mathsf{F}\,
(t~\text{is enabled})$'' and ``$\m_\text{init} \models \forall
\mathsf{G}\, \exists \mathsf{F}\, \m_\text{home}$''. It is known that
liveness, home-stateness, and other properties such as boundedness and
inclusion, \emph{cannot} be approximated
continuously~\cite[Sect.~4]{BH17}. Yet, generalised soundness
quantifies $k$-soundness universally, and this enables a continuous
over-approximation. Consequently, we provide a novel application of
continuous relaxations for the efficient verification of properties
beyond reachability.

\paragraph*{Structural soundness.}

The authors of~\cite{ctiplea2005structural} have observed that a
property called structural quasi-soundness is a necessary condition for
structural soundness. The former states that $\imarked{k}$ can reach
$\fmarked{k}$ for some $k \geq 1$. In~\cite{ctiplea2005structural},
structural quasi-soundness is reduced to Petri net reachability, which
has non primitive recursive complexity. In this work, we show that
structural quasi-soundness can be rephrased as continuous
reachability. Since the latter can be tested in polynomial
time~\cite{FH15}, or alternatively via SMT solving~\cite{BFHH17}, this
vastly improves the practicability of structural quasi-soundness. We
further show that this approach can be adapted so that it provides a
lower bound on the first $k$ such that $\imarked{k}$ can reach
$\imarked{f}$. From a practical point of view, this is useful as it
can vastly reduce the number of reachability queries to decide
structural soundness.

\paragraph*{Free-choice nets.}

Many real-world workflow nets have a specific structure where
concurrency is restricted. Such nets are known as \emph{free-choice}
workflow nets (\eg, see~\cite{DE95} for a book). In particular,
free-choice workflow nets allow for the modeling of many features
present in common workflow management
systems~\cite{van1998application}. Generalised soundness is equivalent
to $1$-soundness for free-choice workflow nets~\cite{ping2004on}. In
this work, we prove that continuous soundness is equivalent to
generalised soundness. As a byproduct of our proof, we show that
structural soundness is also equivalent to continuous
soundness. Altogether, the notions of $\{$\text{1-}$, \allowbreak
\text{generalised}, \allowbreak \text{structural}, \allowbreak
\text{continuous}\}$ soundness \emph{all coincide} for free-choice
nets. In particular, this means that the continuous relaxation is
\emph{exact} and can serve as an efficient addition to the existing
algorithmic toolkit.

\paragraph*{Experimental results.}

To demonstrate the viability of our approach, we have implemented and
experimentally evaluated a prototype. As part of our evaluation, we
propose several new synthetic instances for generalised and structural
soundness, which are hard to decide with naive approaches. Some of
these instances involve the composition of workflow nets arising from
the modeling of business processes in the IBM WebSphere Business
Modeler. Our prototype is competitive against both a state-of-the-art
Petri net model checker, and a workflow net analyzer. In particular,
our approach exhibits better signs of scalability.

\paragraph*{Organization.}

The paper follows the structure of this introduction.
\Cref{sec:prelims} introduces notation, workflow nets and some
properties. \Cref{sec:relaxations} defines integer and continuous
relaxations, and further shows that they are preserved under reduction
rules. \Cref{sec:gen:sound,sec:struct:sound,sec:freechoice} present
the aforementioned results on generalised soundness, structural
soundness and free-choice nets. \Cref{sec:experimental} provides
experimental results. \Cref{sec:conclusion} concludes. Some proofs are
deferred to an appendix.

\section{Preliminaries}
\label{sec:prelims}
We use $\Z$, $\N$, $\Q$ and $\Qpos$ to respectively denote the
integers, the naturals (including $0$), the rationals and the
nonnegative rationals (including $0$). Let $\vec{x}, \vec{y} \in \Q^S$
be vectors over a finite set $S$. We write $\vec{x} \leq \vec{y}$ if
$\vec{x}[s] \leq \vec{y}[s]$ for all $s \in S$. We write $\vec{x}
< \vec{y}$ if $\vec{x} \leq \vec{y}$ and $\vec{x}[s] < \vec{y}[s]$ for
some $s \in S$. We extend addition and subtraction to vectors, \ie\
$(\vec{x} + \vec{y})[s] \defeq \vec{x}[s] + \vec{y}[s]$ and $(\vec{x}
- \vec{y})[s] \defeq \vec{x}[s] - \vec{y}[s]$ for all $s \in S$.
We define $\support{\vec{x}} = \set{s \in S \mid \vec{x}[s] \neq 0}$.
Given $c \in \Q$,
$\vec{c} \in \Q^S$ denotes the vector such that $\vec{c}[s] = c$ for
all $s \in S$.

\subsection{Petri nets}

A \emph{Petri net} $\pn$ is a triple $(P, T, F)$, where $P$ is a
finite set of \emph{places}; $T$ is a finite set
of \emph{transitions}, such that $T \cap P = \emptyset$; and $F \colon
((P \times T) \cup (T \times P)) \to \set{0, 1}$ is a set
of \emph{arcs}. For readers familiar with Petri nets, note that arc weights are not allowed, \ie the weights are always
$1$. A \emph{marking} is a vector $\m \in 
\N^P$ such that $\m[p]$ denotes the number of \emph{tokens} in
place $p$. 
We denote markings listing nonzero values,
\eg $\m = \set{p_1 \colon 1}$ means $\m[p_1] = 1$ and $\m[p] = 0$ for $p \neq p_1$.

Let $t \in T$. We define the \emph{pre-vector} of $t$ as $\pre{t}
\in \N^P$, where $\pre{t}[p] \defeq F(p,t)$.
We define its \emph{post-vector}
symmetrically with $\post{t}[p] \defeq F(t, p)$. The
\emph{effect} of $t$ is denoted as $\effect{t} \defeq \post{t} -
\pre{t}$. We say that a transition $t$ is \emph{enabled} at a marking
$\m$ if $\m \geq \pre{t}$. If this is the case, then $t$ can be
\emph{fired} at $\m$, which results in a marking $\m'$ such that $\m'
\defeq \m + \effect{t}$. We write $\m \trans{t}$ to denote that $t$ is
\emph{enabled} at $\m$, and we write $\m \trans{t} \m'$ whenever we care
about the marking $\m'$ resulting from the firing. We further write
$\m \trans{} \m'$ to denote that $\m \trans{t} \m'$ for some $t \in
T$.

We say that a sequence of transitions $\pi = t_1 \cdots t_n$ is a
\emph{run}.  We extend the notion of effect, enabledness and firing
from transitions to runs in a straightforward way.  The \emph{effect}
of a run is defined as the sum of the effects of its transitions, that
is, $\effect{\pi} \defeq \effect{t_1} + \ldots + \effect{t_n}$. The
run $\pi$ is enabled at $\m$, denoted as $\m \trans{\pi}$, if $\m
\trans{t_1} \m_1 \trans{t_2} \m_2 \cdots \trans{t_{n-1}} \m_{n-1}
\trans{t_{n}}$ for some markings $\m_1, \m_2, \dots,
\m_{n-1}$. Furthermore, firing $\pi$ from $\m$ leads to $\m'$, denoted
as $\m \trans{\pi} \m'$, if $\m \trans{\pi}$ and $\m' = \m +
\effect{\pi}$. We denote the reflexive and transitive closure of
${\trans{}}$ by ${\reach}$.

A pair $(\pn, \m)$, where $\pn$ is a Petri net and $\m$ is a marking
of $\pn$, is called a \emph{marked Petri net}. We write $\Reach{\pn,
  \m} \defeq \{\m' \mid \m \reach \m'\}$ to denote the set of markings
reachable from $\m$ in $\pn$.

A marked Petri net $(\pn, \m)$ is \emph{bounded} if there exists $b
\in \N$ such that $\m' \in \Reach{\pn, \m}$ implies $\m'[p] \leq b$
for all $p \in P$. It is further \emph{safe} if $b = 1$. We say
\emph{unbounded} and \emph{unsafe} for ``not bounded'' and ``not
safe''.

Sometimes, we argue about transformations on Petri nets
which take as an input a Petri net $\pn$
and output a Petri net $\pn'$.
We say that such a transformation \emph{preserves} some property
if $\pn$ satisfies that property iff $\pn'$ satisfies it.

\begin{figure}
  \centering
  \begin{tikzpicture}[auto, node distance=1cm, transform shape, scale=0.8]
    \node[place, label=above:$\initial$, tokens=1] (i) {};
    \node[transition, right of=i, label=above:$s$] (s) {};
    
    \node[place, above right=0cm and 0.5cm of s, label=above:$p_1$] (p1) {};
    \node[place, below right=0cm and 0.5cm of s, label=below:$p_2$] (p2) {};
    \node[transition, right of=p1, label=above:$t_1$] (t1) {};
    \node[transition, right of=p2, label=below:$t_2$] (t2) {};

    \node[place, right of=t1, label=above:$q_1$] (q1) {};
    \node[place, right of=t2, label=below:$q_2$] (q2) {};

    \node[transition, below right=0cm and 0.5cm of q1,
          label=above:$u$] (u) {};
    \node[place, right of=u, label=above:$\output$] (f) {};

    \path[->]
    (i) edge node {} (s)
    (s) edge node {} (p1)
    (s) edge node {} (p2)

    (p1) edge node {} (t1)
    (p2) edge node {} (t2)

    (t1) edge node {} (q1)
    (t2) edge node {} (q2)

    (q1) edge node {} (u)
    (q2) edge node {} (u)
    (u)  edge node {} (f)
    ;

    \node[place, label=above:$\initial$, tokens=2, right=1cm of f] (ii) {};
    \node[transition, right of=ii] (t1) {};
    \node[place, right of=t1] (p1) {};

    \node[transition, above right=0cm and 0.5cm of p1] (t2) {};
    \node[transition, below right=0cm and 0.5cm of p1] (t3) {};

    \node[transition, above=0.25cm of t2] (t2r) {};
    \node[transition, below=0.25cm of t3] (t3r) {};

    \node[place, right of=t2] (p2) {};
    \node[place, right of=t3] (p3) {};

    \node[transition, below right=0cm and 0.5cm of p2,
          label=above:$t$] (t4) {};
    \node[place, label=above:$\output$, right of=t4] (ff) {};

    \node[place, below of=t4] (p4) {};
    \node[transition, right of=p4] (t5) {};

    \path[->]
    (ii) edge node {} (t1)
    (t1) edge node {} (p1)

    (p1) edge node {} (t2)
    (p1) edge node {} (t3)
    (t2) edge node {} (p2)
    (t3) edge node {} (p3)
    
    (p2)  edge[out=90,  in=0] node {} (t2r)
    (p3)  edge[out=-90, in=0] node {} (t3r)
    (t2r) edge[out=180, in=90]  node {} (p1)
    (t3r) edge[out=180, in=-90] node {} (p1)

    (p2) edge node {} (t4)
    (p3) edge node {} (t4)
    (t4) edge node {} (ff)

    (t4) edge node {} (p4)
    (p4) edge node {} (t5)
    (t5) edge node {} (ff)
    ;

    \coordinate (m) at ($(f)!0.5!(ii)$);
    \coordinate[yshift=35pt]  (m1) at (m);
    \coordinate[yshift=-35pt] (m2) at (m);

    \draw (m1) -- (m2);
  \end{tikzpicture}
  
  \caption{Example of two Petri nets: respectively $\pn_\text{left}$
    and $\pn_\text{right}$.}\label{fig:pn}
\end{figure}
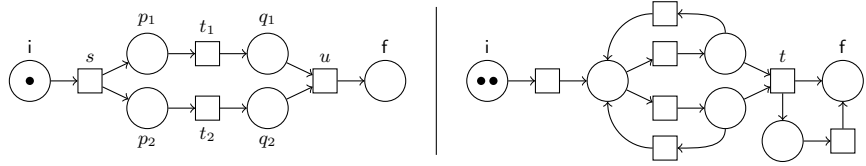

\begin{example}
  The left-hand side of \Cref{fig:pn} illustrates a Petri net
  $\pn_\text{left} = (P, T, F)$ where $P \defeq \{\initial, p_1, p_2,
  q_1, q_2, \output\}$, $T \defeq \{s, t_1, t_2, u\}$, and $F$ is
  depicted by arcs, \eg\ $F[\initial, s] = 1$ and $F[s, \initial] =
  0$. The Petri net is marked by $\imarked{1}$, \ie\ with one token in
  place $\initial$. We have $\imarked{1} \trans{s} \{p_1 \colon 1,
  p_2 \colon 1\} \trans{t_1 t_2} \{q_1 \colon 1, q_2 \colon
  1\} \trans{u} \fmarked{1}$.\hfill{$\lhd$}
\end{example}

\subsection{Workflow nets}

A workflow net $\pn$ is a Petri net~\cite{AL97} such that:
\begin{itemize}
\item there is a designated \emph{initial place} $\initial$ such that
  $\post{t}[\initial] = 0$ for all $t \in T$;

\item there is a designated \emph{final place} $\output \neq \initial$
  such that $\pre{t}[\output] = 0$ for all $t \in T$; and

\item each place and transition lies on at least one path from
  $\initial$ to $\output$ in the underlying graph of $\pn$, \ie\ $(V,
  E)$ where $V \defeq P \cup T$ and $(u, v) \in E$ iff $F(u, v) \neq 0$.
\end{itemize}
We say that $\pn$ is: 
\begin{itemize}
\item \emph{$k$-sound} if for all $\m \in \Reach{\pn, \imarked{k}}$ it
  is the case that $\m \reach \fmarked{k}$~\cite{AL97};

\item \emph{generalised sound} if $\pn$ is $k$-sound for all $k \in
  \Nn$~\cite[Def.~3]{van2003soundness},

\item \emph{structurally sound} if $\pn$ is $k$-sound for some $k \in \Nn$~\cite{barkaoui1998structural}.

\end{itemize}

\begin{example}
  \Cref{fig:pn} depicts two workflow nets: $\pn_\text{left}$ and
  $\pn_\text{right}$. The former is generalised sound, but the latter
  is not. Indeed, from $\imarked{1}$, transition $t$ cannot be enabled
  (as transitions preserve the sum of all tokens). Both workflow nets
  are structurally sound. Indeed, $\pn_\text{right}$ is $2$-sound as
  it is always possible to redistribute the two tokens so that $t$ can
  be fired in order to reach $\fmarked{2}$. \hfill{$\lhd$}
\end{example}

\section{Reachability relaxations}
\label{sec:relaxations}

Fix a Petri net $\pn = (P, T, F)$. We describe the two aforementioned
relaxations.

\paragraph{Integer reachability.}

An \emph{integral marking} is a vector $\m \in \Z^P$. Any
transition $t \in T$ is \emph{enabled} in $\m \in \Z^P$, and \emph{firing} $t$ leads
to $\m' \defeq \m + \effect{t}$, denoted $\m \ztrans{t} \m'$. We
define $\m \ztrans{} \m'$ and $\m \zreach \m'$ analogously to the
standard setting but w.r.t.\ $\ztrans{t}$ rather than
$\trans{t}$. Similarly, $\ZReach{\pn, \m} \defeq \{\m' \in \Z^P \mid
\m \zreach \m'\}$. As transitions are always enabled, the order of a
firing sequence is irrelevant. In particular, $\m \zreach \m'$ iff
there exists $\vec{x} \in \N^T$ such that $\m' = \m + \sum_{t \in T}
\vec{x}[t] \cdot \effect{t}$. Thus, integer reachability amounts to
integer linear programming. Moreover, it is
NP-complete~\cite{HH14,CHH18}.

\paragraph{Continuous reachability.}

A \emph{continuous marking} is a vector $\m \in \Qpos^P$. Let
$\lambda \in \ZeroOne$. We say that $\lambda t$ is \emph{enabled} in
$\m$, denoted $\m \ctrans{\lambda t}$, if $\m \geq \lambda
\cdot \pre{t}$. In this context, $\lambda$ is called the \emph{scaling
  factor}. Furthermore, we denote by $\m \ctrans{\lambda t} \m'$ that
$\lambda t$ is enabled in $\m$, and that its \emph{firing} results in
$\m' \defeq \m + \lambda \cdot \effect{t}$. A sequence of pairs of
scaling factors and transitions is called a \emph{continuous run}.

The notations $\m \ctrans{} \m'$ and $\m \creach \m'$ are defined
analogously to the discrete case but with respect to $\ctrans{\lambda
t}$ rather than $\trans{t}$ (the internal factors $\lambda$ can
differ). Similarly, $\QReach{\pn, \m}
\defeq \{\m' \mid \m \creach \m'\}$ denotes the markings continuously
reachable from $\m$. For example, for $\pn_\text{left}$
from \Cref{fig:pn} and $\pi \defeq \frac{1}{2} s\, \frac{1}{4} t_1$,
we have $\imarked{1} \ctrans{\pi} \{\initial \colon 1/2, p_1 \colon
1/4, p_2 \colon 1/2, q_1 \colon 1/4\}$. It is known that continuous
reachability, namely determining whether $\m \creach \m'$, given
$\m, \m' \in \Qpos^P$, can be checked in polynomial time~\cite{FH15}.




Let us establish the following helpful lemma similar
to~\cite[Lemma~12(1)]{FH15}.

\begin{restatable}{lemma}{lemContEquivReach}\label{lem:cont-equiv-reach}
  Let $\m$, $\m'$ be continuous markings. It is the case that $\m
  \creach \m'$ iff there exists $b \in \Nn$ such that $b \cdot \m
  \reach b \cdot \m'$.
\end{restatable}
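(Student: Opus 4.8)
The plan is to prove the equivalence by establishing each direction separately, exploiting the characterizations already available in the excerpt.

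For the easier direction ($\Leftarrow$), suppose $b \cdot \m \reach b \cdot \m'$ for some $b \in \Nn$ via an ordinary (discrete) run $\pi = t_1 \cdots t_n$. I would convert this into a continuous run witnessing $\m \creach \m'$ by uniformly scaling down by $1/b$. Concretely, I would show by induction on the length of $\pi$ that if $b \cdot \m \trans{t_1 \cdots t_i} \vec{v}_i$ then $\m \ctrans{\frac{1}{b} t_1 \cdots \frac{1}{b} t_i} \frac{1}{b} \vec{v}_i$. The key observation is that each discrete enabledness condition $\vec{v}_{i-1} \geq \pre{t_i}$ scales to $\frac{1}{b} \vec{v}_{i-1} \geq \frac{1}{b} \pre{t_i}$, which is exactly the continuous enabledness of $\frac{1}{b} t_i$; and the effect scales linearly as $\frac{1}{b} \vec{v}_i = \frac{1}{b} \vec{v}_{i-1} + \frac{1}{b} \effect{t_i}$. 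Since $\frac{1}{b}(b \cdot \m) = \m$ and $\frac{1}{b}(b \cdot \m') = \m'$, this yields $\m \creach \m'$.

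For the harder direction ($\Rightarrow$), suppose $\m \creach \m'$ via a continuous run $\pi = \lambda_1 t_1 \cdots \lambda_n t_n$ with intermediate markings $\m = \m_0 \ctrans{\lambda_1 t_1} \m_1 \cdots \ctrans{\lambda_n t_n} \m_n = \m'$. The idea is to clear denominators: all the rationals appearing in $\m$, $\m'$, the scaling factors $\lambda_i$, and the intermediate markings $\m_i$ admit a common denominator $d \in \Nn$, so that multiplying everything by $d$ makes all scaling factors and all token counts integral. I would then take $b$ to be $d$ together with whatever further integer factor is needed to turn the scaled run into a genuine discrete run. The main subtlety is that a scaling factor $\lambda_i t_i$ does not become a single discrete transition firing but rather $d \cdot \lambda_i$ consecutive firings of $t_i$; I would verify that these repeated firings stay enabled throughout. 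This is where the continuous enabledness condition $\m_{i-1} \geq \lambda_i \pre{t_i}$ is crucial: after scaling by $d$, we have $d \cdot \m_{i-1} \geq (d \lambda_i) \pre{t_i}$, and since firing $t_i$ has effect $\effect{t_i} = \post{t_i} - \pre{t_i}$, the partial sums interpolate monotonically in a way that keeps each of the $d\lambda_i$ intermediate markings nonnegative and above $\pre{t_i}$.

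The step I expect to be the main obstacle is precisely the verification in the ($\Rightarrow$) direction that replacing one fractional continuous step $\lambda_i t_i$ by $d \lambda_i$ discrete firings of $t_i$ preserves enabledness at every intermediate point. The concern is that although the continuous marking $\m_{i-1}$ dominates $\lambda_i \pre{t_i}$, firing $t_i$ repeatedly could in principle drive some coordinate below the threshold needed for the next firing. I would handle this by noting that the net effect over the whole block $d \lambda_i t_i$ carries $d \cdot \m_{i-1}$ to $d \cdot \m_i$, and that each coordinate moves monotonically (either only increasing or only decreasing) along the block since all $d\lambda_i$ firings use the same transition $t_i$; hence the minimum of each coordinate over the block is attained at one of the two endpoints $d \cdot \m_{i-1}$ or $d \cdot \m_i$, both of which are nonnegative integral markings. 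This monotonicity argument, rather than any heavy machinery, is the heart of the proof, and it closely parallels the reasoning behind~\cite[Lemma~12(1)]{FH15}.
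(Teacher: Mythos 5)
Your proposal is correct and follows essentially the same route as the paper's proof: the $\Leftarrow$ direction scales each discrete firing by $1/b$ and checks enabledness and effect coordinatewise, and the $\Rightarrow$ direction clears denominators with a common multiple $b$ and replaces each continuous step $\lambda_i t_i$ by $b\lambda_i$ discrete firings of $t_i$, whose intermediate enabledness follows from $b\cdot\m_{i-1} \geq (b\lambda_i)\cdot\pre{t_i}$. Just make sure the final justification leans on this dominance condition (which yields $\geq \pre{t_i}$ at every intermediate marking, not merely nonnegativity) rather than on the endpoints alone; with that reading, the argument is complete.
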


\subsection{Preservation under reduction rules}
In~\cite{bride2017reduction}, the authors present six reduction rules,
denoted $R_1, \ldots, R_6$, that generalize the existing reduction
rules of~\cite{murata89petri}. In the following, we show that these reduction rules
preserve natural properties for the two reachability relaxations.
This means we will be able to check
these properties on a reduced workflow net and get the same results as on the original one.

Formally, the rules simplify a given workflow
net $\pn = (P, T, F)$. In particular, the places of the resulting
workflow net $\pn' = (P', T, F')$ form a subset of $P$. Let us fix a
domain $\D \in \{\N, \Z, \Qpos\}$ and let $P' \subseteq P$. For ease
of notation, we we write $P'' = P \setminus P'$ to denote the (possibly
empty) set of removed places. Rules never remove the initial and
output places, \ie\ $\initial, \output \in P'$. We denote by $\pi
\colon \D^{P} \to \D^{P'}$ the obvious projection function, and by $\pi_0
\colon \D^{P'} \to \D^{P}$ the ``reverse projection'' which fills new
places with $0$. Formally, $\pi_0(\m)[p'] \defeq \m[p']$ for all
$p' \in P'$ and $\pi_0(\m)[p''] \defeq 0$ for all $p'' \in P''$.



In~\cite{bride2017reduction}, the authors prove that the
rules preserve generalised soundness. This of course implies that they
preserve $k$-soundness for all $k$. The technical proposition below will
be helpful in the forthcoming sections
to show the preservation of useful properties based on reachability relaxations.

\begin{restatable}{proposition}{propReductions}\label{prop:reductions}
  Let $\pn = (P, T, F)$ be a workflow net, and let
  $\D \in \{\N, \Z, \Qpos\}$. Let $\pn' = (P', T', F')$ be a workflow
  net obtained by applying a reduction rule $R_i$ to $\pn$, where $P =
  P' \cup P''$. The following holds.

  \begin{itemize}

  \item \emph{Rule $R_1$.}\ We have $P'' = \set{p}$. There exists a
    nonempty set $R' \subseteq P'$ such that if $\imarked{1} \dreach
    \m$ in $\pn$, then $\m[p] = \sum_{r \in R'} \m[r']$. Moreover, $\m
    \dreach \n$ in $\pn$ iff $\pi(\m) \dreach \pi(\n)$ in $\pn'$.

  \item \emph{Rules $R_2$ and $R_3$.}\ We have $P'' = \emptyset$ and
    $\m \dreach \n$ in $\pn$ iff $\m \dreach \n$ in $\pn'$.

  \item \emph{Rules $R_4$ and $R_5$.}\ We have $P'' = \set{p}$. For
    all $\m'$ and $\n'$, $\m' \dreach \n'$ in $\pn'$ iff
    $\pi_0(\m') \dreach \pi_0(\n')$ in $\pn$. Further, for all $t \in
    T$ and $p' \in P'$: either $\pre{t}[p] = 1$ implies $\pre{t}[p'] =
    0$; or $\post{t}[p] = 1$ implies $\post{t}[p'] = 0$. Also, for
    $\D \neq \Z$, if $\exists \m
    : \imarked{1} \dreach \m \not \dreach \fmarked{1}$ holds in $\pn$,
    then $\exists \m'
    : \imarked{1} \dreach \m' \not \dreach \fmarked{1}$ holds in
    $\pn'$.


  \item \emph{Rule $R_6$.}\ We have $P'' = \set{p_2, \ldots,
    p_k}$. There exists $p_1 \in P'$ such that for all $\n \in
    P^{\D}$, if $\sum_{i = 1}^k \m[p_i] = \sum_{i=1}^k\n[p_i]$ and
    $\n[p'] = \m[p']$ for $p' \in P' \setminus \set{p_1}$, then $\m
    \dreach \n$. Moreover, if $\m[p_i] = \n[p_i] = 0$ for $i > 1$,
    then $\m \dreach \n$ in $\pn$ iff $\pi(\m) \dreach \pi(\n)$ in
    $\pn'$.
  \end{itemize}
\end{restatable}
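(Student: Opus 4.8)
The plan is to argue rule by rule, but first to collapse the three domains. The continuous case $\D = \Qpos$ follows from the discrete case $\D = \N$ essentially for free: by \Cref{lem:cont-equiv-reach}, $\m \creach \n$ holds iff $b \cdot \m \reach b \cdot \n$ for some $b \in \Nn$, and since $\pi$ and $\pi_0$ are linear they commute with scaling by $b$. Hence each claim of the shape ``$\m \dreach \n$ in $\pn$ iff $\pi(\m) \dreach \pi(\n)$ in $\pn'$'' (and its $\pi_0$ analogue) over $\Qpos$ is obtained from the corresponding claim over $\N$ by invoking \Cref{lem:cont-equiv-reach} in both directions and cancelling the common factor $b$. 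The linear identities asserted for $R_1$ and $R_6$ are place-flows of the net, hence preserved by firing in all three semantics, so they hold over $\Qpos$ directly. It therefore suffices to treat $\D = \N$ and $\D = \Z$.

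For $\D = \Z$ I would use the effect-based characterisation recalled earlier: $\m \zreach \n$ iff $\n - \m = \sum_{t} \vec{x}[t]\, \effect{t}$ for some $\vec{x} \in \N^T$. Integer reachability is thus insensitive to ordering and enabledness, and each claim reduces to comparing, on the relevant coordinates, the nonnegative integer combinations of the transition effects of $\pn$ and of $\pn'$. For the transition-rewriting rules $R_2, R_3$ one checks that the two sets of reachable displacements coincide on $P$; for the place-removing rules $R_1, R_4, R_5, R_6$ one checks that they agree after projection, the removed coordinate being pinned down by the rule's structural place-flow. The side conditions on $\pre{t}$ and $\post{t}$ in $R_4, R_5$ are read off directly from the definitions of the rules in \cite{bride2017reduction}.

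The bulk of the argument is the discrete case $\D = \N$, handled by unfolding each rule $R_i$ and exhibiting a two-way simulation between runs of $\pn$ and of $\pn'$. In the easy direction a firing in one net is mirrored by the matching firing in the other, and the marking correspondence (via $\pi$, or via $\pi_0$ for $R_4, R_5$) is maintained step by step. The delicate direction is lifting a run of the reduced net back to the larger net: here I would reconstruct the marking of the removed place(s) at each intermediate configuration using the appropriate invariant --- the identity $\m[p] = \sum_{r \in R'} \m[r]$ for $R_1$ and the conserved sum $\sum_{i} \m[p_i]$ for $R_6$ --- and then verify that the reconstructed markings keep the lifted transitions enabled. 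For $R_6$ the freedom to redistribute tokens among $p_1, \ldots, p_k$ (the first displayed implication) supplies reachability of every target with the correct total. The non-soundness transfer clause of $R_4, R_5$ for $\D \neq \Z$ then follows by combining the $\pi_0$-reachability equivalence with the $\pre{t}/\post{t}$ disjointness, which guarantees that a marking witnessing $\imarked{1} \dreach \m \not\dreach \fmarked{1}$ in $\pn$ descends to one in $\pn'$.

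The step I expect to be the main obstacle is precisely this backward, lifting direction in the $\N$ case, and specifically preserving enabledness of the reconstructed run: unlike the $\Z$ semantics (and, via scaling, the $\Qpos$ semantics), the discrete semantics forbids negative intermediate token counts, so one must show that the invariant pins down the removed place tightly enough that every intermediate marking stays nonnegative and every lifted transition fires. This is also exactly where the exclusion of $\D = \Z$ in the soundness clause of $R_4$ and $R_5$ is forced, since that argument rests on a nonnegativity property that integer reachability does not provide.
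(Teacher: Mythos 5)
Your overall architecture (rule-by-rule case analysis, step-by-step simulation for the easy direction, invariants to reconstruct the removed place in the lifting direction) matches the paper's proof, and for the pure reachability-equivalence clauses your treatment is if anything more explicit than the paper's. However, there is a concrete gap in how you dispose of the non-soundness transfer clause for $R_4$ and $R_5$, which is where the real work of this proposition lies. First, that clause is not of the shape ``$\m \dreach \n$ in $\pn$ iff $\pi(\m) \dreach \pi(\n)$ in $\pn'$'', so your reduction of $\D = \Qpos$ to $\D = \N$ via \Cref{lem:cont-equiv-reach} does not cover it: the statement ``$\exists \m : \imarked{1} \creach \m \not\creach \fmarked{1}$'' quantifies over continuous markings reachable from the \emph{fixed} marking $\imarked{1}$, and scaling by $b$ changes the source to $\imarked{b}$; continuous unsoundness is not a scaled copy of the $\N$-statement with source $\imarked{1}$. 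After declaring that it suffices to treat $\N$ and $\Z$, you are left with no argument for this clause over $\Qpos$. (The paper instead runs one argument uniformly for $\N$ and $\Qpos$, ``removing or downscaling'' transitions as appropriate.)

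Second, and more importantly, the one-line justification that the witness ``descends to one in $\pn'$'' by combining the $\pi_0$-equivalence with the $\pre{t}/\post{t}$ disjointness does not go through when the witness $\m$ has $\m[p] > 0$ for the removed place $p$. In that case $\pi_0(\pi(\m)) \neq \m$, so the equivalence ``$\m' \dreach \n'$ in $\pn'$ iff $\pi_0(\m') \dreach \pi_0(\n')$ in $\pn$'' says nothing about $\m$. The paper's proof splits into two cases: either some $\n$ with $\m \dreach \n$ and $\n[p] = 0$ exists (then $\pi(\n)$ works), or every marking reachable from $\m$ keeps $p$ positive --- and it is exactly this dichotomy that requires $\D \neq \Z$, not merely nonnegativity of intermediate counts in a lifted run. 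In the second case one cannot reach a $p$-empty marking from $\m$ at all; instead one must surgically edit the run $\imarked{1} \dreach \m$ (deleting, or downscaling, a suffix of the transitions producing into $p$) to obtain a \emph{different} marking $\m_1$ with $\m_1[p] = 0$. Since $\m_1$ is not reachable from $\m$, the fact that $\m \not\dreach \fmarked{1}$ does not transfer to $\m_1$ for free; one needs the structural side conditions of $R_4$/$R_5$ to extract, from any hypothetical run $\pi(\m_1) \dreach \fmarked{1}$ in $\pn'$, a run $\m \dreach \fmarked{1}$ in $\pn$ and derive a contradiction. This extraction step is the missing idea in your sketch.
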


\section{Using relaxations for generalised soundness}
\label{sec:gen:sound}
In this section, we explain how reachability relaxations can be
leveraged in order to semi-decide generalised soundness of workflow
nets. More precisely, we state two necessary conditions for a workflow
net to be generalised sound: one phrased in terms of integer
reachability, and one in terms of continuous reachability. Furthermore,
for each condition we: (1)~show that it is preserved under reduction
rules, and (2)~establish its computational complexity. Overall, this
means that to conclude that a given workflow net $\pn$ is \emph{not}
generalised sound, one may first reduce $\pn$, and \emph{then}
efficiently test for one of these two necessary conditions.

For integer boundedness, we need the mild assumption of
nonredundancy. Let $\pn = (P, T, F)$ be a workflow net. We say that a
place $p \in P$ is \emph{nonredundant}\footnote{This notion is adapted
from batch workflow nets considered in~\cite{HSV04}.} if there exist
$k \in \Nn$ and $\m \in \N^P$ such that $\imarked{k} \reach \m$ and
$\m[p] \geq 1$. It is known (and simple to see) that redundant places
can be removed from a workflow net without changing whether it is
generalised sound. Moreover, testing whether a place is nonredundant
can be done in polynomial time. Indeed,
by \Cref{lem:cont-equiv-reach}, it amounts to testing for the
existence of some $\m \in \Qpos^P$ such that $\imarked{1} \creach \m$
and $\m[p] > 0$. The latter is known as a \emph{coverability} query
and it can be checked in polynomial time~\cite{FH15}. Thus, in order
to test whether a given workflow net is generalised sound, one can
first remove its redundant places. We call a workflow net without
redundant places a \emph{nonredundant workflow net}.

\subsection{Integer unboundedness}

Recall that a marked Petri net $(\pn, \m)$ is \emph{bounded} if there
exists $b \in \N$ such that $\m' \in \Reach{\pn, \m}$ implies $\m'
\leq \vec{b}$. It is well-known that any $1$-sound workflow net must
be bounded from $\imarked{1}$~\cite{AL97}. In particular, this means
that boundedness is a necessary condition for generalised
soundness. However, testing boundedness has extensive computational
cost as it is EXPSPACE-complete~\cite{CLM76,Rac78}. Consider the
relaxed property of \emph{integer boundedness}. It is defined as
boundedness, but where ``$\m' \in \Reach{\pn,\m}$'' is replaced with
``$\m' \in \ZReach{\pn, \m} \cap \N^P$''.

\begin{proposition}[{\cite[Lemma~5.9]{BMO22}}]\label{prop:zunbounded-notgen}
  Let $\pn$ be a nonredundant workflow net. If $(\pn, \imarked{1})$ is
  integer unbounded, then $\pn$ is not generalised sound.
\end{proposition}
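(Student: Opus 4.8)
The plan is to extract from integer unboundedness a single ``token-generating'' firing pattern, to iterate it genuinely (i.e., with non-negative intermediate markings) from a marking reachable from some $\imarked{k}$, and then to conclude via the classical fact that $k$-soundness forces boundedness from $\imarked{k}$. \emph{Step 1: a token-generating Parikh vector.} By the characterization of integer reachability recalled above, $\ZReach{\pn, \imarked{1}} \cap \N^P = \set{\imarked{1} + \effect{\vec{x}} \mid \vec{x} \in \N^T} \cap \N^P$. Integer unboundedness says this set is unbounded in the $\ell_\infty$-norm, so I would pick markings $\m_1, \m_2, \ldots$ in it with strictly increasing norm, together with witnesses $\vec{x}_n \in \N^T$ such that $\m_n = \imarked{1} + \effect{\vec{x}_n}$. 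Applying Dickson's lemma to the pairs $(\m_n, \vec{x}_n) \in \N^P \times \N^T$ \emph{jointly} yields indices $a < b$ with $\m_a \leq \m_b$ and $\vec{x}_a \leq \vec{x}_b$; since the $\m_n$ have pairwise distinct norms, in fact $\m_a \neq \m_b$. Setting $\vec{y} \defeq \vec{x}_b - \vec{x}_a \in \N^T$ then gives $\effect{\vec{y}} = \m_b - \m_a \geq \vec{0}$ with $\effect{\vec{y}} \neq \vec{0}$.

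\emph{Step 2: firing $\vec{y}$ genuinely, and covering its preset.} Let $\pi$ be any ordering of the multiset $\vec{y}$ and let $\vec{c} \defeq \sum_{t \in T} \vec{y}[t] \cdot \pre{t}$. A routine induction using $\post{t} \geq \vec{0}$ shows that $\pi$ is enabled at $\vec{c}$, so $\vec{c} \trans{\pi} \vec{c} + \effect{\vec{y}}$; as $\effect{\vec{y}} \geq \vec{0} \neq \vec{0}$, the run $\pi$ can be fired arbitrarily often and $(\pn, \vec{c})$ is unbounded. It remains to reach some $\n \geq \vec{c}$ from a genuine initial marking $\imarked{k}$. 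Here I would use two ingredients. First, reachability in workflow nets is \emph{additive}: if $\imarked{k_1} \reach \n_1$ and $\imarked{k_2} \reach \n_2$, then, by monotonicity, firing a run witnessing $\imarked{k_1} \reach \n_1$ from the larger marking $\imarked{k_1 + k_2} = \imarked{k_1} + \imarked{k_2}$ reaches $\n_1 + \imarked{k_2}$, from which a run witnessing $\imarked{k_2} \reach \n_2$ reaches $\n_1 + \n_2$. Second, nonredundancy provides, for each place $p \in \support{\vec{c}}$, some $k_p \in \Nn$ and a marking $\m_p$ with $\imarked{k_p} \reach \m_p$ and $\m_p[p] \geq 1$. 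Combining $\vec{c}[p]$ copies of each such witness via additivity, with $k \defeq \sum_{p \in \support{\vec{c}}} \vec{c}[p] \cdot k_p$, yields $\imarked{k} \reach \n$ for $\n \defeq \sum_{p \in \support{\vec{c}}} \vec{c}[p] \cdot \m_p \geq \vec{c}$. (Here $\vec{c} \neq \vec{0}$, since $\vec{y} \neq \vec{0}$ and every transition of a workflow net has a nonempty preset; thus $k \geq 1$.)

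\emph{Step 3: concluding.} Since $\n \geq \vec{c}$ and $\pi$ is enabled at $\vec{c}$, monotonicity gives $\n \trans{\pi} \n + \effect{\vec{y}}$ and, iterating, a reachable marking exceeding any bound; hence $(\pn, \imarked{k})$ is genuinely unbounded. Finally, I would invoke the standard fact that every $k$-sound workflow net is bounded from $\imarked{k}$ (the case $k = 1$ is the cited result of~\cite{AL97}, and the general case is analogous). Consequently $\pn$ is not $k$-sound, and in particular not generalised sound.

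I expect the main obstacle to be Step 2: although nonredundancy makes each place of $\support{\vec{c}}$ individually reachable-with-a-token, one must cover \emph{all} of them simultaneously and with the correct multiplicities, which is precisely what the additivity-of-reachability observation (itself a consequence of monotonicity) is designed to deliver. The extraction in Step 1 also needs care, as it is the \emph{joint} application of Dickson's lemma to $(\m_n, \vec{x}_n)$ that simultaneously guarantees $\vec{y} \geq \vec{0}$ and that its effect $\effect{\vec{y}}$ is both non-negative and nonzero.
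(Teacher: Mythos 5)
Your argument is correct. Note that the paper does not prove this proposition itself---it is imported from \cite[Lemma~5.9]{BMO22}---so the relevant comparison is with the route the paper's own machinery suggests: extract a marking $\m' > \vec{0}$ with $\vec{0} \zreach \m'$ (\Cref{prop:zunbounded:charac}), convert this integer-reachability witness into genuine reachability via \cite[Lemma~12]{HSV04} (yielding $\imarked{\ell} \reach \fmarked{\ell} + \m'$ for some $\ell$), and then pass through the continuous relaxation (\Cref{prop:zbound:continuous} combined with \Cref{thm:cont-necessary}). Your proof is more elementary and self-contained. Step~1 is essentially the paper's proof of \Cref{prop:zunbounded:charac}, with a joint Dickson extraction on $(\m_n, \vec{x}_n)$ in place of two successive extractions; both work. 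But instead of invoking the \cite{HSV04} lemma and continuous soundness, you pump the positive-effect Parikh vector $\vec{y}$ directly from the covering marking $\vec{c} \defeq \sum_t \vec{y}[t]\cdot\pre{t}$, which you reach (up to $\geq$) from some $\imarked{k}$ by combining the nonredundancy witnesses through additivity of reachability; this gives genuine unboundedness of $(\pn,\imarked{k})$, and you close with the standard fact that $k$-soundness forces boundedness, which the paper itself uses elsewhere, citing \cite[Prop.~3.2 and Lemma~3.6]{BMO22}. What your route buys is independence from continuous reachability and from \cite[Lemma~12]{HSV04}; what it costs is that Step~2 must reprove, in effect, the content of that lemma (supplying enough tokens to realize an integer firing count as a genuine run), and that boundedness-from-$k$-soundness is used as a black box. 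The small side conditions all check out: every transition of a workflow net lies on a path from $\initial$ to $\output$ and therefore has a nonempty preset, so $\vec{c} \neq \vec{0}$ and $k \geq 1$, and the strictly increasing norms guarantee $\effect{\vec{y}} = \m_b - \m_a > \vec{0}$.
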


\begin{restatable}{proposition}{propIntBoundRed}\label{prop:int:bound:red}
  The reduction rules from~\cite{bride2017reduction} preserve integer
  unboundedness.
\end{restatable}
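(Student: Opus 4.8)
The plan is to prove the equivalence rule by rule, instantiating \Cref{prop:reductions} with $\D = \Z$ and transferring integer (un)boundedness through the reachability correspondences it supplies. Recall first that integer firing is commutative, so $\ZReach{\pn, \imarked{1}} = \{\imarked{1} + \sum_{t \in T} \vec{x}[t] \cdot \effect{t} : \vec{x} \in \N^T\}$, and $(\pn, \imarked{1})$ is integer unbounded exactly when this set, intersected with $\N^P$, is unbounded in some coordinate. Since $\initial$ is never removed we have $\pi(\imarked{1}) = \imarked{1}$ and $\pi_0(\imarked{1}) = \imarked{1}$, so the correspondences of \Cref{prop:reductions} apply to runs issued from $\imarked{1}$. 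It then suffices, for each rule, to exhibit a correspondence between $\ZReach{\pn, \imarked{1}} \cap \N^P$ and $\ZReach{\pn', \imarked{1}} \cap \N^{P'}$ that preserves coordinate-wise boundedness.

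I would first dispatch the straightforward rules. For $R_2$ and $R_3$ we have $P'' = \emptyset$ and $\m \zreach \n$ in $\pn$ iff $\m \zreach \n$ in $\pn'$, so the two reachable sets coincide. For $R_1$ and $R_6$ the removed places are pinned to the surviving ones. Under $R_1$, every $\m$ with $\imarked{1} \zreach \m$ satisfies $\m[p] = \sum_{r' \in R'} \m[r']$ with $R' \subseteq P'$, so using the ``$\m \zreach \n$ iff $\pi(\m) \zreach \pi(\n)$'' clause, $\pi$ is a bijection between the two nonnegative reachable sets and the deleted coordinate $p$ is bounded by the surviving coordinates in $R'$. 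Under $R_6$, the tokens on $p_1, \dots, p_k$ are freely redistributable with $p_1 \in P'$; collapsing $p_2, \dots, p_k$ onto $p_1$ matches the surviving coordinates of $\pn'$ (now $p_1$ carries their total), and each deleted coordinate is in turn bounded by that total. In all these cases one side is unbounded iff the other is.

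The main obstacle is $R_4$ and $R_5$, where the correspondence is only ``$\m' \zreach \n'$ in $\pn'$ iff $\pi_0(\m') \zreach \pi_0(\n')$ in $\pn$'', \ie\ it matches $\pn'$ with precisely those markings of $\pn$ that vanish on the removed place $p$. One direction is painless: a witness to the unboundedness of $\pn'$ pushes through $\pi_0$ to a witness for $\pn$, since $\pi_0$ preserves the surviving coordinates and fixes $\imarked{1}$. The converse is delicate, because a witness to the unboundedness of $\pn$ may carry tokens on $p$, and could a priori be unbounded \emph{only} on $p$; such a witness cannot be projected into $\pn'$ directly. To handle it I would use the structural arc condition from \Cref{prop:reductions} (for every $t$ and every $p' \in P'$, either $\pre{t}[p] = 1$ forces $\pre{t}[p'] = 0$, or $\post{t}[p] = 1$ forces $\post{t}[p'] = 0$) to argue that $p$ is controlled by the surviving places: its token count can always be driven back to $0$ without sacrificing mass on the surviving places, and it cannot grow unboundedly while the surviving places stay bounded. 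Granting this, any unbounded coordinate of $\pn$ forces an unbounded surviving coordinate, which is realisable along markings with $p = 0$ and hence projects into $\pn'$ through the $\pi_0$-correspondence.

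To make this crux precise I would isolate an initial-marking-independent reformulation of the property: $(\pn, \imarked{1})$ is integer unbounded iff the rational cone generated by $\{\effect{t} : t \in T\}$ contains a nonzero nonnegative vector, equivalently iff $\{\sum_{t} \vec{x}[t] \cdot \effect{t} : \vec{x} \in \N^T\}$ meets $\N^P \setminus \{\vec{0}\}$. Indeed, any such $\vec{v}$ yields the unbounded family $\imarked{1} + k \cdot \vec{v}$, and conversely, normalising an unbounded sequence of reachable markings and passing to the limit produces a nonzero nonnegative direction inside the (closed, polyhedral) effect cone, which scales to a lattice point of the effect monoid. Under this reformulation the place-deleting rules reduce to a comparison of effect cones: for $R_1$ the deleted coordinate is the functional $\vec{x} \mapsto \sum_{r' \in R'} \vec{x}[r']$ applied to effects, so nonnegativity and nonvanishing transfer both ways; for $R_4$ and $R_5$ the only remaining worry is a nonnegative combination of effects supported on $p$ alone that leaves the surviving places put, which is exactly what the structural arc condition must preclude. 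I expect verifying this last point — that $p$ admits no independent nonnegative pumping — to be the heart of the argument.
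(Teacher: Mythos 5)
Your overall strategy is the paper's. The reformulation you isolate at the end --- $(\pn,\m)$ is integer unbounded iff $\vec{0} \zreach \vec{v}$ for some $\vec{v} > \vec{0}$, independently of $\m$ --- is exactly \Cref{prop:zunbounded:charac}, which the paper's proof invokes as its first step (the paper proves it by a Dickson's-lemma extraction rather than your normalise-and-pass-to-the-limit argument, but both are fine), and the rule-by-rule transfer through \Cref{prop:reductions} with $\D = \Z$, including the treatment of $R_1$, $R_2$, $R_3$ and $R_6$, matches the paper essentially verbatim.

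The gap is where you yourself locate it: the direction ``$\pn$ integer unbounded implies $\pn'$ integer unbounded'' for $R_4$ and $R_5$ is announced, not proved. Two problems. First, the target you finally settle on --- ruling out ``a nonnegative combination of effects supported on $p$ alone'' --- is necessary but not sufficient: a witness $\vec{0} \ztrans{\rho} \vec{v} > \vec{0}$ may have $\vec{v}[p] > 0$ and $\vec{v}[p'] > 0$ simultaneously, and since the correspondence of \Cref{prop:reductions} for these rules only relates markings of $\pn'$ to markings of $\pn$ that vanish on $p$, you still must convert $\vec{v}$ into a witness with zero $p$-coordinate; knowing that $p$ admits no independent pumping does not produce such a conversion. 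Second, the conversion is exactly where the structural arc condition is consumed, and it is a concrete run surgery rather than a comparison of effect cones (note the transition sets of $\pn$ and $\pn'$ differ under these rules, so the cones are not directly comparable). The condition says that either every transition consuming from $p$ consumes from nothing else, or every transition producing into $p$ produces into nothing else. In the first case one appends to $\rho$ enough occurrences of the consumers of $p$ --- always permitted under the integer semantics --- which drains $p$ to zero while only adding tokens to surviving places (every transition of a workflow net has a nonempty postset); in the second case one deletes from $\rho$ the appropriate number of occurrences of the producers of $p$, which zeroes $p$ and only increases the surviving coordinates (the deleted occurrences return their presets). Either way the repaired marking is still $> \vec{0}$, vanishes on $p$, and projects to a witness in $\pn'$. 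Without this case split the hardest case of the proposition is missing, not merely compressed.
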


Next, we establish the complexity of integer unboundedness in two steps. The
first step, in the next proposition, shows that testing integer
boundedness amounts to a simple condition, independent of the initial
marking. The second step shows the condition can be translated
into a linear program over $\Q$, rather than $\N$. As a corollary,
integer unboundedness is testable in polynomial time.

\begin{proposition}\label{prop:zunbounded:charac}
  A marked Petri net $(\pn, \m)$ is integer unbounded iff there exists
  a marking $\m' > \vec{0}$ such that $\0 \zreach \m'$ (independent of $\m$).
\end{proposition}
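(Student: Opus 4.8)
The plan is to establish both implications of the equivalence directly, noting that the right-hand condition makes no reference to $\m$; this is exactly what yields, as a corollary, that integer boundedness is independent of the initial marking. For the easy direction ($\Leftarrow$), suppose there is a marking $\m' > \vec{0}$ with $\vec{0} \zreach \m'$, say $\m' = \sum_{t \in T} \vec{x}[t]\cdot\effect{t}$ for some $\vec{x} \in \N^T$. For any marking $\m$, firing the associated run $n$ times gives $\m \zreach \m + n\cdot\m'$, and since $\m, \m' \geq \vec{0}$ we have $\m + n\cdot\m' \in \N^P$ for every $n$. As $\m'$ has a strictly positive coordinate, the entries of $\m + n\cdot\m'$ grow without bound, so $(\pn,\m)$ is integer unbounded.

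For the harder direction ($\Rightarrow$), assume $(\pn,\m)$ is integer unbounded. For each $n \in \N$, pick a marking $\m_n \in \ZReach{\pn,\m}\cap\N^P$ with some coordinate exceeding $n$, and fix a Parikh vector $\vec{x}_n \in \N^T$ witnessing $\m \zreach \m_n$, i.e.\ $\m_n = \m + \sum_{t \in T} \vec{x}_n[t]\cdot\effect{t}$. Consider the sequence of pairs $(\vec{x}_n, \m_n) \in \N^T \times \N^P$. By Dickson's lemma, $\N^T \times \N^P$ is well-quasi-ordered under the coordinatewise order, so there is an infinite subsequence along which \emph{both} components are non-decreasing: $\vec{x}_{n_1} \leq \vec{x}_{n_2} \leq \cdots$ and $\m_{n_1} \leq \m_{n_2} \leq \cdots$. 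Since each $\m_{n_k}$ has a coordinate exceeding $n_k$, these markings are not all equal, so there exist $k < l$ with $\m_{n_k} \neq \m_{n_l}$. Set $\m' \defeq \m_{n_l} - \m_{n_k}$; then $\m' > \vec{0}$, and since $\vec{x}_{n_k} \leq \vec{x}_{n_l}$ the difference $\vec{y} \defeq \vec{x}_{n_l} - \vec{x}_{n_k}$ lies in $\N^T$ and satisfies $\m' = \sum_{t \in T} \vec{y}[t]\cdot\effect{t}$, witnessing $\vec{0} \zreach \m'$.

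The main obstacle is precisely guaranteeing that $\m'$ is integer-reachable from $\vec{0}$ with \emph{nonnegative} transition multiplicities: applying Dickson's lemma to the markings alone would produce some $\m' > \vec{0}$, but only as an \emph{integer} (possibly negative-coefficient) combination of effects, which need not correspond to a valid integer run from $\vec{0}$. The resolution is to apply the well-quasi-ordering to the joint sequence of Parikh vectors \emph{and} markings, so that monotonicity of the Parikh vectors forces $\vec{y} \in \N^T$ while monotonicity of the markings forces $\m' \geq \vec{0}$ simultaneously.
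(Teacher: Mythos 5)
Your proof is correct and follows essentially the same route as the paper: the easy direction iterates the run witnessing $\vec{0} \zreach \m'$, and the hard direction extracts, via well-quasi-ordering of $\N^T$ and $\N^P$, a pair of reachable markings with comparable Parikh vectors whose difference gives the desired $\m' > \vec{0}$. The only cosmetic difference is that you apply Dickson's lemma once to the product $\N^T \times \N^P$, whereas the paper applies the well-quasi-ordering argument in two stages (first to the markings, then to the Parikh vectors of the selected subsequence); the content is identical.
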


\begin{proof}
  Let $\pn = (P, F, T)$ be a Petri net and let $\m \in \N^P$.
  
  $\Rightarrow$) By assumption, there exist $\m_0, \m_1, \ldots \in
  \ZReach{\pn, \m} \cap \N^P$ such that, for every $i \in \N$, it is
  the case that $m_i \not\leq \vec{i}$. Since $(\N^P, \leq)$ is
  well-quasi-ordered, there exist indices $i_0, i_1, \ldots$ such that
  $\m_{i_j} \leq \m_{i_k}$ for all $j < k$. Without loss of
  generality, we can assume that $\m_{i_j} < \m_{i_k}$ for all $j <
  k$, as we could otherwise extract such a subsequence. Recall that
  each $\m_{i_\ell} \in \ZReach{\pn, \m}$. Let $\pi_\ell \in T^*$ be
  such that $\m \ztrans{\pi_\ell} \m_{i_\ell}$. Let $\vec{x}_\ell \in
  \N^T$ be the vector such that $\vec{x}_\ell(t)$ indicates the number
  of occurrences of transition $t$ in $\pi_\ell$. Since $(\N^T, \leq)$
  is well-quasi-ordered, there exist $j < k$ such that $\vec{x}_j \leq
  \vec{x}_k$. Let $\m' \defeq \m_{i_k} - \m_{i_j}$ and $\pi \defeq
  \prod_{t \in T} t^{(\vec{x}_k[t] - \vec{x}_\ell[t])}$. We have
  $\vec{0} \ztrans{\pi} \m' > \vec{0}$ as desired since:
  \begin{multline*}
    \m'
    = \m_{i_k} - \m_{i_j}
    = (\m + \effect{\pi_k}) - (\m + \effect{\pi_\ell})
    = \effect{\pi_k} - \effect{\pi_\ell} \\
    = \sum_{t \in T} \vec{x}_k[t] \cdot \effect{t} -
       \sum_{t \in T} \vec{x}_\ell[t] \cdot \effect{t}
    = \sum_{t \in T} (\vec{x}_k - \vec{x}_\ell)[t] \cdot \effect{t}
    = \effect{\pi}.
  \end{multline*}
  
  $\Leftarrow$) By assumption $\vec{0} \ztrans{\pi} \m' > \vec{0}$. In
  particular, this means that $\m \ztrans{\pi} \m + \m' \ztrans{\pi}
  \m + 2\m' \ztrans{} \cdots$. Therefore, $(\pn, \m)$ is not integer
  bounded. \qed
\end{proof}

\begin{restatable}{proposition}{propIntBoundLin}\label{prop:int:bound:lin}
  A marked Petri net $(\pn, \m)$, where $\pn = (P, T, F)$, is integer
  unbounded iff this system has a solution:
  $\exists \vec{x} \in \Qpos^T : \sum_{t \in
  T} \vec{x}[t] \cdot \effect{t} > \vec{0}$. In particular, given a
  workflow net $\pn$, testing integer boundedness of
  $(\pn, \imarked{1})$ can be done in polynomial time.
\end{restatable}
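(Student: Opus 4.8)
The plan is to read off the linear characterization directly from \Cref{prop:zunbounded:charac} together with the linear-programming view of integer reachability, and then argue that passing from natural to rational witnesses is free. The statement is really a convenient reformulation rather than a deep result, so the work is in setting up the equivalences cleanly.

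First I would combine \Cref{prop:zunbounded:charac} with the observation made when defining integer reachability, namely that $\vec{0} \zreach \m'$ holds iff $\m' = \sum_{t \in T} \vec{x}[t] \cdot \effect{t}$ for some $\vec{x} \in \N^T$. Since integer unboundedness is equivalent to the existence of $\m' > \vec{0}$ with $\vec{0} \zreach \m'$, this immediately rephrases it as the existence of $\vec{x} \in \N^T$ with $\sum_{t \in T} \vec{x}[t] \cdot \effect{t} > \vec{0}$. It then only remains to replace $\N^T$ by $\Qpos^T$. The inclusion $\N \subseteq \Qpos$ handles one direction for free. For the other direction, if $\vec{x} \in \Qpos^T$ satisfies the strict inequality, I would clear denominators: choose a positive integer $d$ with $d \cdot \vec{x} \in \N^T$, and note that scaling by $d > 0$ preserves both the componentwise inequality $\geq \vec{0}$ and strict positivity in whichever coordinate witnesses $> \vec{0}$. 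Hence $d \cdot \vec{x}$ is a natural witness, which establishes the stated equivalence.

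For the complexity claim, I would observe that the system $\exists \vec{x} \in \Qpos^T : \sum_{t} \vec{x}[t] \cdot \effect{t} > \vec{0}$ is a linear feasibility problem except for the strict part of $> \vec{0}$. Writing $\vec{y} \defeq \sum_{t} \vec{x}[t] \cdot \effect{t}$, the set of attainable $\vec{y}$ is a polyhedral cone: it contains $\vec{0}$ and is closed under nonnegative scaling. Consequently a witness with $\vec{y} \geq \vec{0}$ and $\vec{y}[p] > 0$ for some $p$ exists iff the closed linear program with constraints $\vec{x} \geq \vec{0}$, $\vec{y} \geq \vec{0}$, and $\sum_{p \in P} \vec{y}[p] \geq 1$ is feasible. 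Because linear feasibility over $\Q$ is decidable in polynomial time, and the entries of each $\effect{t}$ lie in $\{-1, 0, 1\}$, this program has size polynomial in the description of $\pn$. Since the characterization in \Cref{prop:zunbounded:charac} is independent of the marking, the same program decides integer boundedness of $(\pn, \imarked{1})$ in particular.

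The only step requiring a moment's care is the formulation one: the strict inequality $> \vec{0}$ must be turned into a closed polyhedral constraint without discarding solutions, which is exactly what the cone/homogeneity observation accomplishes via the normalization $\sum_{p} \vec{y}[p] \geq 1$. Everything else is routine, so I do not anticipate a substantive obstacle.
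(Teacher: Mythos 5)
Your proposal is correct and follows essentially the same route as the paper: invoke \Cref{prop:zunbounded:charac}, rewrite $\vec{0} \zreach \m'$ as a system over $\N^T$ via the Parikh-image characterization of integer reachability, pass to $\Qpos^T$ by homogeneity, and conclude via polynomial-time linear programming. Your normalization $\sum_{p} \vec{y}[p] \geq 1$ for eliminating the strict inequality is a detail the paper's proof glosses over (it simply cites LP feasibility), and it is handled correctly.
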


\subsection{Continuous soundness}

Let us now introduce a continuous variant of $1$-soundness based on
continuous reachability. We prove that this variant, which we call
\emph{continuous soundness}, is a necessary condition for generalised
soundness, and preserved by reduction rules. Moreover, we show that
continuous soundness is coNP-complete, and relates to integer
boundedness.

We say that a workflow net $\pn$ is \emph{continuously sound} if for
all continuous markings $\m \in \QReach{\pn, \imarked{1}}$ it is the
case that $\m \creach \fmarked{1}$.

\begin{theorem}\label{thm:cont-necessary}
  Continuous unsoundness implies generalised unsoundness.
\end{theorem}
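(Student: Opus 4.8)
The goal is to show the contrapositive: if a workflow net $\pn$ is generalised sound, then it is continuously sound. So I assume $\pn$ is $k$-sound for every $k \geq 1$, and I must prove that every continuous marking $\m \in \QReach{\pn, \imarked{1}}$ satisfies $\m \creach \fmarked{1}$.

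The plan is to reduce continuous reachability to ordinary discrete reachability by scaling up, using \Cref{lem:cont-equiv-reach}. Concretely, suppose $\imarked{1} \creach \m$; I want to show $\m \creach \fmarked{1}$. By \Cref{lem:cont-equiv-reach}, the continuous reachability $\imarked{1} \creach \m$ is witnessed by some integer $b \in \Nn$ with $b \cdot \imarked{1} \reach b \cdot \m$ in the ordinary (discrete) semantics. Note $b \cdot \imarked{1} = \imarked{b}$, so we have a genuine discrete run $\imarked{b} \reach b\cdot\m$, and in particular $b \cdot \m$ is a bona fide marking in $\N^P$ reachable from $\imarked{b}$.

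\medskip

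Now I invoke generalised soundness: since $\pn$ is $b$-sound, every marking discretely reachable from $\imarked{b}$ can reach $\fmarked{b}$. In particular $b \cdot \m \in \Reach{\pn, \imarked{b}}$ gives $b \cdot \m \reach \fmarked{b}$ in the discrete semantics. Observe that $\fmarked{b} = b \cdot \fmarked{1}$, so we have $b \cdot \m \reach b \cdot \fmarked{1}$. Applying \Cref{lem:cont-equiv-reach} in the reverse direction — with the same scaling factor $b$ — yields $\m \creach \fmarked{1}$ in the continuous semantics, which is exactly what continuous soundness requires.

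\medskip

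\textbf{Main obstacle.} The delicate point is ensuring that the integer $b$ furnished by the forward direction of \Cref{lem:cont-equiv-reach} can be reused to pull the discrete reachability $b \cdot \m \reach b \cdot \fmarked{1}$ back to the continuous world. The reverse direction of \Cref{lem:cont-equiv-reach} should supply this: it states that $\m \creach \m'$ holds as soon as \emph{some} integer $b'$ satisfies $b' \cdot \m \reach b' \cdot \m'$, so I do not even need the two $b$'s to coincide — I can take $b'$ to be the witness of the second discrete reachability. A secondary subtlety is that $b \cdot \m$ must be an \emph{integer} marking for $b$-soundness to apply; but this is guaranteed precisely because $b$ was chosen (via \Cref{lem:cont-equiv-reach}) so that $b \cdot \m$ is reachable by a discrete run from $\imarked{b}$, and discrete runs keep markings in $\N^P$. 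Thus the whole argument is a clean round-trip through the scaling lemma, with generalised soundness applied at the integer level $k = b$; no case analysis or heavy computation is needed.
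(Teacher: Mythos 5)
Your proof is correct and is essentially the paper's own argument written in contrapositive form: the paper starts from a witness $\m$ of continuous unsoundness and applies \Cref{lem:cont-equiv-reach} in both directions to produce a $b$ for which $\pn$ is not $b$-sound, which is exactly your round-trip through the scaling lemma read backwards. The subtleties you flag (reusing the same $b$, integrality of $b\cdot\m$) are handled implicitly by the lemma in the paper as well, so there is no gap.
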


\begin{proof}
  Let $\pn = (P, T, F)$ be a workflow net that is not continuously
  sound. By definition of continuous soundness, there exists some
  continuous marking $\m \in \Qpos^P$ such that
  $\imarked{1} \creach \m$ and
  $\m \not\creach \fmarked{1}$. By \Cref{lem:cont-equiv-reach}, there
  exists $b \in \Nn$ such that $\imarked{b} \reach
  b \cdot \m$. Furthermore, by \Cref{lem:cont-equiv-reach},
  $b \cdot \m \not \reach \fmarked{b}$. This means that $\pn$ is not
  $b$-sound, and consequently not generalised sound. \qed
\end{proof}

\begin{restatable}{proposition}{propContBoundRed}\label{prop:int:cont:red}
  The reduction rules from~\cite{bride2017reduction} preserve
  continuous soundness.
\end{restatable}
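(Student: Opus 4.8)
The plan is to fix the domain $\D = \Qpos$ throughout, so that $\dreach$ coincides with continuous reachability $\creach$, and to argue preservation one reduction rule at a time; since a single application is handled, iterating the argument yields preservation for arbitrary sequences of rules. It is convenient to rephrase the goal as preservation of continuous \emph{un}soundness: $\pn$ is continuously unsound iff there is a witness $\m$ with $\imarked{1} \creach \m$ and $\m \not\creach \fmarked{1}$. Two facts will be used repeatedly. First, because the rules never remove $\initial$ or $\output$, and these places carry the only tokens of $\imarked{1}$ and $\fmarked{1}$, we have $\pi(\imarked{1}) = \imarked{1}$, $\pi(\fmarked{1}) = \fmarked{1}$, $\pi_0(\imarked{1}) = \imarked{1}$ and $\pi_0(\fmarked{1}) = \fmarked{1}$. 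Second, each item of \Cref{prop:reductions}, instantiated with $\D = \Qpos$, describes exactly how $\creach$ transfers between $\pn$ and $\pn'$, and I would feed a witness through these correspondences in both directions.

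For rules $R_2$ and $R_3$ there is nothing to do: $P' = P$ and $\creach$ is literally the same relation in $\pn$ and $\pn'$, so continuous soundness is the same statement. For $R_1$, the ``Moreover'' clause gives $\m \creach \n$ in $\pn$ iff $\pi(\m) \creach \pi(\n)$ in $\pn'$. The forward direction then merely projects a witness: if $\imarked{1} \creach \m \not\creach \fmarked{1}$ in $\pn$, then $\imarked{1} = \pi(\imarked{1}) \creach \pi(\m)$ and, were $\pi(\m) \creach \fmarked{1} = \pi(\fmarked{1})$ to hold in $\pn'$, the iff would give $\m \creach \fmarked{1}$, a contradiction. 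For the converse I would lift a witness $\m'$ of $\pn'$ to the marking $\m$ that agrees with $\m'$ on $P'$ and sets the removed place $p$ to its invariant value $\sum_{r \in R'} \m'[r]$; the iff transfers both $\imarked{1} \creach \m$ and $\m \not\creach \fmarked{1}$ back to $\pn$. Rules $R_4$ and $R_5$ are handled symmetrically using the clause ``$\m' \creach \n'$ in $\pn'$ iff $\pi_0(\m') \creach \pi_0(\n')$ in $\pn$'': the $\pn$-unsound $\Rightarrow$ $\pn'$-unsound direction is in fact supplied verbatim by the last sentence of that item (valid since $\D = \Qpos \neq \Z$), and the converse follows by pushing a witness $\m'$ to $\pi_0(\m')$ through the iff.

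The genuinely delicate case is $R_6$, which merges $p_1, \dots, p_k$ into $p_1$. Here the projection iff ``$\m \creach \n$ in $\pn$ iff $\pi(\m) \creach \pi(\n)$ in $\pn'$'' is only guaranteed when $\m$ and $\n$ are \emph{canonical}, i.e.\ carry no mass on $p_2, \dots, p_k$, so a general witness cannot be projected directly. The plan is to first invoke the redistribution clause, which lets one freely move the total group mass $\sum_{i} \m[p_i]$ among $p_1, \dots, p_k$ while leaving $P' \setminus \set{p_1}$ fixed; applying it in both directions shows that any marking $\m$ is continuously inter-reachable with its canonical form $\bar\m$ (all group mass on $p_1$). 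Given a witness $\imarked{1} \creach \m \not\creach \fmarked{1}$ in $\pn$, I would replace $\m$ by $\bar\m$, observe that $\imarked{1} \creach \m \creach \bar\m$ and that $\bar\m \not\creach \fmarked{1}$ still holds (otherwise $\m \creach \bar\m \creach \fmarked{1}$), and only then project. Crucially, $\imarked{1}$ and $\fmarked{1}$ are themselves canonical, since $p_2, \dots, p_k \in P''$ excludes $\initial$ and $\output$, so the projection iff applies to the pairs $(\imarked{1}, \bar\m)$ and $(\bar\m, \fmarked{1})$ and yields a witness $\pi(\bar\m)$ in $\pn'$; the converse direction lifts a $\pn'$-witness to its canonical preimage in the same way.

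The main obstacle I anticipate is precisely this canonicalization step for $R_6$: one must verify that the redistribution clause genuinely gives \emph{inter}-reachability (both $\m \creach \bar\m$ and $\bar\m \creach \m$), and that $\imarked{1}$ and $\fmarked{1}$ satisfy the canonicity hypothesis of the projection iff, so that the non-reachability ``$\not\creach \fmarked{1}$'' survives the passage to canonical form. The remaining cases are routine chases through the correspondences of \Cref{prop:reductions}, the only care being to choose the correct lift ($\pi_0$, the $R_1$-invariant, or the canonical preimage) when transporting a witness from $\pn'$ back to $\pn$.
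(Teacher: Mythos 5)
Your proposal is correct and follows essentially the same route as the paper's own proof: a rule-by-rule case analysis through \Cref{prop:reductions} with $\D = \Qpos$, lifting witnesses via the $R_1$ invariant and $\pi_0$ for $R_4$/$R_5$, quoting the explicit unsoundness-transfer clause for the $\D \neq \Z$ direction, and canonicalizing the witness onto $p_1$ before projecting in the $R_6$ case. The canonicalization step you flag as delicate is handled in the paper exactly as you describe.
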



\begin{restatable}{theorem}{thmCoNP}\label{thm:conp}
  Continuous soundness is coNP-complete. Moreover, coNP-hardness holds
  even if the underlying graph of the given workflow net is acyclic.
\end{restatable}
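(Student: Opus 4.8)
The plan is to treat the two directions separately. For coNP membership, I would show that the complementary property, continuous unsoundness, lies in NP. By definition this property reads $\exists \m \in \Qpos^P : \imarked{1} \creach \m \text{ and } \m \not\creach \fmarked{1}$. The idea is to guess such a witness $\m$ and verify both conjuncts in polynomial time, using the fact that continuous reachability is decidable in polynomial time~\cite{FH15}; the second conjunct $\m \not\creach \fmarked{1}$ is simply the complement of a polynomial-time predicate, so for a fixed polynomial-size $\m$ it is checkable in polynomial time. The entire difficulty is therefore to argue that, whenever a witness exists, one of polynomial bit-size exists.

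To bound the witness, I would rely on the support-based shape of continuous reachability from~\cite{FH15}: whether $\vec{x} \creach \vec{y}$ holds is governed by (i) a purely support-based ``fireability'' condition depending only on $\support{\vec{x}}$ and $\support{\vec{y}}$, and (ii) solvability of the flow equation $\vec{y} = \vec{x} + \sum_{t} \vec{a}[t] \cdot \effect{t}$ with $\vec{a} \geq \vec{0}$ supported on a fireable transition set. I would first guess the support $S \subseteq P$ of the witness together with the relevant transition set; this pins $\m$ to a polyhedron of candidate markings, all reachable from $\imarked{1}$, described by polynomially many inequalities with polynomial coefficients. Crucially, since $\support{\m} = S$ is now fixed, whether $\m \creach \fmarked{1}$ reduces, within this stratum, either to a support condition that fails outright (whence every vertex of the candidate polyhedron is a witness) or to membership in a single polyhedron obtained by projecting out the flow variables $\vec{a}$. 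In the latter case a witness is a point of the candidate polyhedron violating one of the (guessed) defining inequalities of that projected polyhedron, i.e.\ a vertex of an explicit linear program, and hence of polynomial size. Fixing the support $S$ is exactly what collapses the otherwise exponential union of polyhedra describing $\{\m : \m \creach \fmarked{1}\}$ into a single one; extracting this support characterisation from~\cite{FH15} and verifying that all markings of the candidate polyhedron are genuinely reachable is the technical crux of membership.

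For coNP-hardness I would reduce \textsc{3-Sat} to continuous unsoundness, so that unsatisfiable formulas map to continuously sound nets. Given a formula $\phi$, I would build a layered (hence acyclic) workflow net $\pn_\phi$ in which an initial gadget distributes the token of $\imarked{1}$ over literal places $v_i, \bar{v}_i$, encoding a truth assignment; clause gadgets collect flow from their satisfied literals; and a dedicated trap place $z$, reachable only through a synchronising transition whose pre-set contains every clause place, acts as a dead end, since any positive mass in $z$ can never be routed to $\output$ and thus yields a marking $\m$ with $\m \not\creach \fmarked{1}$. The intended equivalence is that $z$ receives positive mass precisely when all clause places receive flow, i.e.\ when the encoded assignment satisfies $\phi$, so that $\phi$ is satisfiable iff $\pn_\phi$ is continuously unsound. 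Acyclicity holds for free through the layering $\initial \to \text{assignment} \to \text{clauses} \to \{z, \output\}$.

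The main obstacle in the hardness direction is that the continuous relaxation makes plain reachability easy and tends to trivialise such encodings: a continuous run may split the token of variable $x_i$ across both $v_i$ and $\bar{v}_i$, giving every clause place positive mass and enabling the trap even for unsatisfiable $\phi$. Preventing this ``fractional cheating'' is the heart of the reduction; I would address it with a commitment/conservation gadget keeping the total mass equal to one and forcing each clause to be covered by a genuinely committed literal before the synchronising transition can fire, so that fractional assignments cannot simultaneously satisfy all clauses unless $\phi$ is satisfiable. Getting this gadget right, so that the universal quantifier in soundness rather than the (easy) existential reachability carries the hardness, is where the real work lies.
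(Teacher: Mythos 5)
Both halves of your proposal contain genuine gaps, and in both cases the paper takes a different, more economical route.

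\emph{Membership.} Your strategy --- guess a marking $\m$ with $\imarked{1} \creach \m$ and $\m \not\creach \fmarked{1}$, verify both conditions in polynomial time via~\cite{FH15} --- is sound in outline, and you correctly isolate the crux: a polynomial bound on the bit-size of some witness $\m$. But the structural argument you give for that bound does not work. The Fraca--Haddad characterisation says $\m \creach \fmarked{1}$ iff there is a Parikh vector $\vec{a} \geq \vec{0}$ satisfying the marking equation whose \emph{support} (a subset of $T$) is fireable from $\support{\m}$ and backward-fireable from $\set{\output}$. Fixing $\support{\m} = S$ fixes which transition sets are admissible, but the set $\{\m : \support{\m} = S,\ \m \creach \fmarked{1}\}$ is still a union of polyhedra indexed by admissible firing sets --- one marking equation per firing set --- and there may be exponentially many of them (admissible sets are closed under union but not under subsets, so you cannot replace the union by the single maximal admissible set). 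Hence ``membership in a single polyhedron obtained by projecting out the flow variables'' is false, and certifying $\m \not\creach \fmarked{1}$ by exhibiting one violated inequality of one polyhedron does not certify non-membership in the union. The small-witness property you need is essentially the content of~\cite[Prop.~4.6]{BFHH17} (coNP-completeness of the inclusion problem $\QReach{\pn, \m} \subseteq \QReach{\pn', \m'}$), and the paper simply invokes it: continuous soundness is equivalent to $\QReach{\pn, \imarked{1}} \subseteq \QReach{\pn^{-1}, \omarked{1}}$ for the reversed net $\pn^{-1}$, which immediately gives membership. If you want a self-contained argument, you would have to reprove that proposition, not the weaker claim above.

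\emph{Hardness.} You reduce from \textsc{3-Sat} to continuous unsoundness, with satisfiability triggering a synchronising transition into a trap place $z$. Two problems. First, a place with no path to $\output$ violates the definition of a workflow net, so $z$ cannot literally be a dead end; this is fixable but unaddressed. Second, and fatally as stated, the ``fractional cheating'' you identify is not a side issue to be patched by an unspecified ``commitment gadget'' --- it is exactly why this polarity of reduction is hard to make work. Under the continuous semantics, splitting each variable's mass across both literal places gives every clause place positive mass for \emph{every} formula in which each clause contains a literal, so the synchronising transition is continuously enabled and every $\pn_\phi$ is unsound; the entire burden of the reduction rests on the missing gadget, and no mechanism is proposed that distinguishes committed from fractional assignments at the level of supports (which is all continuous enabledness sees). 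The paper avoids this by reducing from DNF-tautology to continuous \emph{soundness} with the opposite polarity: the universal quantifier in soundness ranges over all reachable markings, in particular the integral ones encoding genuine assignments, and for the easy direction one only needs that a sound net must enable some clause transition at each such integral marking. The converse direction (tautology implies continuous soundness) is then handled by a delicate invariant analysis over fractional markings (Equations~\eqref{eq:2}--\eqref{eq:5} in the appendix) showing that fractional mass can always be drained to $\output$. Your proposal contains no counterpart to either of these arguments, so the hardness direction is not established.
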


\begin{proof}[of membership in coNP]
  The \emph{inclusion problem} consists in determining whether, given
  Petri nets $\pn$ and $\pn'$ over a common set of places, and
  markings $\m$ and $\m'$, it is the case that $\QReach{\pn, \m}
  \subseteq \QReach{\pn', \m'}$. The inclusion problem is known to be
  coNP-complete~\cite[Prop.~4.6]{BFHH17}.

  Let $\pn = (P, T)$ be a workflow net. Let $\pn^{-1} = (P, T^{-1})$
  be defined as $\pn$ but with its transitions reversed, \ie\ where
  $T^{-1} \defeq \{t^{-1} \mid t \in T\}$ with $\pre{(t^{-1})} \defeq
  \post{t}$ and $\post{(t^{-1})} \defeq \pre{t}$. It is the case that
  $\m \creach \m'$ in $\pn$ iff $\m' \creach \m$ in
  $\pn^{-1}$. Observe that $\pn$ is continuously sound iff the
  following holds for all $\m$:
  \[
  \m \in \QReach{\pn, \imarked{1}} \implies
  \omarked{1} \in \QReach{\pn, \m}.
  \]
  So, as $\omarked{1} \in \QReach{\pn, \m}$ is equivalent to $\m \in
  \QReach{\pn^{-1}, \omarked{1}}$, continuous soundness holds iff
  $
  \QReach{\pn, \imarked{1}} \subseteq
  \QReach{\pn^{-1}, \omarked{1}}
  $.
  As inclusion can be tested in coNP, membership follows. \qed
\end{proof}

\begin{proof}[of coNP-hardness]
  We give a reduction from the problem of determining whether a
  Boolean formula in disjunctive normal form (DNF) is a tautology. We
  adapt a construction from~\cite{tiplea2015acyclic} used to show that
  soundness in acyclic workflow nets is coNP-hard. The proof is more
  challenging under the continuous semantics as several variable
  valuations and clauses can be simultaneously used.

  The reduction is depicted in \Cref{fig:cont-sound-conp} for $\varphi
  = (x_1 \land x_2 \land \neg x_4) \lor (\neg x_1 \land x_3 \land
  x_4)$. In general, let $\varphi = \bigvee_{j \in [1..k]} C_j$ be a
  Boolean formula in DNF with $k$ clauses over variables $x_1, \ldots,
  x_m$. We define a workflow net $\pn_\varphi = (P, T, F)$.

  \begin{figure}[h!]
    \centering
    \begin{tikzpicture}[auto, node distance=0.8cm, transform shape, scale=0.9]
      \colorlet{colC1}{blue}
      \colorlet{colC2}{magenta}
      \colorlet{colBack}{white}
      \colorlet{colFont}{black!75}
      
      \tikzstyle{cplace} = [place, font=\scriptsize, text=colFont,
                            minimum size=15pt];
      \tikzstyle{ctransition} = [transition, font=\scriptsize, text=colFont,
                                 minimum height=12pt, minimum width=15pt];
      
      \node[cplace] (i) {$\initial$};
      \node[ctransition, right of=i] (init) {$t_\text{init}$};
      
      \node[cplace, above right=0.25cm and 0.5cm of init] (x2n) {$p_{2,?}$};
      \node[cplace, above=of x2n] (x1n) {$p_{1,?}$};
      \node[cplace, below right=0.25cm and 0.5cm of init] (x3n) {$p_{3,?}$};
      \node[cplace, below=of x3n] (x4n) {$p_{4,?}$};

      \node[cplace, below=3.35cm of init, xshift=4.75cm] (clause)
           {$p_\text{cl}$};
      \node[ctransition, above  left=0.35cm and 0.5cm of clause,
            fill=colC1!40] (c1) {$c_1$};
      \node[ctransition, above right=0.35cm and 0.5cm of clause,
            fill=colC2!40] (c2) {$c_2$};

      \node[ctransition, above right=-0.1cm and 0.5cm of x1n] (tx1t) {$v_{1,1}$};
      \node[ctransition, below right=-0.1cm and 0.5cm of x1n] (tx1f) {$v_{1,0}$};
      \node[cplace, right=0.5cm of tx1t] (x1t) {$p_{1,1}$};
      \node[cplace, right=0.5cm of tx1f] (x1f) {$p_{1,0}$};

      \node[ctransition, above right=-0.1cm and 0.5cm of x2n] (tx2t) {$v_{2,1}$};
      \node[ctransition, below right=-0.1cm and 0.5cm of x2n] (tx2f) {$v_{2,0}$};
      \node[cplace, right=0.5cm of tx2t] (x2t) {$p_{2,1}$};
      \node[cplace, right=0.5cm of tx2f] (x2f) {$p_{2,0}$};

      \node[ctransition, above right=-0.1cm and 0.5cm of x3n] (tx3t) {$v_{3,1}$};
      \node[ctransition, below right=-0.1cm and 0.5cm of x3n] (tx3f) {$v_{3,0}$};
      \node[cplace, right=0.5cm of tx3t] (x3t) {$p_{3,1}$};
      \node[cplace, right=0.5cm of tx3f] (x3f) {$p_{3,0}$};

      \node[ctransition, above right=-0.1cm and 0.5cm of x4n] (tx4t) {$v_{4,1}$};
      \node[ctransition, below right=-0.1cm and 0.5cm of x4n] (tx4f) {$v_{4,0}$};
      \node[cplace, right=0.5cm of tx4t] (x4t) {$p_{4,1}$};
      \node[cplace, right=0.5cm of tx4f] (x4f) {$p_{4,0}$};

      \node[ctransition, right=5cm of tx1t, fill=colBack] (cx1t)
           {$\overline{v}_{1,1}$};
      \node[ctransition, right=5cm of tx1f, fill=colBack] (cx1f)
           {$\overline{v}_{1,0}$};
      \node[cplace, below  left=-0.1cm and 0.5cm of cx1t, fill=colBack] (cx1)
           {$q_1$};
      \node[cplace, below right=-0.1cm and 0.5cm of cx1t, fill=colBack] (cx1c)
           {$r_1$};

      \node[ctransition, right=5cm of tx2t, fill=colBack] (cx2t)
           {$\overline{v}_{2,1}$};
      \node[ctransition, right=5cm of tx2f, fill=colBack] (cx2f)
           {$\overline{v}_{2,0}$};
      \node[cplace, below  left=-0.1cm and 0.5cm of cx2t, fill=colBack] (cx2)
           {$q_2$};
      \node[cplace, below right=-0.1cm and 0.5cm of cx2t, fill=colBack] (cx2c)
           {$r_2$};

      \node[ctransition, right=5cm of tx3t, fill=colBack] (cx3t)
           {$\overline{v}_{3,1}$};
      \node[ctransition, right=5cm of tx3f, fill=colBack] (cx3f)
           {$\overline{v}_{3,0}$};
      \node[cplace, below  left=-0.1cm and 0.5cm of cx3t, fill=colBack] (cx3)
           {$q_3$};
      \node[cplace, below right=-0.1cm and 0.5cm of cx3t, fill=colBack] (cx3c)
           {$r_3$};

      \node[ctransition, right=5cm of tx4t, fill=colBack] (cx4t)
           {$\overline{v}_{4,1}$};
      \node[ctransition, right=5cm of tx4f, fill=colBack] (cx4f)
           {$\overline{v}_{4,0}$};
      \node[cplace, below  left=-0.1cm and 0.5cm of cx4t, fill=colBack] (cx4)
           {$q_4$};
      \node[cplace, below right=-0.1cm and 0.5cm of cx4t, fill=colBack] (cx4c)
           {$r_4$};

      \node[ctransition, below right of=cx2c] (end) {$t_\text{fin}$};
      \node[cplace, right of=end] (f) {$\output$};

      \path[->]
      (i) edge node {} (init)
      
      (init) edge[out=90,  in=235]  node {} (x1n)
      (init) edge[out=90,  in=180]  node {} (x2n)
      (init) edge[out=-90, in=-180] node {} (x3n)
      (init) edge[out=-90, in=-235] node {} (x4n)

      (init) edge[out=-90, in=180, looseness=1.75] node {} (clause)

      (x1n)  edge node {} (tx1t)
      (x1n)  edge node {} (tx1f)
      (tx1t) edge node {} (x1t)
      (tx1f) edge node {} (x1f)

      (x2n)  edge node {} (tx2t)
      (x2n)  edge node {} (tx2f)
      (tx2t) edge node {} (x2t)
      (tx2f) edge node {} (x2f)

      (x3n)  edge node {} (tx3t)
      (x3n)  edge node {} (tx3f)
      (tx3t) edge node {} (x3t)
      (tx3f) edge node {} (x3f)

      (x4n)  edge node {} (tx4t)
      (x4n)  edge node {} (tx4f)
      (tx4t) edge node {} (x4t)
      (tx4f) edge node {} (x4f)

      (cx1)  edge node {} (cx1t)
      (cx1)  edge node {} (cx1f)
      (cx1t) edge node {} (cx1c)
      (cx1f) edge node {} (cx1c)

      (cx2)  edge node {} (cx2t)
      (cx2)  edge node {} (cx2f)
      (cx2t) edge node {} (cx2c)
      (cx2f) edge node {} (cx2c)

      (cx3)  edge node {} (cx3t)
      (cx3)  edge node {} (cx3f)
      (cx3t) edge node {} (cx3c)
      (cx3f) edge node {} (cx3c)

      (cx4)  edge node {} (cx4t)
      (cx4)  edge node {} (cx4f)
      (cx4t) edge node {} (cx4c)
      (cx4f) edge node {} (cx4c)
      ;

      \path[->, gray]
      (x1t) edge node {} (cx1t)
      (x1f) edge node {} (cx1f)
      
      (x2t) edge node {} (cx2t)
      (x2f) edge node {} (cx2f)
      
      (x3t) edge node {} (cx3t)
      (x3f) edge node {} (cx3f)
      
      (x4t) edge node {} (cx4t)
      (x4f) edge node {} (cx4f)
      ;

      \path[->]
      (cx1c) edge[out=-45,in=90]  node {} (end)
      (cx2c) edge[out=0,  in=90]  node {} (end)
      (cx3c) edge[out=0,  in=-90] node {} (end)
      (cx4c) edge[out=45, in=-90] node {} (end)
      (end)  edge node {} (f)
      ;

      \begin{pgfonlayer}{back}
        \path[->, colC1, densely dotted]
        (clause) edge[out=135, in=-90] node {} (c1)
        
        (x1t) edge[out=-45, in=110] node {} (c1)
        (x2t) edge[out=-55, in=110] node {} (c1)
        (x4f) edge[out=-45, in=110] node {} (c1)
        
        (c1) edge[out=70, in=-125] node {} (cx1c)
        (c1) edge[out=70, in=-90]  node {} (cx2c)
        (c1) edge[out=65, in=180]  node {} (cx3)
        (c1) edge[out=70, in=-115] node {} (cx4c)
        ;
        
        \path[->, colC2, dashed]
        (clause) edge[out=45, in=-90] node {} (c2)
        
        (x1t) edge[out=-30, in=110] node {} (c2)
        (x3t) edge[out=-55, in=110] node {} (c2)
        (x4t) edge[out=-45, in=110] node {} (c2)
        
        (c2) edge[out=70, in=-105] node {} (cx1c)
        (c2) edge[out=70, in=225]  node {} (cx2)
        (c2) edge[out=70, in=-90]  node {} (cx3c)
        (c2) edge[out=70, in=-80]  node {} (cx4c)
        ;
      \end{pgfonlayer}
    \end{tikzpicture}

    \caption{A workflow net $\pn_\varphi$ such that $\pn_\varphi$ is
      continuously sound iff $\varphi = (x_1 \land x_2 \land \neg x_4)
      \lor (x_1 \land x_3 \land x_4)$ is a tautology. Places and
      transitions contain their names (not values). Arcs corresponding
      to the first and second clauses are respectively dotted and
      dashed.}\label{fig:cont-sound-conp}
\end{figure}
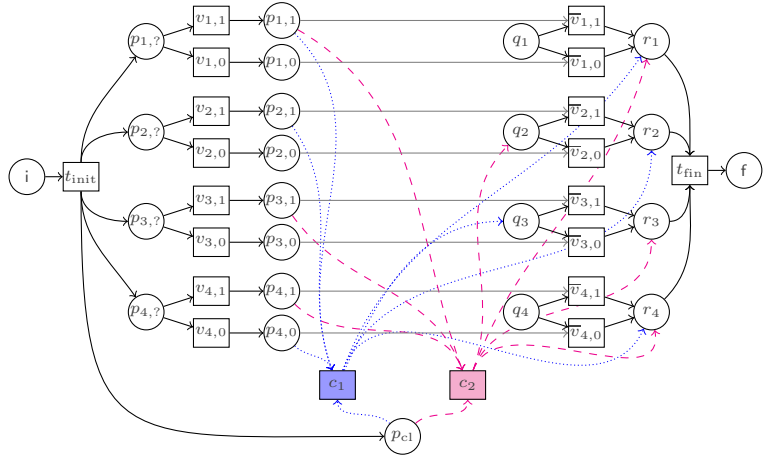

  \emph{Definition.} The places are defined as $P \defeq \{\initial,
  p_\text{cl}, \output\} \cup P_\text{var} \cup P_\text{clean}$, where
  $
  P_\text{var}
  \defeq \bigcup_{i \in [1..m]} \{p_{i,?}, p_{i,1}, p_{i,0}\}
  \text{ and }
  P_\text{clean}
  \defeq \bigcup_{i \in [1..m]} \{q_i, r_i\}
  $.
  The transitions are defined as $T \defeq \{t_\text{init},
  t_\text{fin}\} \cup T_\text{var} \cup T_\text{clauses} \cup
  T_{\overline{\text{var}}}$, where
  \[
  T_\text{var}
  \defeq \bigcup_{\mathclap{i \in [1..m]}}\ \{v_{i,1}, v_{i,0}\},
  T_\text{clauses}
  \defeq \{c_i \mid i \in [1..k]\}
  \text{ and }
  T_{\overline{\text{var}}}
  \defeq \bigcup_{\mathclap{i \in [1..m]}}\
  \{\overline{v}_{i,1}, \overline{v}_{i,0}\}.
  \]

  Let us explain how $\pn_\varphi$ is \emph{intended} to
  work. Transition $t_\text{init}$ enables the initialization of
  variables and the selection of a clause that satisfies $\varphi$,
  i.e.\ $\pre{t_\text{init}} \defeq \imarked{1}$ and
  $\post{t_\text{init}} \defeq \{p_{i,?} \colon 1 \mid i \in [1..m]\}
  + \{p_\text{cl} \colon 1\}$. A token in place $p_{i,b}$ indicates
  that variable $x_i$ has been assigned value $b$ (where ``$?$''
  indicates ``none''). Consequently, we have $\pre{v_{i,b}} \defeq
  p_{i,?}$ and $\post{v_{i,b}} \defeq p_{i,b}$ for each $i \in [1..m]$
  and $b \in \{0, 1\}$.

  Transition $c_j$ consumes a token associated to each literal of
  clause $C_j$, \ie\ $\pre{c_j} \defeq \{v_{i,1} \mid x_i \in C_j\} +
  \{v_{i,0} \mid \neg x_i \in C_j\}$. A token in place $q_i$ indicates
  that variable $x_i$ is not needed anymore (due to some satisfied
  clause). A token in place $r_i$ indicates that variable $x_i$ has
  been discarded. Therefore, transition $c_j$ produces these tokens:
  $
  \post{c_j} \defeq
  \{q_i \mid x_i \notin C_j \land \neg x_i \notin C_j\} + 
  \{r_i \mid x_i \in C_j \lor \neg x_i \in C_j\}
  $.

  Transition $\overline{v}_{i,b}$ discards variable $x_i$,
  i.e.\ $\pre{\overline{v}_{i,b}} \defeq \{p_{i,b}, q_i\}$ and
  $\pre{\overline{v}_{i,b}} \defeq \{q_i\}$. Once each variable is
  discarded, transition $t_\text{fin}$ terminates the execution,
  i.e.\ $\pre{t_\text{fin}} \defeq \{r_i \mid i \in [1..m]\}$ and
  $\post{t_\text{fin}} \defeq \fmarked{1}$.

  \emph{Correctness.} Note that under $\creach$, the workflow net
  needs not to proceed as described. Indeed, it could, e.g., assign
  half a token to $p_{i,0}$ and half a token to $p_{i,1}$. Similarly,
  several clauses can be used, with distinct scaling
  factors. Nonetheless, $\pn_{\varphi}$ is continuously sound iff
  $\varphi$ is a tautology.

  $\Rightarrow$) Let $b_1, \ldots, b_m \in \{0, 1\}$. Let $\pi \defeq
  t_\text{init} v_{1, b_1} \cdots v_{m, b_m}$. We have:
  $
  \imarked{1}
  \ctrans{\pi}
  \{v_{i, b_i} \colon 1 \mid i \in [1..m]\} + \{p_\text{cl} \colon 1\}
  $.
  Since $\pn_\varphi$ is continuously sound by assumption, there must
  exists some $j \in [1..k]$ such that $c_j$ is enabled. This implies
  that clause $C_j$ is satisfied by the assignment. Hence, $\varphi$
  is a tautology.

  $\Leftarrow$) The proof is technical and involves several invariants (see
  appendix). \qed
\end{proof}

We may now prove that any nonredundant workflow net that is integer
unbounded is also continuously unsound (the reverse is not necessarily
true). Therefore, integer unboundedness relates to continuous
soundness much like continuous unsoundness relates to generalised
soundness.

\begin{proposition}\label{prop:zbound:continuous}
  Let $\pn$ be a nonredundant workflow net and $\m \in \N^P$. If
  $(\pn, \m)$ is integer unbounded, then $\pn$ is not continuously
  sound.
\end{proposition}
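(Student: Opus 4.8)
The plan is to argue by contradiction: assuming $\pn$ is both integer unbounded and continuously sound, I will exhibit a marking in $\QReach{\pn,\imarked{1}}$ from which $\fmarked{1}$ cannot be reached. By \Cref{prop:int:bound:lin}, integer unboundedness yields $\vec{x}\in\Qpos^T$ with $\vec{d}\defeq\sum_{t\in T}\vec{x}[t]\cdot\effect{t}>\vec{0}$. Since $\post{t}[\initial]=0$ and $\pre{t}[\output]=0$ for every $t$, the $\initial$-coordinate of each $\effect{t}$ is $\le 0$ and the $\output$-coordinate is $\ge 0$; as $\vec{d}\ge\vec{0}$ this forces $\vec{d}[\initial]=0$, while $\vec{d}[\output]$ may be positive. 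Call $T_0\defeq\{t : \post{t}[\output]=0\}$ the \emph{internal} transitions, and note $\effect{t}[\output]=0$ for $t\in T_0$.

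First I would fire this cycle continuously. Using nonredundancy, every place in $\bigcup_{t\in\support{\vec{x}}}\support{\pre{t}}$ is coverable, i.e.\ reachable with a positive amount from $\imarked{1}$ under $\creach$; summing the individual covering runs (continuous reachability is additive) and rescaling (it is invariant under positive scaling, by \Cref{lem:cont-equiv-reach}) yields $\hat{\m}$ with $\imarked{1}\creach\hat{\m}$ and $\hat{\m}[p]>0$ on all those places. For a small enough $\epsilon>0$, firing each $t\in\support{\vec{x}}$ with scaling factor $\epsilon\vec{x}[t]$ keeps every coordinate nonnegative, so $\hat{\m}\creach\hat{\m}+\vec{d}'$ where $\vec{d}'\defeq\epsilon\vec{d}\ge\vec{0}$ and $\vec{d}'\neq\vec{0}$. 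Assuming continuous soundness, $\hat{\m}\creach\fmarked{1}$; since adding the constant $\vec{d}'\ge\vec{0}$ preserves continuous firings, $\hat{\m}+\vec{d}'\creach\fmarked{1}+\vec{d}'$, so $\fmarked{1}+\vec{d}'\in\QReach{\pn,\imarked{1}}$. It therefore suffices to show $\fmarked{1}+\vec{d}'\not\creach\fmarked{1}$, which contradicts soundness.

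The key observation is that $\output$ has no consuming transition, so its token count is nondecreasing along every run. If $\vec{d}[\output]>0$, then $\fmarked{1}+\vec{d}'$ carries strictly more than one token in $\output$ and can never return to $\fmarked{1}$, and we are done. Otherwise $\vec{d}[\output]=0$, the run must hold $\output$ fixed at $1$, and hence uses only internal transitions $T_0$. When $\vec{d}\notin\mathrm{cone}\{-\effect{t} : t\in T_0\}$, Farkas' lemma supplies $\vec{w}$ with $\vec{w}\cdot\effect{t}\ge0$ for all $t\in T_0$ and $\vec{w}\cdot\vec{d}>0$; then $\m\mapsto\vec{w}\cdot\m$ is nondecreasing along any $T_0$-run, so it cannot drop from $\vec{w}\cdot(\fmarked{1}+\vec{d}')$ to the strictly smaller $\vec{w}\cdot\fmarked{1}$, proving $\fmarked{1}+\vec{d}'\not\creach\fmarked{1}$. (This already covers the common situation where some pumped place $p\in\support{\vec{d}}$ is consumed by no internal transition, via the $\vec{w}$ with $\vec{w}[p]=1$ and $\vec{w}[q]=0$ otherwise.)

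The hard part is the remaining case $\vec{d}\in\mathrm{cone}\{-\effect{t} : t\in T_0\}$, where the leftover $\vec{d}'$ is drainable by internal transitions at the level of the state equation, so no monotone functional separates $\fmarked{1}+\vec{d}'$ from $\fmarked{1}$. Here I would fall back on the support characterisation of continuous reachability from \cite{FH15}: a run $\fmarked{1}+\vec{d}'\creach\fmarked{1}$ would require every internal transition it uses to be continuously fireable starting from the support $\{\output\}\cup\support{\vec{d}'}$. I expect the obstruction to come from the fact that in $\fmarked{1}+\vec{d}'$ the place $\initial$ is empty and can never be refilled (no transition produces $\initial$), whereas in a workflow net every place and transition lies downstream of $\initial$; combined with nonredundancy, this should prevent the drainers of $\vec{d}'$ from ever becoming enabled, so the leftover tokens are genuinely stranded. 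Making this enabledness argument precise, and ruling out all alternative draining orders, is the main difficulty of the proof.
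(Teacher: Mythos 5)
Your construction of a reachable marking of the form $\fmarked{1}+\vec{d}'$ with $\vec{d}'>\vec{0}$ is sound (and is a legitimate alternative to the paper's route, which instead combines nonredundancy of $\output$ with~\cite[Lemma~12]{HSV04} and \Cref{lem:cont-equiv-reach} to land on $\fmarked{1}+\m'''$ directly). The genuine gap is the final step: you leave the case $\vec{d}\in\mathrm{cone}\{-\effect{t} : t\in T_0\}$ unresolved, offering only a hope that the support characterisation of~\cite{FH15} will produce an obstruction. As stated, the proof is incomplete, and the elaborate three-way case split (output place pumped / Farkas separation / cone case) is unnecessary.

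The missing idea is a one-line observation that closes \emph{all} cases of $\fmarked{1}+\vec{d}'\not\creach\fmarked{1}$ at once, and is exactly how the paper concludes: every transition of a workflow net lies on a path from $\initial$ to $\output$, hence has at least one output place, \ie\ $\post{t}\neq\vec{0}$; and no transition consumes from $\output$. Suppose $\fmarked{1}+\vec{d}'\ctrans{\pi}\fmarked{1}$ with $\vec{d}'>\vec{0}$. The run is nonempty since source and target differ. The amount in $\output$ is nondecreasing and must start and end at $1$ (forcing $\vec{d}'[\output]=0$), so no transition fired with positive scaling factor may produce into $\output$. Now look at the \emph{last} pair $\lambda t$ with $\lambda>0$: the resulting marking dominates $\lambda\cdot\post{t}$, which is positive on some place $p\neq\output$, so the final marking cannot be $\fmarked{1}$. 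This kills your ``hard'' cone case without any appeal to the geometry of $\{-\effect{t}\}$ or to enabledness of draining orders; the Farkas argument and the speculation about $\initial$ being unrefillable can be deleted entirely.
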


\begin{proof}
  Let $\pn = (P, T, F)$ and $\m \in \N^P$ be such that $(\pn, \m)$ is
  not integer bounded. By \Cref{prop:zunbounded:charac}, there exists
  $\m' > \vec{0}$ such that $\vec{0} \zreach \m'$. By nonredundancy, there
  exist $\lambda \in \Nn$ and $\m'' \in \N^P$ such that $\imarked{\lambda}
  \reach \fmarked{1} + \m''$.

  In \cite[Lemma~12]{HSV04}, it is shown that $\imarked{k} \zreach \n$
  implies the existence of some $\ell \in \N$ such that $\imarked{k +
    \ell} \reach \fmarked{\ell} + \n$. By invoking this lemma with $k
  \defeq 0$ and $\n \defeq \m'$, we obtain $\imarked{\ell} \reach
  \fmarked{\ell} + \m'$ for some $\ell \in \N$.

  Altogether, $\imarked{\lambda + \ell} \reach \fmarked{\lambda +
    \ell} + \m' + \m''$. Since $\lambda + \ell \geq 1$,
  \Cref{lem:cont-equiv-reach} yields $\imarked{1} \creach \fmarked{1}
  + \m'''$ where $\m''' \defeq (1 / (\lambda + \ell)) \m'$. As every
  transition of a workflow net produces at least one token, this
  contradicts the fact that $\pn$ is continuously sound. Indeed, it is
  impossible to fully get rid of $\m''' > \vec{0}$. \qed
\end{proof}


\section{Using relaxations for structural soundness}
\label{sec:struct:sound}

A workflow net $\pn$ is \emph{$k$-quasi-sound} if $\imarked{k}
  \reach \fmarked{k}$. Furthermore, $\pn$ is \emph{structurally
  quasi-sound} if it is $k$-quasi-sound for some $k \in \Nn$.

As observed in~\cite{ctiplea2005structural}, structural
quasi-soundness is a necessary condition for structural soundness. The
notion of structural quasi-soundness is naturally generalised to an
arbitrary Petri net $\pn = (P, T, F)$. Given markings $\m, \m' \in
  \N^P$, we say that $\m$ \emph{structurally reaches} $\m'$ in $\pn$ if
$k \cdot \m \reach k \cdot \m'$ for some $k \in \Nn$. A workflow net
is structurally quasi-sound iff $\vec{m} \defeq \imarked{1}$
structurally reaches $\vec{m}' \defeq \omarked{1}$. So, the
observation of~\cite{ctiplea2005structural} can be rephrased as
follows.

\begin{proposition}
  Let $\pn$ be a workflow net. If $\imarked{1}$ does not structurally
  reach $\fmarked{1}$ in $\pn$, then $\pn$ is not structurally sound.
\end{proposition}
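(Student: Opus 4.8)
The plan is to prove the contrapositive, namely that structural soundness implies structural quasi-soundness. The entire argument is a matter of aligning the three relevant definitions, so I would proceed by unfolding them one at a time rather than constructing anything.

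First I would assume $\pn$ is structurally sound and extract, by definition, some $k \in \Nn$ such that $\pn$ is $k$-sound. Recall that $k$-soundness asserts $\m \reach \fmarked{k}$ for \emph{every} $\m \in \Reach{\pn, \imarked{k}}$. The key observation is that the initial marking $\imarked{k}$ itself lies in $\Reach{\pn, \imarked{k}}$, since $\reach$ is the reflexive and transitive closure of $\trans{}$ and is thus witnessed by the empty run. Instantiating the universally quantified $k$-soundness condition at $\m \defeq \imarked{k}$ then yields $\imarked{k} \reach \fmarked{k}$; that is, $\pn$ is $k$-quasi-sound.

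Finally, I would translate this back into the language of structural reachability: since $\imarked{k} = k \cdot \imarked{1}$ and $\fmarked{k} = k \cdot \fmarked{1}$, the relation $\imarked{k} \reach \fmarked{k}$ is precisely a witness that $\imarked{1}$ structurally reaches $\fmarked{1}$ (with witness $k$). Contraposing yields exactly the stated implication.

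There is no genuine obstacle here; the statement is essentially an observation that $k$-soundness is stronger than $k$-quasi-soundness for the same $k$. The only point requiring any care is the reflexivity step, where one must remember that the source marking of a $k$-soundness query already belongs to its own reachability set, so that the universally quantified soundness condition in particular covers the initial marking itself.
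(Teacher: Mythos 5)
Your proof is correct and is exactly the intended argument: the paper gives no explicit proof, merely citing this as an observation from the literature, and the reasoning it relies on is precisely your instantiation of the universally quantified $k$-soundness condition at $\m \defeq \imarked{k}$ (which lies in $\Reach{\pn,\imarked{k}}$ by reflexivity of $\reach$), followed by the translation into structural reachability. Nothing is missing.
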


The problem of structural quasi-soundness can be reduced to an
instance of the Petri net reachability
problem~\cite[Lemma~2.1]{ctiplea2005structural}. Intuitively, the
reduction produces a Petri net that nondeterministically chooses
multiples of $\imarked{1}$ and $\omarked{1}$ for which to check
reachability. Such an approach has a prohibitive computational cost as
Petri net reachability is Ackermann-complete. However, we
observe that structural reachability, and hence structural
quasi-soundness, is equivalent to continuous reachability by
\Cref{lem:cont-equiv-reach}.

\begin{proposition}\label{prop:struct:reach:cont}
  Let $\pn = (P, T, F)$ be a Petri net, and let $\m, \m' \in \N^P$ be
  markings. It is the case that $\m$ structurally reaches $\m'$ iff
  $\m \creach \m'$.
\end{proposition}




For a workflow net $\pn = (P, T, F)$, let $k_{\pn} \in \Nn \cup
\{\infty\}$ be the smallest number for which $\pn$ is
$k_{\pn}$-quasi-sound.
Then $\pn$ is structurally sound iff
$k_{\pn} \neq \infty$ and $\pn$ is $k_{\pn}$-sound~\cite[Thm~2.1]{ctiplea2005structural}. By
\Cref{prop:struct:reach:cont}, $k_{\pn} \neq \infty$
can be checked in polynomial time via a continuous reachability query.
Moreover, a lower bound on $k_{\pn}$ can be
obtained by computing $k_{\pn,\Z} \in \Nn \cup \{\infty\}$, defined as
the smallest value such that $\imarked{k} \zreach \fmarked{k}$.
We obtain a
better bound by defining $k_{\pn, \Qpos} \in \Nn \cup
\{\infty\}$ as the smallest value for which there is a continuous run
$\pi = \lambda_1 t_1 \cdots \lambda_n t_n$ such that $\imarked{k}
\ctrans{\pi} \fmarked{k}$ and $\vec{\pi} \in \N^T$, where
$\vec{\pi}[t] \defeq \sum_{i \in [1..n] : t_i = t} \lambda_i$. Values
$k_{\pn,\Z}$ and $k_{\pn,\Qpos}$ can respectively be computed by
a translation to integer linear programming, and a decidable
optimization modulo theory.

\begin{restatable}{proposition}{propIntegerK}\label{prop:integerk}
  Let $\pn$ be a workflow net. It is the case that $k_{\pn,\Z} \leq
  k_{\pn,\Qpos} \leq k_{\pn}$. Moreover, $k_{\pn,\Z}$ can be computed
  from an integer linear program $\mathcal{P}$; $k_{\pn,\Qpos}$ can be obtained
  by computing $\min k \in \Nn : \varphi(k)$ where $\varphi$ is a
  formula from the existential fragment of mixed linear arithmetic
  $\varphi$, \ie\ $\exists \mathsf{FO}(\Q, \Z, <, +)$; and both $\mathcal{P}$
  and $\varphi$ are constructible in polynomial time from $\pn$.
\end{restatable}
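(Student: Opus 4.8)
The plan is to prove the three assertions in turn, starting with the (easy) inequality chain. For $k_{\pn,\Qpos} \leq k_{\pn}$, assuming $k_{\pn}$ is finite, I would take a witnessing discrete run $\imarked{k} \trans{\pi} \fmarked{k}$ with $k = k_{\pn}$ and reinterpret it as a continuous run assigning scaling factor $1$ to every transition occurrence. Continuous firing with factor $1$ has exactly the same enabledness condition ($\m \geq \pre{t}$) and the same effect as a discrete firing, so $\imarked{k} \ctrans{\pi} \fmarked{k}$ with aggregate vector $\vec{\pi} \in \N^T$ (each entry counting occurrences); hence $k$ is admissible for $k_{\pn,\Qpos}$. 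For $k_{\pn,\Z} \leq k_{\pn,\Qpos}$, I would take a continuous run witnessing $k = k_{\pn,\Qpos}$ with $\vec{\pi} \in \N^T$: its effect is $\sum_{t \in T} \vec{\pi}[t] \cdot \effect{t} = \fmarked{k} - \imarked{k}$, so setting $\vec{x} \defeq \vec{\pi} \in \N^T$ and using the state-equation characterisation of $\zreach$ from \Cref{sec:relaxations} gives $\imarked{k} \zreach \fmarked{k}$, making $k$ admissible for $k_{\pn,\Z}$. Minimality then yields both inequalities (the $\infty$ cases being vacuous).

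\emph{Integer program.} For $k_{\pn,\Z}$ I would again use that $\imarked{k} \zreach \fmarked{k}$ iff there is $\vec{x} \in \N^T$ with $\sum_{t \in T} \vec{x}[t] \cdot \effect{t} = k \cdot (\fmarked{1} - \imarked{1})$. Thus $k_{\pn,\Z}$ is the optimum of the integer program $\mathcal{P}$ minimising $k$ subject to $k \geq 1$, $\vec{x} \geq \vec{0}$, integrality of $k$ and $\vec{x}$, and this single system of $|P|$ linear equalities (with $k_{\pn,\Z} = \infty$ when $\mathcal{P}$ is infeasible). Since $\mathcal{P}$ has $|T| + 1$ variables and $|P|$ constraints, it is built in polynomial time from $\pn$.

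\emph{Mixed-arithmetic formula.} The substantial part is $k_{\pn,\Qpos}$, where I would invoke the characterisation of continuous reachability of~\cite{FH15}: a continuous run from $\m$ to $\m'$ with aggregate vector $\vec{x} \in \Qpos^T$ exists iff the state equation $\m' = \m + \sum_t \vec{x}[t] \cdot \effect{t}$ holds, $\support{\vec{x}}$ is forward-fireable from $\m$, and $\support{\vec{x}}$ is backward-fireable from $\m'$. The fireability conditions depend only on $\support{\m}$, $\support{\m'}$ and $\support{\vec{x}}$; as $\support{\imarked{k}} = \{\initial\}$ and $\support{\fmarked{k}} = \{\output\}$ for every $k \geq 1$, they are independent of $k$, so $k$ enters only through the linear equation $k \cdot (\fmarked{1} - \imarked{1}) = \sum_t \vec{x}[t] \cdot \effect{t}$. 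I would then let $\varphi(k)$ assert the existence of $\vec{x}$ satisfying this equation together with the two fireability conditions, impose $k \geq 1$, and constrain $k$ and each $\vec{x}[t]$ to be integral with $\vec{x} \geq \vec{0}$; this integrality forces the aggregate vector into $\N^T$, matching exactly the extra requirement in the definition of $k_{\pn,\Qpos}$. The fireability witnesses introduce only rational auxiliary variables, so the formula lives in $\exists \mathsf{FO}(\Q, \Z, <, +)$, is constructed in polynomial time, and decidability of this theory lets us compute $k_{\pn,\Qpos} = \min\{k \in \Nn : \varphi(k)\}$.

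\emph{Main obstacle.} The delicate step is encoding forward- and backward-fireability of $\support{\vec{x}}$ as a quantifier-free body, since fireability is defined by an iterative saturation (a place becomes available once it is initially marked or produced by an already-fireable transition) rather than by a linear constraint. I would linearise it by guessing a rational ``timestamp'' per transition and requiring, whenever $\vec{x}[t] > 0$, that each input place of $t$ be either $\initial$ (respectively $\output$ for the backward direction) or produced by some $t'$ with $\vec{x}[t'] > 0$ whose timestamp strictly precedes that of $t$, the existential over $t'$ being a finite disjunction over $T$. The crux — where I expect to rely most on~\cite{FH15} — is proving that such a timestamp witness exists precisely when the saturation succeeds, and that any integral $\vec{x}$ satisfying the state equation and the support conditions is genuinely realisable as the aggregate vector of a continuous run, so that $\varphi$ captures $k_{\pn,\Qpos}$ exactly rather than a strict relaxation of it.
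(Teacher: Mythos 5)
Your proposal is correct and follows essentially the same route as the paper: the two inequalities are obtained exactly as in the paper (reading a discrete run as a continuous run with all scaling factors $1$, and reading an integral aggregate vector of a continuous run through the state-equation characterisation of $\zreach$), and the integer program for $k_{\pn,\Z}$ is the same minimisation of $k$ subject to $\imarked{k} + \sum_{t}\vec{x}[t]\cdot\effect{t} = \fmarked{k}$ with $\vec{x}\in\N^T$. The only real divergence is in the last part: the paper simply cites~\cite{BFHH17} for a polynomial-time constructible existential formula $\psi_{\pn}(\m,\m',\vec{x})$ that holds iff there is a continuous run from $\m$ to $\m'$ with aggregate vector exactly $\vec{x}$, and sets $\varphi(k) \defeq \exists \vec{x}\in\N^T : \psi(\imarked{k},\fmarked{k},\vec{x})$, whereas you reconstruct that formula from the Fraca--Haddad characterisation (state equation plus forward/backward fireability of the support) and sketch the timestamp encoding of fireability. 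That reconstruction is sound and is indeed how the cited formula is built, but note that the ``main obstacle'' you flag --- proving the timestamp encoding captures fireability exactly and that any $\vec{x}$ satisfying the three conditions is realisable as the aggregate vector of an actual continuous run --- is precisely the content of the result the paper delegates to~\cite{FH15,BFHH17}, so in either version that step is ultimately carried by the citation rather than by new argument.
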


\section{Free-choice workflow nets}
\label{sec:freechoice}
Let $\pn = (P, T, F)$ be a Petri net. We say that $\pn$ is
\emph{free-choice} if for any $s, t \in T$, it is the case that either
$\support{\pre{s}} \cap \support{\pre{t}} = \emptyset$ or $\pre{s} =
\pre{t}$. For example, the nets $\pn_\text{left}$ and
$\pn_\text{right}$ from \Cref{fig:pn} are respectively free-choice and
not free-choice.


It is known that generalised soundness is equivalent to $1$-soundness
in free-choice workflow nets~\cite{ping2004on}. We will show that the
same holds for structural soundness, and that, surprisingly, for
continuous soundness as well. This means that notions of soundness
collapse for free-choice nets. This is proven in the
forthcoming \Cref{lem:all:three}
and \Cref{thm:fc:equiv}, which form one of the main theoretical
contributions of this work.

Let $(\pn, \m)$ be a marked Petri net. We say that a transition $t$ is
\emph{quasi-live} in $(\pn, \m)$ if there exists $\m'$ such that $\m
\reach \m' \trans{t}$. Similarly, we say that a transition $t$ is
\emph{live} in $(\pn, \m)$ if for all $\m'$ such that $\m \reach \m'$,
$t$ is quasi-live in $(\pn,\m')$. In words, quasi-liveness states that
there is at least one way to enable $t$, and liveness states that $t$
can always be re-enabled. The set of \emph{quasi-live} and \emph{live}
transitions of $(\pn, \m)$ are defined respectively as $F(m) \defeq
\{t \in T \mid \text{$t$ is quasi-live in $(\pn, \m)$}\}$ and $L(m)
\defeq \{t \in T \mid \text{$t$ is live in $(\pn, \m)$}\}$.

\begin{restatable}{lemma}{lemAllThree}\label{lem:all:three}
  Let $\pn = (P, T, F)$ be a free-choice Petri net, let $c \in
  \Nn$, and let $\m \in \N^P$. The following statements hold.
  \begin{enumerate}
    \item There exists a marking $\m'$ such that $\m \reach \m'$ and
    $L(\m') = F(\m')$.\label{itm:all:three:a}
    
    \item If $L(\m) = F(\m)$, then $L(c \cdot \m) = F(c \cdot \m) =
    F(\m)$.\label{itm:all:three:b}
    
    \item If $L(c \cdot \m) = F(c \cdot \m)$, $c \cdot \m \reach
    \fmarked{c}$ and $(\pn, c \cdot \m)$ is bounded, then $\m =
    \fmarked{1}$.\label{itm:all:three:c}
  \end{enumerate}
\end{restatable}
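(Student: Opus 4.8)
The plan is to prove the three parts in order, leaning heavily on the theory of free-choice nets, since each part is a statement about quasi-liveness and liveness that should follow from known structural results for this class.

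\medskip

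For part~\ref{itm:all:three:a}, the goal is to reach a marking where every quasi-live transition is in fact live. My plan is to argue by a descent on the set of live transitions. Starting from $\m$, observe that $L(\m') \subseteq F(\m')$ always holds by definition. If $L(\m) = F(\m)$ we are done, so suppose there is a transition $t \in F(\m) \setminus L(\m)$. Since $t$ is not live, there is some $\m \reach \m_1$ from which $t$ can never again be enabled; that is, $t \notin F(\m_1)$ and, crucially, $F(\m_1) \subseteq F(\m)$ because any transition enableable from $\m_1$ is enableable from $\m$. Thus the quasi-live set strictly shrinks. The key point is that live transitions stay live after firing: if $s \in L(\m)$ then $s \in L(\m_1)$, so $L(\m) \subseteq L(\m_1)$ while $F(\m_1) \subsetneq F(\m)$. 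Iterating, the gap $|F(\cdot)| - |L(\cdot)|$ cannot increase and $F(\cdot)$ strictly decreases whenever the gap is nonzero; since $F$ is drawn from the finite set $T$, this process terminates at the desired $\m'$.

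\medskip

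For part~\ref{itm:all:three:b}, I would use monotonicity of enabledness together with the free-choice structure. The inclusions $F(\m) \subseteq F(c \cdot \m)$ and $L(\m) \subseteq L(c \cdot \m)$ follow since $c \cdot \m \geq \m$ and scaling a run by $c$ stays firable. For the reverse inclusion $F(c \cdot \m) \subseteq F(\m)$, here is where free-choiceness is essential: in a free-choice net, whether a transition can eventually be enabled depends on the support reachable from the current marking, and the hypothesis $L(\m) = F(\m)$ forces a kind of stability that is preserved under scaling. Concretely, I would argue that any transition quasi-live from $c \cdot \m$ is already quasi-live from $\m$, because the free-choice condition means enabledness of a transition only requires its (single) pre-set to be marked, and the set of reachably-markable places is the same for $\m$ and $c \cdot \m$. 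Combining the inclusions gives $L(c\cdot\m) = F(c\cdot\m) = F(\m)$.

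\medskip

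Part~\ref{itm:all:three:c} is the one I expect to be the main obstacle. The hypotheses are strong: from $c \cdot \m$ all quasi-live transitions are live, $c \cdot \m$ can reach $\fmarked{c}$, and the marked net is bounded. The conclusion $\m = \fmarked{1}$ is rigid, so the argument must exploit all three. My plan is to use the classical fact that for live and bounded free-choice nets the reachability structure is highly constrained --- in particular, liveness plus boundedness yields the existence of ``home markings'' and a well-behaved cyclic structure (this is where I would invoke standard free-choice theory, e.g.\ results around the Commoner/Hack characterization and the Home Marking theorem). Since $c \cdot \m \reach \fmarked{c}$ and $\fmarked{c}$ enables no transition (the output place has empty post-set in a workflow net), liveness forces $L(c\cdot\m) = \emptyset$, hence $F(c \cdot \m) = \emptyset$ by hypothesis. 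A marking of a workflow net with empty quasi-live set and from which $\fmarked{c}$ is reachable must already be ``dead'' except for whatever is needed to settle into $\fmarked{c}$; I would then argue that boundedness rules out any surplus tokens beyond those on $\initial$ and $\output$, and that the path-connectivity axiom of workflow nets forces $c \cdot \m$ to be exactly $\imarked{c}$ or $\fmarked{c}$, which after dividing by $c$ gives $\m = \fmarked{1}$ (the $\imarked{1}$ case being excluded since $t_{\text{init}}$-type transitions would be quasi-live). The delicate step, and the likely sticking point, is ruling out intermediate markings with tokens distributed on internal places: here I would lean on boundedness to bound token counts and on the empty quasi-live set to show no internal configuration can both be token-carrying and unable to fire anything, since every internal place lies on a path to $\output$ and its tokens would make some transition quasi-live.
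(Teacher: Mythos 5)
Your part~(a) matches the paper's argument (iteratively pick $t \in F(\m)\setminus L(\m)$, move to a marking where $t$ is no longer quasi-live, and use that $|F(\cdot)|$ strictly decreases), and is fine. The genuine gap is in part~(b), in the inclusion $F(c\cdot\m)\subseteq F(\m)$, which is the heart of the whole lemma. Your justification --- that ``enabledness only requires the pre-set to be marked'' and that ``the set of reachably-markable places is the same for $\m$ and $c\cdot\m$'' --- does not work: a transition is enabled only when \emph{all} places of $\supppre{t}$ carry a token \emph{simultaneously}, and individual coverability of each input place from $\m$ does not yield simultaneous coverability (indeed, the extra tokens in $c\cdot\m$ are exactly what could make several input places markable at once when $\m$ cannot). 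You never say where the hypothesis $L(\m)=F(\m)$ or free-choiceness actually enters. The paper's proof is a run-transformation argument: take a shortest run $\sigma_1$ with $c\cdot\m\trans{\sigma_1 t}$, so that every transition of $\sigma_1$ lies in $F(\m)=L(\m)$; then simulate $\sigma_1$ from $\m$ by interleaving re-enabling runs $\phi_i$ before each $t_i$ (possible precisely because each $t_i$ is \emph{live} from $\m$); finally, free-choiceness guarantees that no transition occurring in this simulation shares an input place with $t$ (otherwise its pre-vector would equal $\pre{t}$ and $t$ would already be in $F(\m)$), so the tokens produced on $\supppre{t}$ are never consumed and $t$ becomes enabled from $\m$ --- a contradiction. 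Some argument of this kind is indispensable, and your sketch does not contain it.

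Part~(c) is a different story: your key observation --- that $\fmarked{c}$ is a dead marking reachable from $c\cdot\m$, so $L(c\cdot\m)=\emptyset$ and hence $F(c\cdot\m)=\emptyset$ by hypothesis --- is correct and in fact yields an \emph{immediate} finish that you do not quite draw: if $F(c\cdot\m)=\emptyset$ then no transition can ever fire from $c\cdot\m$, so $\Reach{\pn, c\cdot\m}=\{c\cdot\m\}$, and $c\cdot\m\reach\fmarked{c}$ forces $c\cdot\m=\fmarked{c}$, i.e.\ $\m=\fmarked{1}$ (no boundedness, no home-marking or Commoner/Hack theory needed). The paper instead argues via boundedness that $\m[\output]=1$ and then rules out surplus tokens using the fact that every transition produces a token outside $\output$; your route, once stated crisply, is simpler. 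So: tighten~(c) by dropping the detour through classical free-choice structure theory, and supply a real proof of the reverse inclusion in~(b).
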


\begin{restatable}{lemma}{lemContsoundBounded}\label{lem:contsoundbounded}
  Let $\pn$ be a workflow net. If $\pn$ is continuously sound,
  then $(\pn, \imarked{k})$ is bounded for all $k \in \Nn$.
\end{restatable}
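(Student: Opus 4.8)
The plan is to prove the contrapositive in strong form: assuming $\pn$ is continuously sound, I suppose toward a contradiction that $(\pn, \imarked{k})$ is unbounded for some $k \in \Nn$, and then exhibit a continuously reachable marking from which $\fmarked{1}$ cannot be reached. First I would extract a \emph{pump}: by the classical pumping lemma for Petri nets (a well-quasi-ordering argument on the ordinary reachable markings from $\imarked{k}$, mirroring the forward direction of \Cref{prop:zunbounded:charac} but kept over $\N^P$), unboundedness of $(\pn, \imarked{k})$ yields a marking $\m_a$ with $\imarked{k} \reach \m_a$ together with a run $\pi$ enabled at $\m_a$ such that $\m_a \trans{\pi} \m_a + \m'$ for some $\m' > \vec{0}$.

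Next I would transfer this pump to the continuous semantics. By \Cref{lem:cont-equiv-reach}, $\imarked{k} \reach \m_a$ gives $\imarked{1} \creach \tfrac{1}{k}\m_a$. Scaling every transition of $\pi$ by the factor $\tfrac{1}{k}$ keeps all intermediate enabledness inequalities valid, so the scaled continuous run $\pi'$ satisfies $\tfrac{1}{k}\m_a \ctrans{\pi'} \tfrac{1}{k}\m_a + \tfrac{1}{k}\m'$. Since $\tfrac{1}{k}\m_a \in \QReach{\pn, \imarked{1}}$, continuous soundness supplies a continuous run $\rho$ with $\tfrac{1}{k}\m_a \creach \fmarked{1}$; by monotonicity of continuous firing (adding a nonnegative vector preserves enabledness and effects add linearly), the very same $\rho$ yields $\tfrac{1}{k}\m_a + \tfrac{1}{k}\m' \creach \fmarked{1} + \vec{d}$, where $\vec{d} \defeq \tfrac{1}{k}\m' > \vec{0}$. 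Concatenating the three fragments gives $\imarked{1} \creach \fmarked{1} + \vec{d}$, so $\fmarked{1} + \vec{d} \in \QReach{\pn, \imarked{1}}$.

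The crux, and the step I expect to be the main obstacle, is to show that this surplus can never be removed, i.e.\ $\fmarked{1} + \vec{d} \not\creach \fmarked{1}$; this contradicts continuous soundness and closes the argument. Here I would use that $\pre{t}[\output] = 0$ for every $t$, so the $\output$-coordinate is nondecreasing along any continuous run. If $\vec{d}[\output] > 0$ we are done immediately, since $\output$ starts above $1$ and cannot decrease back to $1$. Otherwise every transition fired along a hypothetical run to $\fmarked{1}$ must satisfy $\post{t}[\output] = 0$ (else $\output$ would exceed $1$ and never return), and restricting attention to the places $P \setminus \{\output\}$, such a run would have to drive the marking from $\vec{d} > \vec{0}$ down to $\vec{0}$. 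Inspecting its last firing $\lambda t$ (with $\lambda > 0$), the pre-marking equals $\lambda(\pre{t} - \post{t})$ on $P \setminus \{\output\}$, so enabledness $\lambda(\pre{t} - \post{t}) \geq \lambda \pre{t}$ forces $\post{t} \leq \vec{0}$ there, contradicting the fact that every transition of a workflow net produces at least one token (so $\post{t} > \vec{0}$ while $\post{t}[\output] = 0$). This is precisely the impossibility asserted in the proof of \Cref{prop:zbound:continuous}, here made self-contained; notably the whole argument avoids the nonredundancy hypothesis, which is why I prefer it to invoking \Cref{prop:zbound:continuous} directly.
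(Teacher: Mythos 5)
Your proof is correct and follows essentially the same route as the paper's: extract a pump witnessing unboundedness, scale it down via \Cref{lem:cont-equiv-reach} to get $\imarked{1} \creach \fmarked{1} + \vec{d}$ with $\vec{d} > \vec{0}$ using continuous soundness plus monotonicity, and then argue the surplus cannot be removed. The only difference is that you spell out in detail (via the $\output$-monotonicity and last-firing analysis) the final impossibility step, which the paper dispatches in one sentence by noting that every transition produces a token and none consumes from $\output$.
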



\begin{theorem}\label{thm:fc:equiv}
  Let $\pn$ be a free-choice workflow net. These statements are
  equivalent: (1)~$\pn$ is $1$-sound, (2)~$\pn$ is generalised sound,
  (3)~$\pn$ is structurally sound, and (4)~$\pn$ is continuously
  sound.
\end{theorem}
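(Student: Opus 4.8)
The plan is to close a cycle of implications among the four notions, using the known equivalence of $1$-soundness and generalised soundness for free-choice nets~\cite{ping2004on}, \Cref{thm:cont-necessary}, and the two lemmas \Cref{lem:all:three} and \Cref{lem:contsoundbounded}. The trivial directions are: (2)$\Rightarrow$(1), since $k$-soundness for all $k$ includes $k=1$; (2)$\Rightarrow$(3), since ``for all $k$'' implies ``for some $k$''; (1)$\Rightarrow$(2) by~\cite{ping2004on}; and (2)$\Rightarrow$(4) by \Cref{thm:cont-necessary} in contrapositive form. It therefore suffices to prove the two implications (3)$\Rightarrow$(1) and (4)$\Rightarrow$(1): together with the above these yield (1)$\Leftrightarrow$(2)$\Leftrightarrow$(3) and (1)$\Leftrightarrow$(2)$\Leftrightarrow$(4), hence the full equivalence.

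Both remaining implications follow the same template, which is where \Cref{lem:all:three} does the heavy lifting. Given a marking $\m \in \Reach{\pn, \imarked{1}}$, I want $\m \reach \fmarked{1}$. By \Cref{lem:all:three}(\ref{itm:all:three:a}) I first move to a ``stabilised'' marking $\m'$ with $\m \reach \m'$ and $L(\m') = F(\m')$; it then suffices to show $\m' = \fmarked{1}$. To apply \Cref{lem:all:three}(\ref{itm:all:three:c}) with some scaling factor $c$ and the marking $\m'$, I must establish its three hypotheses for $c \cdot \m'$: that $L(c\cdot\m') = F(c\cdot\m')$, which is immediate from \Cref{lem:all:three}(\ref{itm:all:three:b}); that $c\cdot\m' \reach \fmarked{c}$; and that $(\pn, c\cdot\m')$ is bounded. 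The conclusion of part~(\ref{itm:all:three:c}) is exactly $\m' = \fmarked{1}$, so that $\m \reach \m' = \fmarked{1}$; since $\m$ was arbitrary, this gives $1$-soundness. A small routine ingredient used in both cases is that $\imarked{1} \reach \m'$ implies $\imarked{c} \reach c\cdot\m'$: firing the witnessing run $c$ times in sequence, the marking after $i$ firings is $\imarked{c-i} + i\cdot\m' \geq \imarked{1}$ for $i < c$, so the run stays enabled by monotonicity and $c\cdot\m'$ is reached after $c$ firings.

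For (3)$\Rightarrow$(1), suppose $\pn$ is $k$-sound and take $c \defeq k$. Then $\imarked{k} \reach k\cdot\m'$ by the observation above, so $k\cdot\m' \in \Reach{\pn, \imarked{k}}$; consequently $k$-soundness directly supplies $k\cdot\m' \reach \fmarked{k}$, and boundedness of $(\pn, k\cdot\m')$ follows from the standard fact that a $k$-sound net is bounded from $\imarked{k}$ (analogous to the $1$-sound case~\cite{AL97}) together with $k\cdot\m' \in \Reach{\pn, \imarked{k}}$. For (4)$\Rightarrow$(1), suppose $\pn$ is continuously sound. Since $\m'$ is reachable from $\imarked{1}$ we have $\imarked{1} \creach \m'$, so continuous soundness gives $\m' \creach \fmarked{1}$, and \Cref{lem:cont-equiv-reach} produces a $b \in \Nn$ with $b\cdot\m' \reach \fmarked{b}$; taking $c \defeq b$ supplies the reachability hypothesis. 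Boundedness of $(\pn, b\cdot\m')$ follows from \Cref{lem:contsoundbounded}, which gives boundedness of $(\pn, \imarked{b})$, together with $\imarked{b} \reach b\cdot\m'$ obtained again by the parallel-firing observation.

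The genuine difficulty is not in the orchestration above but is already absorbed into \Cref{lem:all:three}(\ref{itm:all:three:c}): converting the ``global'' data (boundedness, reachability of $\fmarked{c}$, and the stabilisation $L=F$) into the pointwise conclusion $\m'=\fmarked{1}$ is precisely where free-choiceness is essential and where the notions genuinely collapse. At the level of the theorem itself, the only delicate points are matching the scaling factor $c$ appearing in the reachability hypothesis with the one used for $L=F$ (harmless, since part~(\ref{itm:all:three:b}) holds for every $c$) and justifying the scaling $\imarked{1}\reach\m' \Rightarrow \imarked{c}\reach c\cdot\m'$ by monotonicity. I expect that once \Cref{lem:all:three} and \Cref{lem:contsoundbounded} are in hand, this theorem is a short assembly.
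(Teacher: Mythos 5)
Your proposal is correct and follows essentially the same route as the paper: the same cycle of implications, with (1)$\Rightarrow$(2) from~\cite{ping2004on}, (2)$\Rightarrow$(4) from \Cref{thm:cont-necessary}, and both (3)$\Rightarrow$(1) and (4)$\Rightarrow$(1} proved by stabilising via \Cref{lem:all:three}(\ref{itm:all:three:a}), scaling with part~(\ref{itm:all:three:b}), supplying boundedness (from $k$-soundness, resp.\ \Cref{lem:contsoundbounded}) and reachability of $\fmarked{c}$ (from $k$-soundness, resp.\ continuous soundness plus \Cref{lem:cont-equiv-reach}), and concluding $\m'=\fmarked{1}$ by part~(\ref{itm:all:three:c}). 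Your explicit monotonicity argument for $\imarked{1}\reach\m'\Rightarrow\imarked{c}\reach c\cdot\m'$ is a detail the paper leaves implicit, but the proofs are otherwise the same.
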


\begin{proof}
  (1)~$\Rightarrow$~(2). This was shown in~\cite{ping2004on}.

  (2)~$\Rightarrow$~(3). By definition, if $\pn$ is $k$-sound for all
  $k$, then it is for some $k$.

%


  (2)~$\Rightarrow$~(4). By \Cref{thm:cont-necessary}.
  
%

  (3)~$\Rightarrow$~(1). Let $k \in \Nn$ be such that $\pn$ is
  $k$-sound. Let $\m \in \N^P$ be such that $\imarked{1} \reach
  \m$. By \Cref{lem:all:three}\eqref{itm:all:three:a}, there is a
  marking $\m' \in \N^P$ such that $\m \reach \m'$ and $F(\m') =
  L(\m')$. By \Cref{lem:all:three}\eqref{itm:all:three:b}, we have
  $L(k \cdot \m') = F(k \cdot \m') = F(\m')$.

  By $k$-soundness, $(\pn, \imarked{k})$ must be
  bounded~\cite[Proposition~3.2 and Lemma~3.6]{BMO22}. Thus, since
  $\imarked{k} \reach k \cdot \m \reach k \cdot \m'$, it is also the
  case that $(\pn, k \cdot \m')$ is bounded. By $k$-soundness, $k
  \cdot \m' \reach \fmarked{k}$. By invoking
  \Cref{lem:all:three}\eqref{itm:all:three:c} with $c \defeq k$, we
  conclude that $\m' = \fmarked{1}$. So, $\pn$ is $1$-sound as
  $\imarked{1} \reach \m \reach \m' = \fmarked{1}$.

  
  (4)~$\Rightarrow$~(1). Assume that $\pn$ is continuously sound. Let
  $\m \in \N^P$ be a marking such that $\imarked{1} \reach \m$. By
  \Cref{lem:all:three}\eqref{itm:all:three:a}, there exists $\m' \in
  \N^P$ such that $\m \reach \m'$ and $L(m') = F(m')$. Clearly,
  $\imarked{1} \creach \m'$ and by continuous soundness $\m' \creach
  \fmarked{1}$. By \Cref{lem:cont-equiv-reach}, there exists $b \in
  \Nn$ such that $b \cdot \m' \reach \fmarked{b}$.
  

  By \Cref{lem:contsoundbounded}, continuous soundness of $\pn$ implies that $(\pn, b \cdot
  \m')$ is bounded, as $\imarked{b} \reach b \cdot \m'$. Since $L(m') = F(m')$, it follows from
  \Cref{lem:all:three}\eqref{itm:all:three:b} that $L(b \cdot \m') =
  F(b \cdot \m')$. By invoking
  \Cref{lem:all:three}\eqref{itm:all:three:c} with $c \defeq b$, we
  derive $\m' = \fmarked{1}$. Therefore, $\pn$ is $1$-sound as
  $\imarked{1} \reach \m \reach \m' = \fmarked{1}$. \qed
\end{proof}

\section{Experimental evaluation}
\label{sec:experimental}
We implemented our approaches for generalised and structural soundness
in \csharp.\footnote{In the case of acceptance, we will
    submit an artifact to the artifact evaluation.} We test
continuous soundness via SMT solving. More precisely, we use an
existential $\psi_{\pn}$ formula of linear arithmetic,
i.e. $\mathsf{FO}(\Q, <, +)$, from~\cite{BFHH17}. This formula is such
that $\psi(\m, \m')$ holds iff $\m \creach \m'$ in $\pn$. Continuous
soundness amounts to the $\exists \forall$-formula
$\psi_{\pn}(\imarked{1}, \m) \land \neg \psi_{\pn}(\m,
    \fmarked{1})$. To solve such formulas, we use
Z3~\cite{moura2008z3}. We further use Z3 to decide structural
quasi-soundness and compute $k_{\pn,\Qpos}$ (see
\Cref{prop:integerk}), again via the formulas of~\cite{BFHH17}.

We evaluated our prototype implementation on a standard benchmark
suite used regularly in the literature, and a novel suite of synthetic
instances where generalised or structural soundness are hard to decide
with a naive approach.

We compared with two established tools for soundness: LoLA
(v2.0)~\cite{lola}, and Woflan~\cite{verbeek2000woflan}.\footnote{A
    version of Woflan suitable for running without user interaction was
    provided, via personal communication, by its maintainer.} The
latter can only decide \emph{classical} soundness ($1$-soundness +
quasi-liveness). Nonetheless, we use quasi-live instances, so for
which $1$-soundness and classical soundness are equivalent. We further
use a  transformation to reduce the verification of
$k$-soundness to the one of $1$-soundness~\cite[Lemma~3.6]{BMO22}. On
the other hand, LoLA can directly decide $k$-soundness. To do so, we start from
$\imarked{k}$ and check a CTL formula of the form $\forall
    \mathsf{G}\, \exists \mathsf{F}\, ((\m[\output] = k) \land
    \bigwedge_{p \neq \output} \m[p] = 0)$.


Experiments were run on an 8-Core
Intel\textregistered~Core\texttrademark~i7-7700 CPU @ 3.60GHz with
Ubuntu~18.04. We limited memory to $\sim$8GB\@, and time to $120$s for
each instance. Tools were called from a Python script. For LoLA and
our implementation, we used the \emph{time} module to measure
time. Running Woflan involves some overhead, so we instead take the
total verification time reported by Woflan itself.

\subsection{Free-choice benchmark suite}

The benchmark suite encompasses 1386 free-choice Petri nets that
represent business processes modeled in the IBM WebSphere Business
Modeler. It was originally presented
in~\cite{fahland2009instantaneous}, and has been studied frequently in
the literature~\cite{bride2017reduction,favre2016diagnostic}. These
nets are not workflow nets by our definition, but can be transformed
using a known
procedure~\cite{kiepuszewski2003fundamentals}. Intuitively, the nets
are workflow nets with multiple final places, and the procedure adds a
dedicated output place and ensures that the resulting workflow net
represents the desired behaviour. However, roughly $1\%$ of the nets
are not workflow nets by our definition even after the procedure, as
they contain nodes that are not on a path from $\initial$ to
$\output$. We removed these nets.

We further checked each net for safety using LoLA and dropped unsafe
nets. Recall that $(\pn, \imarked{1})$ is sound if each reachable
marking has at most one token per place. Unsafe instances can be
dropped as unsafety implies $1$-unsoundness in free-choice
nets~\cite[Thm.~4.2 and~4.4]{verbeek2001diagnosing}, and as existing
methods for checking safety, \eg via state-space exploration with
partial order reductions, are very efficient (here needing a mean of
$3$ms). Thus, we considered safe instances only. Among the 1386
instances, 1382 are workflow nets, and 977 are further safe.


We also invoked an implementation of the reduction rules
of~\cite{bride2017reduction} to reduce the size of all
instances.\footnote{At time of writing, an implementation is available
    at \url{https://github.com/LoW12/Hadara-AdSimul}.} As discussed in the
introduction, the rules can reduce some instances to trivially sound
nets. However, even the size of
nontrivial reduced instances tends to be small, with an average number
of places and transitions of roughly $14$, while three quarters of
nets have at most $18$ places and transitions. This is small enough
that a complete state-splace enumeration is often
feasible,
in particular as the nets are safe and especially LoLA
utilizes powerful partial order reductions for such nets. As we want to focus
on scalability, we chained instances to produce challenging synthetic nets based on real-world
instances.
This is a natural way of constructing workflow nets, intuitively, the final process can be composed of many subtasks.
It can be seen as a special case of refinement operations, studied in the context of generalised soundness~\cite{van2003soundness}.

The chaining procedure merges two workflow nets $\pn = (P, T, F)$ and
$\pn' = (P', T', F')$ into $\pn'' \defeq (P'', T'', F'')$ where
$P'' \defeq P \cup P'$, $T'' \defeq T \cup T' \cup \{t_{aux}\}$ with
$F''$ as $F' + F''$ extended with
$\pre{t_{\mathrm{aux}}}[\output] \defeq 1$,
$\post{t_{\mathrm{aux}}}[\initial'] \defeq 1$, and
$\pre{t_{\mathrm{aux}}}[p] = \post{t_{\mathrm{aux}}}[p'] \defeq 0$ for
other entries. It is readily seen that this construction (1)~produces
a free-choice net if both $\pn$ and $\pn'$ are free-choice; and
(2)~preserves safety.

This way, we generated large instances by using $\ell \in \{1, 21,
    41, \dots, 401\}$ randomly chosen unreduced safe instances from the
benchmark suite as inputs to be chained into one instance, then
reduced that instance.
For each number $\ell$, we produced 20 combined nets, with a fresh
random choice each time, in order to have a more representative
collection of nets for $\ell$. This resulted in 420 instances, of
which 405 are nontrivial after applying reduction rules.

A caveat is that such large nets may seem unlikely to arise in
practice. It seems a human designer would avoid
designing highly complex processes corresponding to Petri nets
with thousands of places.  However, process models are not only
explicitly written by humans, but also machine-generated, \eg by
mining event logs~(see \cite{van2016data} for a book on the topic). In particular, being
free-choice is preserved by chaining, so a large free-choice net may
``hide'' and combine several less complex processes, which might
necessitate analyzing large workflow nets.

\subsubsection{Results.}

We checked the safe free-choice instances obtained as explained above
for $1$-soundness using LoLA, Woflan and
our implementation of continuous soundness. The results are shown on
the left of \Cref{fig:fc-results}. The right-hand side of the figure
provides an overview over the sizes of the nets. In each case, $N$
refers to the number of original instances that were chained to create
each instance. 

\begin{figure}
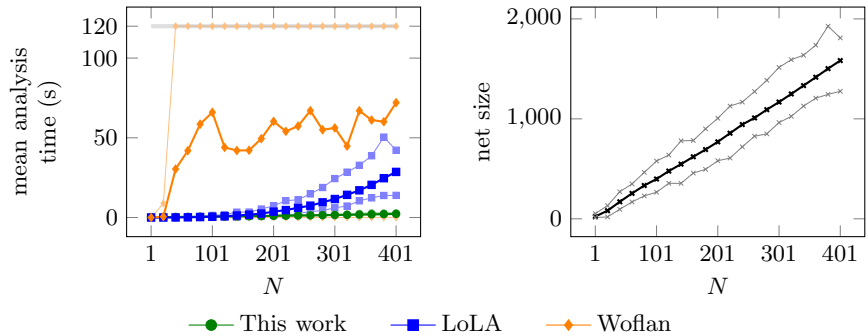

    \centering
    \plotcanvaswithextraargs{$N$}{\begin{tabular}{c} mean analysis \\ time (s)\end{tabular}}{
        \addplot[ultra thin, color=colConti!50!white, mark=*, mark size=1.2pt] coordinates {(1,0.24875164031982422) (21,0.25934910774230957) (41,0.2969787120819092) (61,0.34603285789489746) (81,0.38550233840942383) (101,0.41761255264282227) (121,0.5175187587738037) (141,0.5033724308013916) (161,0.6341457366943359) (181,0.6340017318725586) (201,0.7953159809112549) (221,0.781745433807373) (241,0.9047155380249023) (261,1.0543396472930908) (281,1.1093416213989258) (301,1.2764415740966797) (321,1.3729355335235596) (341,1.5636298656463623) (361,1.6404368877410889) (381,1.8053691387176514) (401,1.8820371627807617)};

        \addplot[ultra thin, color=colLola!50!white, mark=square*, mark size=1.2pt] coordinates {(1,0.0032896995544433594) (21,0.003237485885620117) (41,0.009343385696411133) (61,0.03467702865600586) (81,0.08178305625915527) (101,0.12643218040466309) (121,0.3325989246368408) (141,0.2991626262664795) (161,0.7066307067871094) (181,0.8594794273376465) (201,1.317361831665039) (221,1.4517436027526855) (241,2.4387874603271484) (261,3.7813377380371094) (281,4.288681983947754) (301,6.137068271636963) (321,7.181021451950073) (341,10.475846767425537) (361,12.33228874206543) (381,13.807121515274048) (401,13.894974708557129)};

        \addplot[ultra thin, color=colWoflan!50!white, mark=diamond*, mark size=1.2pt] coordinates {(1,0.002) (21,0.002) (41,0.003) (61,0.003) (81,0.004) (101,0.005) (121,0.005) (141,0.006) (161,0.008) (181,0.009) (201,0.01) (221,0.01) (241,0.012) (261,0.014) (281,0.016) (301,0.018) (321,0.028) (341,0.026) (361,0.036) (381,0.04) (401,0.032)};

        \addplot[ultra thin, color=colConti!50!white, mark=*, mark size=1.2pt] coordinates {(1,0.27018189430236816) (21,0.3120253086090088) (41,0.4360194206237793) (61,0.5158348083496094) (81,0.6339051723480225) (101,0.833796501159668) (121,0.8922023773193359) (141,1.0395622253417969) (161,1.055588722229004) (181,1.217644214630127) (201,1.3503105640411377) (221,1.6982512474060059) (241,1.5905685424804688) (261,1.6604011058807373) (281,1.8535799980163574) (301,2.072716236114502) (321,2.148928642272949) (341,2.5639662742614746) (361,2.983848810195923) (381,2.8429923057556152) (401,2.9164958000183105)};

        \addplot[ultra thin, color=colLola!50!white, mark=square*, mark size=1.2pt] coordinates {(1,0.005075693130493164) (21,0.020010948181152344) (41,0.1506800651550293) (61,0.29684996604919434) (81,0.703944206237793) (101,1.3222572803497314) (121,1.83095121383667) (141,3.360896587371826) (161,3.420379161834717) (181,5.309957265853882) (201,7.344234228134155) (221,10.57127594947815) (241,11.159202098846436) (261,14.828051567077637) (281,18.81239151954651) (301,24.42066502571106) (321,28.465094327926636) (341,32.755656003952026) (361,38.74089980125427) (381,50.431984424591064) (401,42.217225074768066)};

        \addplot[ultra thin, color=colWoflan!50!white, mark=diamond*, mark size=1.2pt] coordinates {(1,0.028) (21,8.794) (41,120.0) (61,120.0) (81,120.0) (101,120.0) (121,120.0) (141,120.0) (161,120.0) (181,120.0) (201,120.0) (221,120.0) (241,120.0) (261,120.0) (281,120.0) (301,120.0) (321,120.0) (341,120.0) (361,120.0) (381,120.0) (401,120.0)};

        \addplot[thick, color=colConti, mark=*, mark size=1.2pt] coordinates {(1,0.2597604274749756) (21,0.2882338404655457) (41,0.35140663385391235) (61,0.41893357038497925) (81,0.4889250874519348) (101,0.5586663961410523) (121,0.6552823305130004) (141,0.7359154939651489) (161,0.8287474155426026) (181,0.9342723846435547) (201,1.0704388618469238) (221,1.171392560005188) (241,1.28218332529068) (261,1.3615889191627502) (281,1.47420175075531) (301,1.6003031969070434) (321,1.6882519483566285) (341,1.869273853302002) (361,2.009351670742035) (381,2.1343404173851015) (401,2.288579261302948)};

        \addplot[thick, color=colLola, mark=square*, mark size=1.2pt] coordinates {(1,0.0037623882293701173) (21,0.009159934520721436) (41,0.04863708019256592) (61,0.13488439321517945) (81,0.2836678147315979) (101,0.4857176184654236) (121,0.8112531781196595) (141,1.2583516597747804) (161,1.7793631434440613) (181,2.483343708515167) (201,3.6791560292243957) (221,4.659459483623505) (241,6.130179595947266) (261,7.428779661655426) (281,9.531965816020966) (301,11.60091712474823) (321,14.213600885868072) (341,17.066209626197814) (361,20.485576713085173) (381,24.664501202106475) (401,28.610709536075593)};

        \addplot[thick, color=colWoflan, mark=diamond*, mark size=1.2pt] coordinates {(1,0.008) (21,0.5043500000000001) (41,30.406200000000002) (61,41.92285) (81,58.54125) (101,66.00295) (121,43.991150000000005) (141,42.0291) (161,42.16715) (181,49.4807) (201,60.2767) (221,54.042300000000004) (241,57.35075) (261,67.0294) (281,55.035849999999996) (301,56.20935) (321,44.809050000000006) (341,67.001) (361,61.124449999999996) (381,60.0859) (401,72.06155)};

        \addplot[ultra thick, color=gray, opacity=0.25] coordinates {(1,120.0) (401,120.0)};

    }{0.45\textwidth}{xtick={1,101,...,401}, ytick={0,50,100,120}, height=0.38\textwidth}
    \hspace*{10pt}
    \plotcanvaswithextraargs{$N$}{net size}{
        \addplot[thick, black, mark=x, mark size=1.2pt] coordinates {(1,23.2) (21,81.5) (41,170.65) (61,254.5) (81,333.55) (101,398.1) (121,477.55) (141,547.05) (161,620.8) (181,692.85) (201,770.0) (221,856.45) (241,943.0) (261,1009.45) (281,1092.05) (301,1168.25) (321,1249.15) (341,1332.85) (361,1416.45) (381,1502.35) (401,1583.1)};

        \addplot[ultra thin, black!50!white, mark=x, mark size=1.2pt] coordinates {(1,13) (21,19) (41,95) (61,169) (81,229) (101,266) (121,355) (141,355) (161,458) (181,495) (201,581) (221,608) (241,720) (261,826) (281,850) (301,963) (321,1025) (341,1128) (361,1207) (381,1245) (401,1276)};

        \addplot[ultra thin, black!50!white, mark=x, mark size=1.2pt] coordinates {(1,52) (21,132) (41,271) (61,349) (81,465) (101,578) (121,638) (141,780) (161,784) (181,900) (201,1006) (221,1128) (241,1168) (261,1273) (281,1386) (301,1515) (321,1592) (341,1637) (361,1738) (381,1927) (401,1810)};
    }{0.45\textwidth}{xtick={1,101,...,401}, height=0.38\textwidth}
    \ourlegend

    \caption{Experiments on chained free-choice instances. The $x$-value denotes the number $N$
        of chained nets.
        Dark thick lines denote the mean, and light thin lines of the same color denote the minimum and maximum, respectively.
        For Woflan, the minimum line is slightly below the line of this work.
        For this work, the minimum and maximum lines are very close to the mean.
        \emph{Left:} The $y$-value denotes time for checking soundness of the 20 nets for each $N$.
        Marks on the gray line at $120s$ denote timeouts.
        \emph{Right:} The $y$-value denotes the size of generated nets.
    }\label{fig:fc-results}
\end{figure}

The results show that state-space exploration via LoLA is very fast
for moderate sizes, but does not scale as well. Continuous soundness
is in fact outperformed by LoLA for $N \leq 100$, but scales much
better, showing essentially linear growth in the given data range. For
instance, continuous soundness takes a mean of $0.25s$ for $N = 1$, a
mean of $1.07s$ for $N = 201$, and a mean of $2.28s$ for $N = 401$.

Woflan performs very well on the original instances, but times out
frequently for larger instances. Woflan checks so-called
$S$-coverability~\cite{verbeek2001diagnosing}. This is fast on many
instances, even large ones, but starts running into the
exponential-time worst case when instances get larger. For $N = 1$ and
$N = 21$, Woflan does not ever time out, while it times out for
roughly half of the instances in the range from $N = 201$ to $N =
    401$. Overall, we infer that for large free-choice workflow nets,
deciding soundness by checking continuous soundness can outperform
existing techniques, while the procedure is still competitive on
moderate instances.

\subsection{Synthetic instances}

In the previously discussed benchmark suite, nets are free-choice. So
structural and generalised soundness are equivalent
by \Cref{thm:fc:equiv}. We considered including a second suite of 590
non-free-choice Petri nets that represent processes of
the SAP reference model~\cite{mendling2006faulty}. However, all of
them turn out to be $1$-quasi-sound but not $1$-sound, so they
represent trivial cases for generalised and structural soundness:
simply checking $1$-soundness, or $1$-quasi-soundness and then
$1$-soundness, decides all instances. In order to have a wider variety
of challenging instances, we introduce several families of synthetic
workflow nets. The nets are simple to understand, but have large
numbers of reachable marking, so are challenging for
approaches relying on state-space exploration, \eg\ model checking.

\paragraph{Encoding arc weights.}

To simplify the presentation, we describe synthetic instances
utilizing arcs with weights. For
benchmarking, we removed the arc weights and instead input equivalent
weightless nets. To do so, we used an encoding that simulates
exponentially large weights by polynomially many transitions and
places (the encoding is explained in \Cref{ssec:arc:weights}).
It preserves (quasi-)soundness, but 
significantly increases the number of reachable markings. Indeed, our
synthetic instances are mostly trivial to solve by enumerating
reachable markings when arcs have weights, but become much harder to
decide when the encoding is used.\footnote{It is deliberately used to
make instances challenging, not to ensure compatibility with LoLA or
Woflan, as both support arc weights.} While much of the literature on
workflow nets does not consider nets with arc weights, implicit
structural encodings can occur in practice.

\subsubsection{Generalised soundness}

\paragraph{Benchmark instances.}

We introduce a synthetic family of nets where generalised soundness
appears to be challenging. The family $\{\pn_c\}_{c \in \Nn}$ is
defined at the top of \Cref{fig:synthetic}. Parameter $c \in \Nn$ is
the smallest value for which $\pn_c$ is $c$-unsound. From
$\imarked{c}$, the sequence $t_\initial^{c} t_r^{c+1}$ can be fired,
which leads to the deadlock $\{r \colon c+1\}$. Yet, when starting
with $k < c$ tokens in $\initial$, and firing $t_\initial^k$,
transitions $t_r$ and $t_\output$ can only be fired exactly $k$ times,
and $\fmarked{k}$ will be reached.

\begin{figure}
  \centering
  \begin{tabular}{c}
    \begin{tikzpicture}[node distance=0.5cm and 1.5cm,
                        scale=0.9, transform shape]
      \tikzstyle{cplace} = [place, minimum size=15pt]
      
      \node[cplace, label=left:{$\initial$}] (i) at (0.25,1) {};
      \node[transition, right=1.5cm of i, label=below:{$t_\initial$}] (t1) {};
      \path[->]
      (i) edge[]            node[] {}           (t1);

      \node[cplace, right=of t1, label=below:{$p$}] (p) {};
      \path[->]
      (t1) edge[]            node[above] {$c+1$}           (p);
      
      \node[transition, above=of p, label=left:{$t_r$}] (t3) {};
      \node[cplace, right=of t3, label=right:{$r$}] (r) {};
      
      \path[->]
      (p)  edge[]            node[left] {$c$}           (t3)
      (t3) edge[]            node[] {}           (r)
      ;
      
      \node[transition, right=1.5cm of p, label=below:{$t_\output$}] (tf) {};
      \node[cplace, right=1.5cm of tf, label=right:{$\output$}] (f) {};
      \path[->]
      (p)  edge[]            node[] {}           (tf)
      (r)  edge[]            node[] {}           (tf)
      (tf) edge[]            node[] {}           (f)
      ;    
    \end{tikzpicture} \\[-2pt]
    \midrule \\[-10pt]
    \begin{tikzpicture}[node distance=0cm and 1.25cm, transform shape, scale=0.9]
      \tikzstyle{cplace} = [place, minimum size=15pt]
        
      \node[cplace, label=left:{$\initial$}] (i) {};
      \node[transition, right=0.85cm of i, label=above:{$t$}]  (t) {};
      \node[cplace, right=0.85cm of t, label=right:{$\output$}] (f) {};
      
      \path[->]
      (i) edge[] node[above] {$c$} (t)
      (t) edge[] node[above] {$c$} (f)
      ;

      \node[cplace, label=left:{$\initial$}, right=1cm of f]   (i2)  {};
      \node[transition, right=0.85cm of i2, label=above:{$t$}]  (t2) {};
      \node[cplace, right=0.85cm of t2, label=right:{$\output$}] (f2) {};
      
      \path[->]
      (i2) edge[] node[above] {$c$}     (t2)
      (t2) edge[] node[above] {$c - 1$} (f2)
      ;

      \node[cplace, label=left:{$\initial$}, right=1cm of f2] (i3) {};
      \node[transition, right=0.5cm of i3, label=above:{$t_\initial$}] (ti) {};
      
      \node[cplace, above right=0cm and 0.5cm of ti, label={[xshift=-10pt, yshift=-5pt]above:{$u$}}] (u) {};
      \node[cplace, below right=0cm and 0.5cm of ti, label={[xshift=-10pt, yshift=5pt]below:{$d$}}] (d) {};
      
      \node[transition, right=0.5cm of u, label={[xshift=12pt, yshift=-7pt]above:{$t_u$}}] (tu) {};
      \node[transition, right=0.5cm of d, label={[xshift=12pt, yshift=7pt]below:{$t_d$}}] (td) {};
      
      \node[cplace, below right=0cm and 0.5cm of tu,
            label=right:{$\output$}] (f3) {};
      
      \path[->]
      (i3) edge[] node[above] {} (ti)
      (ti) edge[] node[above] {} (u)
      (ti) edge[] node[above] {} (d)
      ;
      
      \path[->]
      (u)  edge[] node[above] {$c$} (tu)
      (d)  edge[] node[above] {}    (tu)
      (tu) edge[] node[above] {}    (f3)
      ;
      
      \path[->]
      (d)  edge[bend right=15] node[below] {$2$} (td)
      (td) edge[bend right=15] node[above] {}    (d)
      (td) edge[]              node[above] {}    (f3)
      ;
    \end{tikzpicture}
  \end{tabular}
  \caption{\emph{Top}: A workflow net $\pn_c$ that is $c$-unsound and
    $k$-sound for all $k \in [1..c-1]$. \emph{Bottom}: Three families
    of instances. \emph{Bottom left:} $\pn_{\text{sound-}c}$ is
    quasi-sound and $\ell c$-sound for all $\ell \in
    \Nn$. \emph{Bottom center:} $\pn_{\neg \text{quasi-}c}$ is not
    structurally quasi-sound. \emph{Bottom right:} $\pn_{\neg
      \text{sound-}c}$ is $\ell c$-quasi-sound for all $\ell \in \Nn$,
    but not structurally sound.}\label{fig:synthetic}
\end{figure}
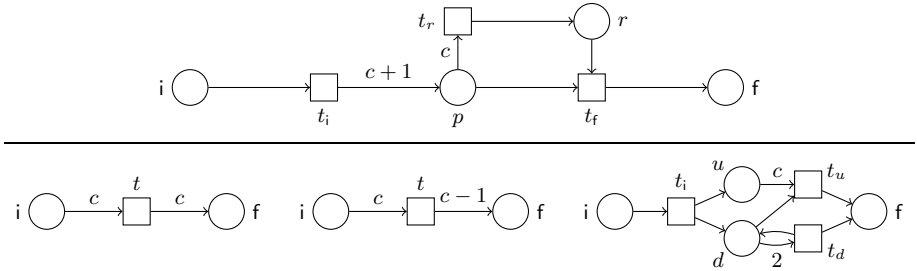

The naive approach to decide generalised soundness is to check
$k$-soundness for all $k$ until a counterexample is found or a bound
is exceeded. It is known that if a counterexample exists, then there
also is one of size at most exponential~\cite[Lemma~5.6
and~5.8]{BMO22}. The approach we chose for semi-deciding generalised
soundness is to check continuous soundness. Recall that continuous
soundness is a necessary (albeit not sufficient) condition, as shown
in~\Cref{thm:cont-necessary}.

In our evaluation, we used Woflan and LoLA to check generalised
soundness of the family for different $c$ by checking
$1$-sound, \dots, $c$-soundness, and compared the result to the time
needed for testing continuous soundness. Our main goal is to evaluate
whether checking continuous soundness is efficient enough to serve as
an inexpensive way to witness generalised unsoundness for nontrivial
instances.

\paragraph{Results.}

\begin{figure}
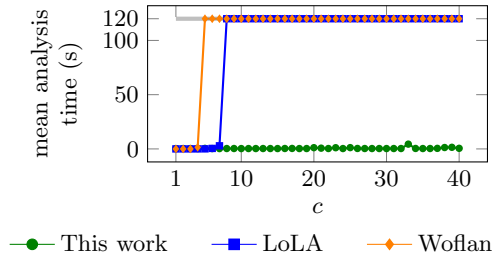

    \centering
    \plotcanvaswithextraargs{$c$}{\begin{tabular}{c} mean analysis \\ time (s)\end{tabular}}{
        \addplot[thick, color=colConti, mark=*, mark size=1pt] coordinates {(1,0.2959151268005371) (2,0.30132246017456055) (3,0.31386709213256836) (4,0.2971646785736084) (5,0.5014362335205078) (6,0.30652689933776855) (7,0.3016331195831299) (8,0.36017775535583496) (9,0.37421655654907227) (10,0.3196728229522705) (11,0.3366086483001709) (12,0.3380444049835205) (13,0.3582794666290283) (14,0.3457765579223633) (15,0.32504701614379883) (16,0.3910665512084961) (17,0.43927931785583496) (18,0.33383798599243164) (19,0.34503602981567383) (20,1.143829584121704) (21,0.6930875778198242) (22,0.3663506507873535) (23,1.1087050437927246) (24,0.3396890163421631) (25,1.1844286918640137) (26,0.41590428352355957) (27,0.3689877986907959) (28,0.34794139862060547) (29,0.3582954406738281) (30,0.3368706703186035) (31,0.360119104385376) (32,0.3632214069366455) (33,4.406553268432617) (34,0.42772960662841797) (35,0.3476724624633789) (36,0.3776247501373291) (37,0.5114471912384033) (38,1.285477638244629) (39,1.4889757633209229) (40,0.5513830184936523)};

        \addplot[thick, color=colLola, mark=square*, mark size=1pt] coordinates {(1,0.003349781036376953) (2,0.006882429122924805) (3,0.011688709259033203) (4,0.019080162048339844) (5,0.04892849922180176) (6,0.46179842948913574) (7,2.867579221725464) (8,120.0) (9,120.0) (10,120.0) (11,120.0) (12,120.0) (13,120.0) (14,120.0) (15,120.0) (16,120.0) (17,120.0) (18,120.0) (19,120.0) (20,120.0) (21,120.0) (22,120.0) (23,120.0) (24,120.0) (25,120.0) (26,120.0) (27,120.0) (28,120.0) (29,120.0) (30,120.0) (31,120.0) (32,120.0) (33,120.0) (34,120.0) (35,120.0) (36,120.0) (37,120.0) (38,120.0) (39,120.0) (40,120.0)};

        \addplot[thick, color=colWoflan, mark=diamond*, mark size=1pt] coordinates {(1,0.002) (2,0.005) (3,0.013) (4,1.502) (5,120.0) (6,120.0) (7,120.0) (8,120.0) (9,120.0) (10,120.0) (11,120.0) (12,120.0) (13,120.0) (14,120.0) (15,120.0) (16,120.0) (17,120.0) (18,120.0) (19,120.0) (20,120.0) (21,120.0) (22,120.0) (23,120.0) (24,120.0) (25,120.0) (26,120.0) (27,120.0) (28,120.0) (29,120.0) (30,120.0) (31,120.0) (32,120.0) (33,120.0) (34,120.0) (35,120.0) (36,120.0) (37,120.0) (38,120.0) (39,120.0) (40,120.0)};

        \addplot[ultra thick, color=gray, opacity=0.5] coordinates {(1,120.000) (40,120.000)};
    }{0.5\textwidth}{xtick={1,10,20,30,40}, ytick={0,50,100,120}, height=0.3\textwidth, legend columns=3, legend style={at={(0.5,-0.5)},anchor=north}}

    \ourlegend
    \caption{Time to check generalised soundness of $\pn_c$ for different values of $c$.
    Marks on the gray line at $120s$ denote timeouts.}\label{fig:synthetic-gen-sound-results}
\end{figure}

\Cref{fig:synthetic-gen-sound-results} depicts the results. Woflan and
LoLA show good performance for small values of $c$, but do not scale
well to larger values. They respectively time out for $c \geq 5$ and
$c \geq 8$. The instances are not free-choice, so LoLA and Woflan need
to explore the state-space for each $k \leq c$, which becomes
infeasible. For $c \geq 14,$ Woflan cannot even check $1$-soundness
within the time limit. LoLA can check $1$- and $2$-soundness for $c
\leq 28$, but cannot handle $2$-soundness for larger $c$. Continuous
soundness is efficiently verifiable even for $c = 40$. In particular,
we need less than $5s$ on all instances. The greatest time is at $c =
33$.  Further, at most $1s$ is needed on 34 out of 40 instances (mean
of $0.6s$).

\subsubsection{Structural soundness}

\paragraph{Benchmark instances.}

For structural soundness, recall that our decision procedure is based
on checking structural quasi-soundness
and obtaining some lower bound for the smallest number
for which the net is quasi-sound. Thus, we want to test on both
benchmark instances that are structurally quasi-sound and those that
are not.
We introduce three families of non-free-choice nets
for which structural soundness appears challenging.
These instances are defined at the bottom of
\Cref{fig:synthetic}. We respectively denote them
$\pn_{\text{sound-}c}$ (left), $\pn_{\neg \text{quasi-}c}$ (center)
and $\pn_{\neg \text{sound-}c}$ (right). We claim that:
$\pn_{\text{sound-}c}$ is $\ell c$-sound for all $\ell \in
    \Nn$; $\pn_{\neg \text{quasi-}c}$ is not structurally quasi-sound;
$\pn_{\neg \text{sound-}c}$ is $\ell c$-quasi-sound for all $\ell \in
    \Nn$, not $k$-quasi-sound for any other number $k \in \Nn$, and
not structurally sound.

For the experiments, our goal is twofold.
First, we want to evaluate whether utilizing continuous reachability to
decide structural quasi-soundness is more efficient than using
the known reduction to reachability described in~\cite[Lemma~2.1]{ctiplea2005structural}.
Woflan does not directly support checking reachability,
so we only compare with LoLA. Second, we want to evaluate whether the lower bound for the smallest
number for which the net is quasi-sound, which we dubbed $k_{\pn,\Qpos}$ towards the end of \Cref{sec:struct:sound},
is close to the actual smallest number, dubbed $k_{\pn}$.

A caveat of this evaluation is that we evaluate only on our synthetic instances,
and that computing $k_{\pn,\Qpos}$ is only one step in deciding structural soundness.
However,
we think that the evaluation on these hard synthetic instances can give
insights into the applicability on nontrivial real-world instances.

\paragraph{Results.}

\Cref{fig:synth-struct-results} compares the time needed to verify
structural reachability for LoLA and our prototype. For small
instances, LoLA sometimes performs very well, but we scale better for large
values. Of particular note is that in the absence of quasi-soundness,
LoLA will generate an infinite state-space, so will generally run out of time or memory.
In particular, LoLA times out for all $c$ on
$\N_{\neg \text{quasi-}c}$. It also times out for $c \geq 32$ on $\N_{\neg
  \text{sound-}c}$. On the other hand, continuous soundness never times out for the given
values of $c$. In fact, when we tested continuous soundness for much
larger values of $c$, we found that our implementation of continuous
reachability decides structural quasi-soundness for $N_{\neg
  \text{quasi-}c}$ in under $2s$ for $c = 20~000~000$.

We further found that for all instances, $k_{\pn,\Qpos} = k_\pn$, that is, our  lower bound exactly matches the 
smallest number for which the net is quasi-sound.
Thus, it only remains to decide $k_{\pn,\Qpos}$-quasi-soundness and $k_{\pn,\Qpos}$-soundness in order to decide structural soundness.
This is in contrast to the naive approach, which starts at $k=1$ and
checks $k$-quasi-soundness for each value up to $k_{\pn,\Qpos}$.

\begin{figure}
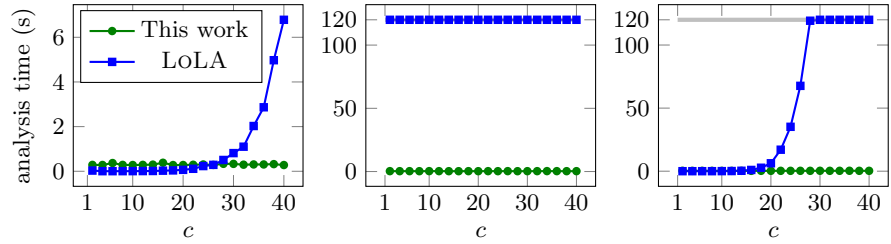

    \centering
    \input{figures/data/synthetic-struct-reach_sound.tex}%
    \input{figures/data/synthetic-struct-reach_unreach.tex}%
    \input{figures/data/synthetic-struct-reach_unsound.tex}%
    
    \caption{Time taken vs parameter $c$ for checking structural quasi-soundness using the reduction to reachability, and utilizing our approach to compute $k_{\pn,\Qpos}$, for each of the three families at the bottom of \Cref{fig:synthetic}: $\pn_{\text{sound-}c}$ (\emph{left}),$\pn_{\neg \text{quasi-}c}$ (\emph{center}), $\pn_{\neg \text{sound-}c}$ (\emph{right}).
        Note that the axis ranges differ.
        Marks on the gray line at $120s$ denote timeouts.}\label{fig:synth-struct-results}
\end{figure}

\section{Conclusion}
\label{sec:conclusion}
In this work, we have shown how reachability relaxations allow to
efficiently semi-decide generalised and structural soundness. Our
approach combines nicely with reduction rules, as they all preserve
relaxations. In particular, we have introduced continuous soundness as
an approximation of generalised soundness, and shown that it coincides
with other types of soundness for free-choice nets.

As part of future work, we plan to migrate our prototype into the
process mining framework ProM, to make the algorithms available to
practitioners.

\subsection*{Acknowledgements}
We thank Dirk Fahland and Eric Verbeek for their help with Woflan.
Michael Blondin was supported by a Discovery Grant from the Natural Sciences and Engineering Research Council of Canada (NSERC), and by the Fonds de recherche du Québec – Nature et technologies (FRQNT).

\bibliographystyle{splncs04}
\bibliography{references}

\clearpage
\appendix
\label{sec:appendix}
\section{Appendix}
\subsection{Missing proofs of \Cref{sec:relaxations}}

\lemContEquivReach*

\begin{proof}
  $\Leftarrow$) Let $b \cdot m \trans{\pi} b \cdot \m'$. Let $\beta
    \defeq 1 / b$. Let us prove that $\m \trans{\beta \cdot \pi} \m'$.
  To do so, we show that $b \cdot \m \trans{\FromTo{\pi}{1}{n}}
    \m_{n}$ implies $\m \ctrans{\FromTo{\pi}{1}{n}} \beta \cdot
    \m_{n}$. Let us proceed by induction on $n$. Assume that \[b \cdot m
    \trans{\FromTo{\pi}{1}{n}} \m_{n} \trans{t_{n+1}} \m_{n+1} \text{
      where } t_{n+1} \defeq \pi[n+1].\] By induction hypothesis, we
  have $\m \ctrans{\FromTo{\pi}{1}{n}} \beta \cdot \m_{n}$. Note that
  $\m_{n+1} = \m_{n} + \effect{t_{n+1}}$. So, $\beta \cdot \m_{n} +
    \beta \cdot \effect{t_{n+1}} = \beta \cdot \m_{n+1}$. Thus, $\beta
    \cdot t_{n+1}$ has the right effect to lead from $\m_n$ to
  $\m_{n+1}$. It only remains to show that $\beta \cdot t_{n+1}$ is
  enabled at $\m_{n}$. Note that $t_{n+1}$ is enabled at $\m_n$, hence
  by definition, $\m_{n}[p] \geq \pre{t_{n+1}}[p]$ for all $p \in
    P$. It follows that $\beta \cdot \m_{n}[p] > \beta \cdot
    \pre{t_{n+1}}[p]$, so $\beta \cdot t_{n+1}$ is enabled in $\m_n$.

  $\Rightarrow$) Let $\m \ctrans{\pi} \m'$. Let $\beta$ be the product
  of the scaling factors denominators along $\pi$. Let us show that $b
    \cdot \m \trans{\pi'} b \cdot \m'$. We establish the following for
  all $n$:
  \[
    \text{ if } \m \ctrans{\FromTo{\pi}{1}{n}} \m_n,
    \text{ then there exists } \pi'_n
    \text{ such that } b \cdot \m \trans{\pi'_{n}} b \cdot \m_n.
  \]
  Assume this holds for some $n$. Let $\alpha \cdot t_{n+1} = \pi[n]$.
  We show the following:
  \[
    \text{if } \m_{n} \ctrans{\alpha t_{n+1}} \m_{n+1},
    \text{then } b \cdot \m_{n} \trans{(t_{n+1})^{b \cdot \alpha}} b \cdot
    \m_{n+1}.
  \]
  First, let us argue that $b \cdot \alpha$ is an integer. Note that
  by the fact that scaling factors are chosen from $\ZeroOne$, it
  follows that $\alpha$ can be written as $u / d$ for some $u, d \in
    \N$ where $d \neq 0$. Further, note that $b$ was chosen as the
  product of all denominators of scaling factors along $\pi$. In
  particular, $d$ is a factor of $b$, so we have $b = d \cdot b'$ for
  some $b' \in \N$, and thus $b \cdot \alpha = d \cdot b' \cdot u / d
    = b' \cdot u$. Next, let us argue that $(t_{n+1})^{b \cdot \alpha}$
  has the right effect to lead from $b \cdot \m_{n}$ to $b \cdot
    \m_{n+1}$. Note that $\m_{n+1} = \m_{n} + \alpha \cdot
    \effect{t_{n+1}}$. So, $b \cdot \m_{n+1} = b \cdot \m_{n} + b \cdot
    \alpha \effect{t_{n+1}} = b \cdot \m_{n} + \effect{(t_{n+1})^{b
          \cdot \alpha}}$. It remains to argue that $(t_{n+1})^{b \cdot
        \alpha}$ is fireable from $b \cdot \m_{n}$. By $\m_{n}
    \ctrans{\alpha t_{n+1}} \m_{n+1}$, it follows that $\m_{n}[p] \geq
    \alpha \pre{t_{n+1}}[p]$ for all $p \in P$. Since $b \in \N$, it is
  the case that $b \cdot \m_{n}[p] \geq b \cdot \alpha
    \pre{t_{n+1}}[p]$, and hence we are done. \qed
\end{proof}

\propReductions*

\begin{proof}
  We will informally present the rules by the properties they
  preserve. For a formal definition of the rules, we refer
  to~\cite[Sect.~4.2]{bride2017reduction}. Most arguments apply to all
  $\D \in \{\N, \Z, \Qpos\}$ in the same way, thus usually we will not
  make a cumbersome case analysis.

  \medskip\noindent\emph{Rule $R_1$ (place removal).} This rule
  removes a place $p \in P$. Thus, $P' = P \setminus \set{p}$. It is
  guaranteed that there exist places $\{g_1, \ldots, g_n\} \subseteq
    P'$ such that the number of tokens in $p$ is the sum of tokens in
  those places. Hence, it suffices to define $R' \defeq \set{g_1,
      \ldots, g_n}$.

  \medskip\noindent\emph{Rules $R_2$ (transition removal) and $R_3$
    (loop removal).} For these rules, no place is removed and the
  reachability relation is preserved.

  \medskip\noindent\emph{Rules $R_4$ (transition-place removal) and
    $R_5$ (place-transition removal)}. These rules remove a place $p$
  and its only input (for $R_4$) or output (for $R_5$) transition
  $t$. Transition $t$ is merged with the output (for $R_4$) or input
  (for $R_5$) transitions. Thus, intuitively, the new transitions in
  $\pn'$ immediately consume a token whenever it was put in $p$. This
  clearly proves that $\m' \dreach \n'$ in $\pn'$ iff $\pi_0(\m')
    \dreach \pi_0(\n')$ in $\pn$. Moreover, the requirements on when the rule can be applied
  imply either $\pre{t}[p] = 1 \implies \pre{t}[p'] = 0$; or $\post{t}[p] = 1 \implies \post{t}[p'] =0$.

  It remains to prove the final part when $\D \neq \Z$. Suppose there exists $\m$ such that
  $\imarked{1} \dreach \m \not \dreach \fmarked{1}$ in $\pn$. Suppose
  first, that there exists $\n$ such that $\m \dreach \n$ and $\n[p] =
    0$. By the previous case, we have $\pi(\n) \not \dreach
    \fmarked{1}$, as otherwise we reach the contradiction $\m \dreach \n
    \dreach \fmarked{1}$. We define $\m' \defeq \pi(\n)$. In the second
  case, we can assume that for all $\n$, $\m \dreach \n$ implies
  $\n[p] > 0$ (here we use $\D \neq \Z$). In particular, $\m[p] >
    0$. Let $T_p \defeq \set{t \in T \mid \post{t}[p] = 1}$. We conclude
  from the additional constraints on $R_4$ and $R_5$
  (see~\cite{bride2017reduction}). These imply that in our case for
  every $t \in T_p$ and for all $r \in P$:
  \begin{enumerate}
    \item if $r \neq p$, then $\post{t}[r] = 0$;\label{item1}

    \item if $\pre{t}[r] = 1$ and $t' \neq t$, then $\pre{t'}[r] =
            0$.\label{item2}
  \end{enumerate}
  Let $\rho$ be the run witnessing $\imarked{1} \dreach \m$. Let
  $\rho'$ be the subrun of transitions in $T_p$ that occur in $\rho$
  (it is nonempty since $\m[p] > 0$). By \Cref{item1} we can remove
  (or downscale if $\D = \Qpos$) a suffix of $\m[p]$ transitions in
  $\rho'$ (because it removes tokens only from $p$). We obtain a
  marking $\m_1$ such that: $\m_1[p] = 0$; the tokens in $\m_1[r]$ for
  all removed $t \in T_p$ such that $\pre{t}[r] = 1$ have increased
  accordingly; and $\m_1[p'] = \m[p]$ otherwise. We claim that $\m' =
    \pi(\m_1)$ satisfies the proposition. Indeed, if there is a run
  $\pi(\m_1) \dreach \fmarked{1}$ in $\pn'$ then by \Cref{item2} we
  can extract from this a run $\m \dreach \fmarked{1}$ in $\pn$, which
  would be a contradiction.

  \medskip\noindent\emph{$R_6$ (ring removal).} This rule merges a set
  of places $\{p_1, \ldots, p_k\} \subseteq P$ into a single place
  $p_1$\footnote{In~\cite{bride2017reduction}, $p_1$ is also removed
    and a new place $p$ is added, but this is trivially
    equivalent.}. Thus, $P' = P \setminus \set{p_2, \ldots, p_k}$. The
  conditions are that the tokens can be transferred arbitrarily
  between the places $p_1, \ldots, p_k$, which is enough to prove the
  proposition. \qed
\end{proof}

\subsection{Missing proofs of \Cref{sec:gen:sound}}

\propIntBoundRed*

\begin{proof}
  We will need to invoke \Cref{prop:zunbounded:charac} which is stated
  after \Cref{prop:int:bound:red} in the main text. Note that this ordering
  is simply for the sake of presentation, there is no circular
  dependency, the proof of \Cref{prop:zunbounded:charac} is
  self-contained.

  By \Cref{prop:zunbounded:charac}, being integer unbounded is
  equivalent to the existence of $\bv > \vec{0}$ such that $\0 \zreach
    \bv$. Let $\pn$ and $\pn'$ be the workflow nets before and after the
  reduction. We invoke \Cref{prop:reductions} depending on the applied
  reduction rule, and show that $\pn$ is integer unbounded iff $\pn'$
  is integer unbounded.

  \begin{itemize}
    \item \emph{Rule $R_1$}. Suppose $\0 \zreach \bv > \vec{0}$ in
          $\pn$. We have $\pi(\bv) > \0$, since $\bv[p] = \sum_{r \in R'}
            \bv[r]$. Thus, $\pi(\bv)[r] > \0$ for at least one $r \in R'$. The
          converse implication is trivial.

    \item \emph{Rules $R_2$ and $R_3$}. This is trivial because
          $\zreach$ is preserved.

    \item \emph{Rules $R_4$ and $R_5$}. We have $\m' \zreach \n'$ in
          $\pn'$ iff $\pi_0(\m') \zreach \pi_0(\n')$ in $\pn$. Thus, if
          $\vec{0} \zreach \bv' > \vec{0}$ in $\pn'$, then $\vec{0} \zreach
            \pi_0(\bv') > \vec{0}$ in $\pn$. Conversely, suppose that $\vec{0}
            \ztrans{\rho} \bv > \0$ in $\pn$. If $\bv[p] = 0$, then we are
          done. Otherwise, by \cref{prop:reductions} for all $t \in T$ and
          $p' \in P'$: either $\pre{t}[p] = 1 \implies \pre{t}[p'] = 0$; or
          $\post{t}[p] = 1 \implies \post{t}[p'] =0$. Let us assume the
          former and let $T_p \defeq \set{t \in T \mid \pre{t}[p] = 1}$. By
          removing $\bv[p]$ transitions from $T_p$ in $\rho$, we get
          $\vec{0} \zreach \bv' > \0$ and $\bv'[p] = 0$. Thus, $\0 \zreach
            \pi(\bv') > \0$ in $\pn'$ as required. In the latter case, we
          proceed similarly, but one need to add some transitions to $\rho$
          that will move the tokens from $p$ to other places.

    \item \emph{Rule $R_6$.} In this case, if $\m[p_i] = \n[p_i] = 0$
          for $i > 1$, then $\m \zreach \n$ in $\pn$ iff $\pi(\m) \zreach
            \pi(\n)$ in $\pn'$. Thus, $\0 \zreach \bv' > \0$ in $\pn'$ clearly
          implies $\0 \zreach \bv > \0$ in $\pn$. Conversely, if $\0 \zreach
            \bv > \0$ in $\pn$, then we know that $\bv \zreach \bv_1$ where
          $\bv_1[p_1] = \sum_{i = 1}^k \bv_1[p_i]$ and $\bv_1[p_i] = 0$ for
          $i > 1$. So, $\0 \zreach \pi(\bv_1) > \0$ in $\pn'$. \qed
  \end{itemize}
\end{proof}

\propIntBoundLin*

\begin{proof}
  Let $\pn = (P, F, T)$ be a Petri net. By
  \Cref{prop:zunbounded:charac}, $(\pn, \m)$ is integer bounded iff
  there exists $\m' > \vec{0}$ such that $\vec{0} \zreach \m'$. The
  latter amounts to the existence of $\pi \in T^*$ such that
  $\effect{\pi} > \vec{0}$. So, this is equivalent to this system:
  $\exists \vec{x} \in \N^T : \sum_{t \in T} \vec{x}[t] \cdot
    \effect{t} > \vec{0}$. It is readily seen that this system is
  equivalent to the one where $\vec{x} \in \Qpos^T$. Indeed, by
  homogeneity ($\vec{0}$ on the right-hand side), a rational solution
  can be scaled so that it becomes an integral solution.

  The polynomial time decidability of integer boundedness follows
  immediately from the fact that linear programming can be solved in
  polynomial time (\eg, see~\cite{Sch86}). \qed
\end{proof}

\propContBoundRed*

\begin{proof}
  Let $\pn$ and $\pn'$ be the workflow nets before and after the
  reduction. We invoke \cref{prop:reductions} depending on the applied
  reduction rule and show that $\pn$ is continuous sound iff $\pn'$ is
  continuous sound.

  \begin{itemize}
    \item \emph{Rule $R_1$}. Suppose $\imarked{1} \creach \m' \not
            \creach \fmarked{1}$ in $\pn'$. Let $\m$ be such that $\pi(\m) =
            \m'$ and $\m[p] = \sum_{r \in R'} \m[r']$. Then $\imarked{1}
            \creach \m$ and $\m \creach \fmarked{1}$ would imply $\m'
            \creach \fmarked{1}$, which is a contradiction. The converse
          implication is trivial.

    \item \emph{Rules $R_2$ and $R_3$}. This is trivial because
          $\creach$ is preserved.

    \item \emph{Rules $R_4$ and $R_5$}. Suppose $\imarked{1} \creach
            \m' \not \creach \fmarked{1}$ in $\pn'$. We have $\imarked{1}
            \creach \pi_0(\m')$. If $\pi_0(\m') \creach \fmarked{1}$ in
          $\pn$ then, since $\fmarked{1}[p] = 0$, we obtain $\m' \creach
            \fmarked{1}$ in $\pn'$, which is a contradiction. The other
          implication is explicitly written in \Cref{prop:reductions}.

    \item \emph{Rule $R_6$}. Suppose $\imarked{1} \creach \m \not
            \creach \fmarked{1}$ in $\pn$. We have $\m \zreach \m_1$ where
          $\m_1[p_1] = \sum_{i = 1}^k \bv_1[p_i]$ and $\m_1[p_i] = 0$ for
          $i > 1$. If $\pi(\m_1) \creach \fmarked{1}$, then $\m_1 \creach
            \fmarked{1}$, which is a contradiction. The other implication is
          trivial. \qed
  \end{itemize}
\end{proof}

\thmCoNP*

\begin{proof}[of coNP-hardness]
  Recall that, in the main text, we have defined a workflow net
  $\pn_\varphi$ from a formula in DNF, and claimed that
  $\pn_{\varphi}$ is continuously sound iff $\varphi$ is a
  tautology. It remains to show the implication from right to left.

  $\Rightarrow$) Suppose $\varphi$ is a tautology. Let us first make
  an observation. Consider some sequence $b_1, \ldots, b_m \in \{0,
    1\}$, and the marking $\m = \set{p_{i,b_i} \colon 1 \mid i \in
      [1..m]}$. Since $\varphi$ is a tautology, there exists a clause
  $C_j$ that satisfies the assignment $x_i \defeq b_i$. Let $i_1,
    \ldots, i_{\ell}$ be the indices of variables not occurring in
  $C_j$. It is easy to see that
  \begin{align*}
    \imarked{i}
    \trans{t_\text{init} v_{1, b_1} \cdots v_{m, b_m}}
    \m
    \trans{c_j \overline{v}_{i_1,b_{i_1}} \cdots
    \overline{v}_{i_{\ell}, b_{i_{\ell}}}}
    \set{r_i \colon 1 \mid i \in [1..m]}
    \trans{t_{\text{fin}}}
    \fmarked{1}.
  \end{align*}
  By \cite[Lemma~12(1)]{FH15}, we rescale the continuous run, \ie\ for
  all $\alpha \in \ZeroOne$:
  \begin{align}
    \imarked{\alpha} \creach \alpha \m \creach \sum_{i=1}^m\set{r_i
      \colon \alpha} \creach \fmarked{\alpha}.\label{eq:1}
  \end{align}

  Let us establish some invariants. Let $A_i \defeq \set{\initial,
      p_{i,?}, p_{i,1}, p_{i,0}, r_i, \output}$ and $B_i \defeq
    \set{\initial, p_{\text{cl}}, q_i, r_i, \output}$. First, for all
  transition $t \in T$ and all index $i \in [1..m]$, we have
  \begin{align*}
    \sum_{p \in A_i} \pre{t}[p] = \sum_{A_i} \post{t}[p],
    \text{ and }
    \sum_{p \in B_i} \pre{t}[p] = \sum_{p \in B_i} \post{t}[p].
  \end{align*}
  We say that a marking $\n$ is \emph{reachable} if $\imarked{1}
    \creach \n$. From the above invariants, it follows that every
  reachable marking $\n$ satisfies
  \begin{align}
    \sum_{p \in A_i}\n[p] = 1
    \text{ and }
    \sum_{p \in B_i}\n[p] = 1.\label{eq:2}
  \end{align}

  Note that, from \Cref{eq:2}, every reachable marking $\n$ satisfies
  \begin{align}
    \n[p_{i,?}] + \n[p_{i,1}] + \n[p_{i,0}]
    = \n[p_{\text{cl}}] + \n[q_{i}].\label{eq:3}
  \end{align}

  We further have this remaining invariant for all $t \in T$ and $i, j
    \in [1..m]$:
  \begin{align*}
    \pre{t}[q_i] + \pre{t}[r_i] + \post{t}[q_i] + \post{t}[r_i] =
    \pre{t}[q_j] + \pre{t}[r_j] + \post{t}[q_j] + \post{t}[r_j].
  \end{align*}
  Since all places $q_i$ and $r_i$ are empty in $\imarked{1}$, every
  reachable marking $\n$ satisfies:
  \begin{align}
    \n[q_i] + \n[r_i] = \n[q_j] + \n[r_j].\label{eq:4}
  \end{align}

  We are ready to prove continuous soundness. Let $\n$ be a reachable
  marking. By \Cref{eq:1}, we can assume w.l.o.g.\ that $\n[\initial]
    = 0$, as we can move $\alpha$ remaining token to
  $\output$. Similarly, we can assume w.l.o.g.\ that $\n[p_{i,?}] = 0$
  for all $i \in [1..m]$ as otherwise we can fire transition $v_{i,1}$
  or $v_{i,0}$ properly scaled (the choice is
  irrelevant). Consequently, by \Cref{eq:3}, we have $\n[p_{i,1}] +
    \n[p_{i,0}] \ge \n[q_{i}]$. Therefore, by firing transitions
  $\overline{v}_{i,0}$ and $\overline{v}_{i,1}$, scaled appropriately,
  we obtain $\n \creach \n'$ with $\n'[q_i] = 0$ for all $i \in
    [1..m]$. By \Cref{eq:4}, $\n'[r_i] = \n'[r_j]$ for all $i,j \in
    [1..m]$. Hence, by firing $t_{\text{fin}}$ scaled by $\n'[r_1]$, we
  get $\n' \creach \n''$ where $\n''$ has zero token in each place,
  except possibly places $P_\text{var}' \defeq \{p_{i,b} \mid i \in
    [1..m], b \in \{0, 1\}\}$, place $p_\text{cl}$ and place $\output$.

  Let us explain how to empty $P_\text{var}' \cup \{p_\text{cl}\}$, if
  this is not already the case. For each $i \in [1..m]$, among places
  $p_{i,1}$ and $p_{i,0}$, we write $p_{i,\max}$ and $p_{i,\min}$ so
  that $\n''[p_{i,\max}] \ge \n''[p_{i,\min}]$ (if they are equal,
  then the choice is not important). Let $S \defeq \set{p_{i,\min}
    \mid i \in [1..m], \n''[p_{i,\min}] > 0}$. We consider two cases.

  \emph{Case 1: $S = \emptyset$.} By the left part of \Cref{eq:2}, and
  by \Cref{eq:3}, the following holds for all $i, j \in [1..m]$:
  \begin{align}
    \n''[p_{i,1}] + \n''[p_{i,0}] = \n''[p_{j,1}] + \n''[p_{j,0}] =
    \n''[p_\text{cl}].\label{eq:5}
  \end{align}
  Thus, there exist $\alpha \in \ZeroOne$ and $b_1, \ldots, b_m \in
    \{0, 1\}$ such that $\n''[p_\text{cl}] = \n''[p_{i, b_i}] = \alpha$
  and $\n''[p_{i, \neg b_i}] = 0$ for all $i \in [1..m]$. Since
  $\varphi$ is a tautology, we can fire some transition $c_j$ scaled
  by $\alpha$, which empties place $p_\text{cl}$, and consequently
  $P_\text{var}'$ as well by \Cref{eq:5}.

  \emph{Case 2: $S \neq \emptyset$.} Let $i \in [1..m]$ be such that
  $\n''[p_{i, \min}] > 0$ is minimal, and let $\alpha \defeq
    \n''[p_{i, \min}]$. Let $\n''' \defeq \set{p_{\text{cl}} \colon \alpha, p_{i, \min} \colon
      \alpha} + \set{p_{j, \max} \colon \alpha \mid j \neq i}$. Note
  that $\n''' \le \n''$. We can apply \Cref{eq:1} and obtain
  \[
    \n'' = (\n'' - \n''') + \n''' \creach (\n'' - \n''') + \set{\output
      \colon \alpha}.
  \]
  Performing this operation decreases the size of $S$. Hence, it can
  be repeated at most $m$ times until $S$ becomes empty, which has
  been handled in case~1. \qed
\end{proof}

\subsection{Missing proofs of \Cref{sec:struct:sound}}

\propIntegerK*

\begin{proof}
  Let $\pn = (P, T, F)$ be a workflow net. Let us first establish
  $k_{\pn,\Z} \leq k_{\pn,\Qpos}$. Let $\pi = \lambda_1 t_1 \cdots
    \lambda_n t_n$ be a continuous run such that $\imarked{k}
    \ctrans{\pi} \omarked{k}$ and $\vec{\pi} \in \N^T$. In particular,
  we have
  \begin{align*}
    \omarked{k}
     & = \imarked{k} + \sum_{i \in [1..n]} \lambda_i \cdot \effect{t_i} \\
     & = \imarked{k} + \sum_{t \in T} \sum_{i \in [1..n] : t_i = t}
    \lambda_i \cdot \effect{t}                                          \\
     & = \imarked{k} + \sum_{t \in T} \vec{\pi}[t] \cdot \effect{t}.
  \end{align*}
  As $\vec{\pi} \in \N^T$, we obtain $\imarked{k} \ztrans{\pi}
    \omarked{k}$. Consequently, $k_{\pn,\Z} \leq k_{\pn,\Qpos}$.

  The inequality $k_{\pn,\Qpos} \leq k_\pn$ follows immediately from
  the fact that $\imarked{k} \trans{\pi} \omarked{k}$ implies
  $\imarked{k} \ctrans{\pi} \omarked{k}$ (with all scaling factors set to $1$).

  It remains to argue that $k_{\pn,\Z}$ and $k_{\pn,\Qpos}$ can be
  obtained as described. By definition of integer reachability,
  $k_{\pn,\Z}$ is the value obtained from this program:
  \[
    \min k \text{ subject to } k \in \Nn, \vec{x} \in \N^T
    \text{ and }
    \imarked{k} + \sum_{t \in T} \vec{x}[t] \cdot \effect{t} =
    \omarked{k}.
  \]
  For $k_{\pn,\Qpos}$, we use the fact that there is polynomial-time
  constructible formula $\psi_{\pn}$ from existential linear
  arithmetic such that $\psi(\m, \m', \vec{x})$ holds iff there is
  a continuous run $\pi$ that satisfies $\m \ctrans{\pi} \m'$ and
  $\vec{x} = \vec{\pi}$~\cite{BFHH17}. So, it suffices to take
  \[
    \varphi(k) \defeq \exists \vec{x} \in \N^T : \psi(\imarked{k},
    \fmarked{k}, \vec{x}) \tag*{\qed}.
  \]
\end{proof}

\subsection{Missing proofs of \Cref{sec:freechoice}}

Recall the following unproven lemma from the main text.

\lemAllThree*

For the sake of readability, we prove each item of
\Cref{lem:all:three} as its own lemma.

\begin{lemma}
  Let $\pn = (P, T, F)$ be a free-choice Petri net, and let $\m \in
    \N^P$. It is the case that $\m \reach \m'$ for some marking $\m'$
  such that $L(\m') = F(\m')$.
\end{lemma}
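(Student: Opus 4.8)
The plan is to observe that the quasi-live set can only shrink along the reachability relation, and then to pass to a reachable marking at which it has stabilized; at such a marking the live and quasi-live sets are forced to coincide. Notably, this argument will not use the free-choice hypothesis at all.

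First, I would establish the monotonicity of $F(\cdot)$ along $\reach$: if $\m \reach \n$, then $F(\n) \subseteq F(\m)$. Indeed, if $t \in F(\n)$, then there is a marking $\n'$ with $\n \reach \n' \trans{t}$; composing with $\m \reach \n$ yields $\m \reach \n' \trans{t}$, so $t \in F(\m)$. Hence the quasi-live set never grows as transitions are fired.

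Second, since $F(\n) \subseteq T$ for every marking $\n$ and $T$ is finite, I would choose among all markings reachable from $\m$ one, call it $\m'$, that minimizes $|F(\cdot)|$ (equivalently, that is inclusion-minimal for $F$). By the monotonicity above, every $\n$ with $\m' \reach \n$ satisfies $F(\n) \subseteq F(\m')$; and since $\m \reach \m' \reach \n$, the minimality of $|F(\m')|$ forces $F(\n) = F(\m')$. Thus $F$ is constant, equal to $F(\m')$, on all markings reachable from $\m'$. I would then unfold the definition of liveness: $t \in L(\m')$ iff $t$ is quasi-live in $(\pn, \n)$, i.e.\ $t \in F(\n)$, for every $\n$ with $\m' \reach \n$, so that $L(\m') = \bigcap_{\m' \reach \n} F(\n)$. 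Since each such $F(\n)$ equals $F(\m')$ and $\m'$ itself appears in the intersection, this intersection is exactly $F(\m')$, giving $L(\m') = F(\m')$ as desired.

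The main point is that there is essentially no obstacle here: the only nontrivial ingredient is the monotonicity observation of the first step, after which the result follows from finiteness of $T$ together with a direct unfolding of the definitions of $F(\cdot)$ and $L(\cdot)$. In particular, free-choiceness plays no role in this item (it is needed only for the later parts of \Cref{lem:all:three}), and the argument applies verbatim to arbitrary Petri nets.
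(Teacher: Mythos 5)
Your proof is correct and follows essentially the same route as the paper's: both rest on the observation that the quasi-live set $F(\cdot)$ can only shrink along $\reach$ and is bounded by the finite set $T$, so it must stabilize at some reachable marking, where $L$ and $F$ coincide. The only cosmetic difference is that you phrase this as picking a reachable marking minimizing $|F(\cdot)|$, whereas the paper iteratively removes one non-live quasi-live transition at a time (at most $|T|$ rounds); your remark that free-choiceness is not needed for this item is also consistent with the paper's argument.
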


\begin{proof}
  If $F(\m) = L(\m)$ holds, then we are done by taking $\m' \defeq
    \m$. Otherwise, let $t \in F(\m) \setminus L(\m)$. Since $t$ is not
  live in $(\pn, \m)$, there exists a marking $\m' \in \N^P$ that
  satisfies $\m \reach \m'$ and $t \notin F(\m')$. Therefore, we have
  $F(\m') \subseteq F(\m) \setminus \{t\}$. This means that $|F(\m')|
    < |F(\m)$. Since $L(\m') \subseteq F(\m')$, we can repeat this
  argument (up to $|T|$ times) until obtaining $L(\m') = F(\m')$. \qed
\end{proof}

For a run $\sigma$, let us define $\parikh{\sigma}\colon T \to \N$,
where for each $t \in T$, $\parikh{\sigma}[t]$ is the number of times
$t$ occurs in $\sigma.$

\begin{lemma}
  Let $\pn$ be a free-choice workflow net, let $c \in \Nn$, and let
  $\m \in \N^P$ be such that $L(\m) = F(\m)$. It is the case that $L(c
    \cdot \m) = F(c \cdot \m) = F(\m)$.
\end{lemma}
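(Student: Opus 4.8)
The goal is the three equalities $F(\m)=F(c\cdot\m)=L(c\cdot\m)$. The plan is to reduce them to two inclusions. Since a live transition is in particular quasi-live, we always have $L(c\cdot\m)\subseteq F(c\cdot\m)$. Hence it suffices to establish
\[
F(\m)\subseteq L(c\cdot\m)\quad\text{and}\quad F(c\cdot\m)\subseteq F(\m),
\]
for then the chain $F(\m)\subseteq L(c\cdot\m)\subseteq F(c\cdot\m)\subseteq F(\m)$ collapses, forcing all three sets to coincide with $F(\m)$. As a sanity check, the inclusion $F(\m)\subseteq F(c\cdot\m)$ also follows directly from monotonicity of the token game, since $c\geq 1$ gives $c\cdot\m\geq\m$; but we recover it for free from the collapse.

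Two facts drive the argument. First, the free-choice structure: every place of $\support{\pre{t}}$ feeds only transitions whose pre-set equals $\pre{t}$, so a token sitting in an input place of $t$ can be removed only by firing a transition of $t$'s own cluster; and since all arc weights are $1$, $t$ is enabled exactly when each place of $\support{\pre{t}}$ carries at least one token. Second, the scaling bridge of \Cref{lem:cont-equiv-reach}, which translates between ordinary runs from $c\cdot\m$ and continuous runs from $\m$: from $c\cdot\m\reach\m'$ one obtains $\m\creach (1/c)\cdot\m'$, while discrete behaviour from $\m$ lifts back up to $c\cdot\m$.

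For $F(c\cdot\m)\subseteq F(\m)$, I would take $t\in F(c\cdot\m)$, witnessed by $c\cdot\m\reach\m'$ with $\m'\geq\pre{t}$, and push this to a continuous run from $\m$ reaching a marking that is strictly positive on all of $\support{\pre{t}}$. It then remains to turn this continuous simultaneous marking of $\support{\pre{t}}$ into a discrete one reachable from $\m$. Here the free-choice property is essential: because the input places of $t$ lose tokens only through firings of $t$'s cluster, one can mark them one at a time without a later step stealing an already-placed token, yielding a genuine reachable marking of $\pn$ from $\m$ that enables $t$, i.e.\ $t\in F(\m)$. For the remaining inclusion $F(\m)\subseteq L(c\cdot\m)$, I would fix $t\in F(\m)=L(\m)$ and an arbitrary $\m'$ with $c\cdot\m\reach\m'$, relate $\m'$ to the behaviour available from $\m$ via the scaling bridge, and then exploit that $t$ is \emph{live} (not merely quasi-live) from $\m$ together with the same cluster argument to produce a continuation re-enabling $t$. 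The hypothesis $L(\m)=F(\m)$ is used precisely here, to guarantee that the transitions encountered along runs from $c\cdot\m$ are exactly those of $F(\m)$, all of which remain re-enableable.

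The main obstacle is the interface between the continuous and discrete worlds, compounded by the fact that liveness is not monotone in the marking: scaling a marking preserves neither discrete reachability nor liveness in general. Consequently, the technical heart of the proof is the free-choice ``token non-stealing'' argument that realises a continuously available simultaneous marking of a cluster as a genuine discrete one. This single mechanism is what makes both $F(c\cdot\m)\subseteq F(\m)$ and the liveness transfer go through, and I expect it to be where essentially all the work lies.
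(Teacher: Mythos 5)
Your decomposition into the chain $F(\m)\subseteq L(c\cdot\m)\subseteq F(c\cdot\m)\subseteq F(\m)$ is fine and matches the paper's overall structure, but your plan for the crucial inclusion $F(c\cdot\m)\subseteq F(\m)$ has a genuine gap: you never use the hypothesis $L(\m)=F(\m)$ there (you explicitly reserve it for the liveness transfer), yet the inclusion is false without it. Concretely, take places $p,q,z$ and transitions $a,b,t$ with $\pre{a}=\pre{b}=\set{p\colon 1}$, $\post{a}=\set{q\colon 1}$, $\post{b}=\set{z\colon 1}$, $\pre{t}=\set{q\colon 1,z\colon 1}$. This is free-choice, and from $\m=\set{p\colon 1}$ one has $\m\creach\set{q\colon 1/2,\,z\colon 1/2}$, which is strictly positive on $\support{\pre{t}}$, and no transition of the run consumes from $\support{\pre{t}}$ --- so your ``scaling bridge plus token non-stealing'' mechanism applies verbatim --- yet $t\notin F(\m)$ while $t\in F(2\cdot\m)$. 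The failure mode is not that a later step steals an already-placed token; it is that after marking $q$ you can no longer \emph{produce} into $z$, because the producing transition $b$ cannot be re-enabled from $\set{q\colon 1}$. Marking the places of $\support{\pre{t}}$ ``one at a time'' requires re-enabling the producing transitions after each step, and that is exactly where $L(\m)=F(\m)$ must be invoked. More generally, passing from continuous coverability of $\support{\pre{t}}$ from $\m$ to discrete coverability from $\m$ is precisely the kind of de-relaxation that fails without an extra liveness assumption.

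The paper's proof of this inclusion is purely discrete and shows where the hypothesis belongs. Given $t\in F(c\cdot\m)$ witnessed by $c\cdot\m\trans{\sigma_1 t}$ with $\sigma_1=t_1\cdots t_n$, one may assume all $t_i\in F(\m)=L(\m)$; since each $t_i$ is \emph{live} at $\m$, one builds a run $\sigma_2=\phi_1 t_1\phi_2 t_2\cdots\phi_n t_n$ from $\m$ itself, where each filler $\phi_i$ re-enables $t_i$. The free-choice property then guarantees that no transition occurring in $\sigma_2$ consumes from $\support{\pre{t}}$ (otherwise it would share, hence equal, $\pre{t}$ and force $t\in F(\m)$ immediately), so the tokens the $t_i$ deposit into $\support{\pre{t}}$ accumulate and $\m\trans{\sigma_2 t}$. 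Your non-stealing observation is the right second ingredient, but the first ingredient --- re-firing the producers from $\m$ via liveness --- is missing from your proposal, and without it the argument collapses on the example above. (The paper then obtains $L(c\cdot\m)=F(c\cdot\m)$ almost for free from $L(\m)\subseteq L(c\cdot\m)\subseteq F(c\cdot\m)$ together with $L(\m)=F(\m)=F(c\cdot\m)$, so the extra continuous machinery you plan for that direction is also unnecessary.)
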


\begin{proof}
  We first show that $F(c \cdot \m) = F(\m)$, and then that $L(c \cdot
    \m) = F(c \cdot \m)$.

  We trivially have $F(c \cdot \m) \supseteq F(\m)$. For the sake of
  contradiction, suppose there exists a transition $t \in
    F(c \cdot \m)$ such that $t \notin F(\m)$. Let $\sigma_1$ be a run
  such that $c \cdot \m \trans{\sigma_1 t}$. Without loss of
  generality, we may assume that $\parikh{\sigma_1} \subseteq
    F(\m)$. Indeed, if there is some $t' \in \parikh{\sigma_1}$ such
  that $t' \notin F(\m)$, then we can shorten $\sigma_1$ and take the
  shortened run which enables $t'$ instead.

  Let $\sigma_1 = t_1 t_2 \cdots t_n$. Recall that $t_i \in L(\m) =
    F(\m)$ for each $t_i$, that is, from any marking reachable from
  $\m$, we can reach a marking that enables $t_i$. Therefore, we can
  define a run $\sigma_2 \defeq \phi_1 t_1 \phi_2 t_2 \cdots \phi_n
    t_n$, where $\phi_i$ is a run from $\m + \effect{\phi_1
      t_1 \cdots \phi_{i-1} t_{i-1}}$ that enables $t_i$.

  If there exists a transition $s$ in the run $\sigma_2$ such that $\supppre{s}
    \cap \supppre{t} \neq \emptyset$, then $\pre{s} = \pre{t}$ as $\pn$
  is free-choice. Hence, since $s \in F(\m)$, we obtain $t \in F(\m)$,
  which is a contradiction. Thus no transition in $\sigma_2$ can consume tokens from places in $\supppre{t}$.
  Since $c \cdot \m \trans{\sigma_1 t}$, we know that
  \begin{align*}
    \supppre{t} \subseteq  \support{c \cdot\m} \cup \bigcup_{i = 1}^n \supppost{t_i} = \support{\m} \cup \bigcup_{i = 1}^n \supppost{t_i}.
  \end{align*}
  Altogether, this means that the transitions $t_i$ put enough tokens such that all places in $\supppre{t}$ are marked, and that $\sigma_2$ cannot consume any of these tokens.
  Therefore,
  $\m \trans{\sigma_2 t}$, which is a contradiction.

  It remains to prove that $L(c \cdot \m) = F(c \cdot \m)$. We have
  $L(\m) \subseteq L(c \cdot \m) \subseteq F(c \cdot \m)$. Since
  $L(\m) = F(\m) = F(c \cdot \m)$, these inclusions are in fact
  equalities, and we are done. \qed
\end{proof}

\begin{lemma}
  Let $\pn = (P, F, T)$ be a free-choice workflow net, let $c \in \Nn$
  and let $\m \in \N^P$ be such that $L(c \cdot \m) =
    F(c \cdot \m)$. If $c \cdot \m \reach \fmarked{c}$ and $(\pn,
    c \cdot \m)$ is bounded, then $\m = \fmarked{1}$.
\end{lemma}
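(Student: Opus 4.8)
The plan is to show that the final marking $\fmarked{c}$ is dead and then use the hypothesis $L(c \cdot \m) = F(c \cdot \m)$ to force $c \cdot \m$ to be dead as well. First I would check that no transition is enabled at $\fmarked{c}$: since $\pn$ is a workflow net we have $\pre{t}[\output] = 0$ for every $t \in T$, while every transition lies on a path from $\initial$ to $\output$ and therefore has $\pre{t} \neq \vec{0}$. As $\fmarked{c}$ carries tokens only in $\output$, enabling any $t$ would demand $\pre{t}[p] = 0$ for all $p \neq \output$ and hence $\pre{t} = \vec{0}$, a contradiction. Thus $\fmarked{c}$ enables nothing, $\Reach{\pn, \fmarked{c}} = \{\fmarked{c}\}$, and consequently $F(\fmarked{c}) = \emptyset$.

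Next I would deduce that $F(c \cdot \m) = \emptyset$. Assume for contradiction that some $t \in F(c \cdot \m)$. By the hypothesis $L(c \cdot \m) = F(c \cdot \m)$, the transition $t$ is live in $(\pn, c \cdot \m)$. Since $c \cdot \m \reach \fmarked{c}$, liveness forces $t$ to be quasi-live in $(\pn, \fmarked{c})$, i.e.\ $t \in F(\fmarked{c})$, contradicting $F(\fmarked{c}) = \emptyset$. Hence $F(c \cdot \m) = \emptyset$, which means no transition can ever be fired starting from $c \cdot \m$, so $\Reach{\pn, c \cdot \m} = \{c \cdot \m\}$. Combined with $c \cdot \m \reach \fmarked{c}$, this gives $c \cdot \m = \fmarked{c} = c \cdot \fmarked{1}$, and since $c \geq 1$ we conclude $\m = \fmarked{1}$.

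The only substantive point to verify is that $\fmarked{c}$ is dead, which is immediate from the workflow-net axioms; everything else is a direct unfolding of the definitions of liveness and quasi-liveness, so I expect no real obstacle here. It is worth noting that this route does not appear to use the free-choice property nor the boundedness of $(\pn, c \cdot \m)$. Both hypotheses are available in the places where this item is invoked within the proof of \Cref{thm:fc:equiv} (boundedness is established there beforehand), but the implication itself already seems to hold for arbitrary workflow nets. If one wished to stay closer to the structural machinery used for the companion items, one could instead read $\fmarked{c}$ as an explicit witness that every quasi-live transition fails to be live, but the definitional argument above is the shortest.
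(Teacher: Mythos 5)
Your proof is correct, and it takes a genuinely different route from the paper's. The paper first shows $\m[\output] \in \{0,1\}$ (no transition consumes from $\output$), rules out $\m[\output]=0$ by picking a quasi-live transition producing into $\output$, promoting it to a live one via $L(c\cdot\m)=F(c\cdot\m)$, and firing it unboundedly often to contradict boundedness of $(\pn, c\cdot\m)$; it then writes $\m = \fmarked{1}+\m'$ and kills $\m'$ with a token-production argument. You instead observe that $\fmarked{c}$ is a deadlock (every transition has a nonempty pre-set disjoint from $\{\output\}$), so $F(\fmarked{c})=\emptyset$, and since any live transition of $(\pn, c\cdot\m)$ would have to be quasi-live at the reachable marking $\fmarked{c}$, the hypothesis $L(c\cdot\m)=F(c\cdot\m)$ collapses both sets to $\emptyset$; hence $c\cdot\m$ is itself dead and must equal $\fmarked{c}$. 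Each step is a direct unfolding of the definitions and I see no gap. What your argument buys is generality and economy: it dispenses with the boundedness hypothesis entirely (the paper's proof genuinely uses it, yours does not), and, like the paper's, it never uses the free-choice assumption, which is only needed for the other two items of the lemma. The conclusion you reach is even slightly stronger ($c\cdot\m = \fmarked{c}$ outright), which is exactly what is consumed in the proof of \Cref{thm:fc:equiv}.
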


\begin{proof}
  Recall that no transition of a workflow net consumes from $\output$,
  \ie\ $\pre{t}[\output] = 0$ for all $t \in T$. Thus, we either have
  $\m[\f] = 0$ or $\m[\f] = 1$.

  If $\m[\f] = 0$, then there is some transition $t \in F(c \cdot \m)$
  such that $f \in \post{t}[\output] > 0$. Since $t \in L(c \cdot
    \m)$, it follows that from $c \cdot \m$, we can reach $\m'$ with
  $\m'[\f]$ abritrarily large, as $t$ puts a token into $\f$ and can
  be fired arbitrarily often from $c \cdot \m$. This contradicts the
  fact that $(\pn, c \cdot \m)$ is bounded. Hence, $\m[\output] =
    1$. We can write $\m$ as $\m = \fmarked{1} + \m'$ where
  $\m'[\output] = 0$. We have $c \cdot \m = \fmarked{c} + c \cdot
    \m'$. If $\m' = \vec{0}$, then we are done.
  Otherwise, we obtain a contradiction. Indeed, it
  cannot be the case that $\fmarked{c} + c \cdot \m' \reach
    \fmarked{c}$, as every transition of a workflow net produces at
  least one token (and none consumes from $\output$). \qed
\end{proof}

\lemContsoundBounded*

\begin{proof}
  Assume for contradiction that there exists $k \in \Nn$ such that
  $(\pn, \imarked{k})$ is unbounded, but $\pn$ is continuously
  sound. There exist marking $\m$ and $\m' > \m$ such that
  $\imarked{k} \reach \m \reach \m'$. By
  \autoref{lem:cont-equiv-reach}, we have $\imarked{1} \creach \n
    \creach \n'$, where $\n \defeq (1/k) \cdot \m$ and $\n' \defeq (1/k)
    \cdot \m'$. As $\pn$ is continuously sound, it must hold that $\n
    \creach \fmarked{1}$. It follows that
  \[
    \n' = \n + (\n' - \n) \creach \fmarked{1} + (\n' - \n).
  \]
  This contradicts the assumption that $\pn$ is continuously sound, as
  each transition of a workflow net produces at least one token, and
  none consumes from $\f$. \qed
\end{proof}

\subsection{Missing definition of the arc weight encoding of \Cref{sec:experimental}}\label{ssec:arc:weights}

Recall that under our definition, Petri nets do not
have arc weights as $F \colon ((P \times T) \cup (T \times P))
  \to \set{0, 1}$. Petri nets with arc weights are defined exactly as
Petri nets but with $F \colon ((P \times T) \cup (T \times P)) \to
  \N$. An example of the arc weight encoding described in the main text
is shown in \Cref{fig:encoding}.

\begin{figure}[h]
    \centering

    \begin{tikzpicture}[node distance=1cm, scale=0.9, transform shape]
        \colorlet{colFont}{black!75}

        \tikzstyle{cplace} = [place, font=\scriptsize, text=colFont,
        minimum size=15pt];
        \tikzstyle{ctransition} = [transition, font=\scriptsize, text=colFont,
        minimum height=12pt, minimum width=15pt];

        \node[cplace] (i) at (0.25,1) {$s$};
        \node[cplace, right=of i] (l0) {$l_1$};

        \node[ctransition, red] at($(i)!0.5!(l0)$) (ti) {$t_s$};

        \path[->]
        (i) edge[]            node[] {}           (ti);

        \node[cplace, right=of l0] (r0) {$r_1$};

        \path[->]
        (ti) edge[]            node[] {}           (l0);

        \node[ctransition, red] (t0r) at ($(l0)!0.5!(r0)$) {$t_r$};

        \path[->]
        (l0) edge[]            node[] {}           (t0r)
        (t0r) edge[]            node[] {}           (r0)
        ;

        \node[cplace, above=1.5cm of l0] (l1) {$l_2$};
        \node[cplace, above=1.5cm of r0] (r1) {$r_2$};

        \node[ctransition, red] (t1l) at ($(l0)!0.5!(l1)$) {$t_{2,l}$};
        \node[ctransition, red] (t1r) at ($(r0)!0.5!(r1)$) {$t_{2,r}$};

        \path[->]
        (l0) edge[]            node[] {}           (t1l)
        (r0) edge[]            node[] {}           (t1l)
        (t1l) edge[]            node[] {}           (l1)
        ;

        \path[->]
        (l0) edge[]            node[] {}           (t1r)
        (r0) edge[]            node[] {}           (t1r)
        (t1r) edge[]            node[] {}           (r1)
        ;

        \node[cplace, above=1.5cm of r1] (r2) {$r_3$};

        \node[ctransition, red] (t2r) at ($(r1)!0.5!(r2)$) {$t_{3,r}$};

        \path[->]
        (l1) edge[]            node[] {}           (t2r)
        (r1) edge[]            node[] {}           (t2r)
        (t2r) edge[]            node[] {}           (r2)
        ;

        \node[ctransition, right=1.5cm of r1] (t) {$t$};

        \path[->]
        (r2) edge[bend left]            node[] {}           (t)
        (r0) edge[bend right]            node[] {}           (t)
        ;

        \node[cplace, right=of t] (d2) {$d_2$};
        \node[cplace, below=1.5cm of d2] (d1) {$d_1$};
        \node[cplace, above=1.5cm of d2] (d3) {$d_3$};
        \node[cplace, right=of d2] (h2) {$h_2$};

        \node[ctransition] at ($(d2)!0.5!(d3)$) (t3d) {$t_{3,d}$};

        \node[cplace, right=of d1] (h1) {$h_1$};
        \node[ctransition] at ($(d1)!0.5!(d2)$) (t2d) {$t_{2,d}$};
        \node[ctransition] at ($(h2)!0.5!(h1)$) (t2h) {$t_{2,h}$};
        \node[ctransition] at ($(d1)!0.5!(h1)$) (t1h) {$t_{h}$};

        \path[->]
        (d3) edge[]            node[] {}           (t3d)
        (t3d) edge[]            node[] {}           (d2)
        (t3d) edge[]            node[] {}           (h2)
        ;

        \path[->]
        (h2) edge[]            node[] {}           (t2h)
        (t2h) edge[]            node[] {}           (d1)
        (t2h) edge[]            node[] {}           (h1)
        ;

        \path[->]
        (t) edge[]            node[] {}           (d3)
        (t) edge[]            node[] {}           (d2)
        ;

        \path[->]
        (d2) edge[]            node[] {}           (t2d)
        (t2d) edge[]            node[] {}           (d1)
        (t2d) edge[]            node[] {}           (h1)
        ;

        \path[->]
        (d1) edge[]            node[] {}           (t1h)
        (t1h) edge[]            node[] {}           (h1)
        ;

        \node[cplace, right= of h1] (f) {$f$};
        \node[ctransition] at ($(h1)!0.5!(f)$) (tf) {$t_f$};

        \path[->]
        (h1) edge[]            node[] {}           (tf)
        (tf) edge[]            node[] {}           (f)
        ;

        \colorlet{colOrig}{blue}
        \tikzstyle{dplace} = [place, colOrig, minimum size=10pt, text=colOrig];
        \tikzstyle{dtransition} = [transition, colOrig, minimum size=10pt, text=colOrig];

        \node[dplace, left=2.5cm of r2, label=right:\textcolor{colOrig}{$f$}, yshift=-10pt] (forig) {};
        \node[dtransition, left=0.5cm of forig, label=below:\textcolor{colOrig}{$t$}] (torig) {};
        \node[dplace, left=0.5cm of torig, label=left:\textcolor{colOrig}{$s$}] (sorig) {};

        \path[colOrig, ->, auto]
        (sorig) edge node {$5$} (torig)
        (torig) edge node {$6$} (forig)
        ;
    \end{tikzpicture}
    \caption{\emph{Top left (in \textcolor{blue}{blue}):} A Petri net $\pn$ with arc weights.
        \emph{Center:} A Petri net $\encpn$ without arc weights
        that simulates behaviour of $\pn$.
        For each transition colored in \textcolor{red}{red}, the reverse transition is also part of
        $\encpn$, and is merely not drawn to avoid overcrowding the figure.
        For ease of presentation, places and transitions of $\encpn$ contain their names (not values).
    }
    \label{fig:encoding}
\end{figure}

In this section, we will use $t^{-1}$ to denote the reverse transition
of transition $t$, as done in the coNP membership proof of
\Cref{thm:conp}.

Formally, to simulate a transition $t$, we
add places $P_{p,t}$ and transitions $T_{p,t}$ for each place $p$ with $b \defeq \pret[p] > 1$, and
places $P'_{p,t}$ and transitions $T'_{p,t}$ for each place $p$ with $b' \defeq \postt[p] > 1$.

From now on, when we define a transition $t$, we assume that $\pre{t}[p'] = 0$ and $\post{t}[p'] = 0$
for each place $p'$ except those given explicitly.
We define $P_{p,t}$ as follows.
We denote by $b_1,b_2,\dots,b_n$ the binary representation of $b$, that is, $b = \sum_{i=1}^{n} b_i \cdot 2^{i-1}$, and
similarly $b'_1,b'_2,\dots,b'_{n'}$ for $b'$. The set $P_{p,t}$ consists of $2n-1$ places. For every $i \in [1..n-1]$, we add two places $l_i$ and $r_i$; and an additional place $r_n$.
The set $T_{p,t}$ contains the following transitions:
\begin{itemize}
  \item $t_p$, where $\pre{t_p}[p] = \post{t_p}[l_1] \defeq 1$;

  \item $t_{r}$, as well as its reverse $t^{-1}_{r}$, where $\pre{t_r}[l_1] = \post{t_r}[r_1] \defeq 1$;

  \item for each $i \in [2..n-1]$ the transitions $t_{i,l}$, $t_{i,r}$ and their reverses $t_{i,l}^{-1}$, $t_{i,r}^{-1}$,
        where $\pre{t_{i,l}}[r_{i-1}] = \pre{t_{i,l}}[l_{i-1}] \defeq 1$,
        $\pre{t_{i,r}} = \pre{t_{i,l}}$, and
        $\post{t_{i,l}}[l_i] = \post{t_{i,r}}[r_i] \defeq 1$,

  \item the transition $t_{n,r}$ and its reverse $t_{n,r}^{-1}$, where $\pre{t_{n,r}}[l_{n-1}] = \pre{t_{n,r}}[r_{n-1}] \defeq 1$
        and
        $\post{t_{n,r}}[r_n] \defeq 1$.
\end{itemize}
We further redefine $t$ to have $\pre{t}[p] \defeq 0$
and $\pre{t}[r_{i}] \defeq 1$ for all $i$ such that $b_{i} = 1$.

The set $P'_{p,t}$ consists of $2n' - 1$ places.
We have $d_i$ and $h_i$
for each $i \in [1..n'-1]$,
and an additional place $d_n'$.
The set $T'_{p,t}$ contains the following transitions:
\begin{itemize}
  \item $t_p$, where $\pre{t_p}[h_1] = \post{t_p}[p] \defeq 1$,

  \item $t_{1,h}$, where $\pre{t_{1,h}}[d_1] = \post{t_{1,h}}[h_1] \defeq 1$,

  \item for each $i \in [2..n-1]$, the transitions $t_{i,d}$ and $t_{i,h}$, where $\pre{t_{i,d}}[d_i] =
          \pre{t_{i,h}}[h_i] \defeq 1$, $\post{t_{i,d}}[d_{i-1}] = \post{t_{i,d}}[h_{i-1}] \defeq 1$,
        and $\post{t_{i,h}} \defeq \post{t_{i,d}}$.
\end{itemize}
We further redefine $t$ to have $\post{t}[p] \defeq 0$
and $\post{t}[d_i] \defeq 1$ for each $i$ such that $b'_{i} = 1$.

Given a Petri net $\pn = (P,T,F)$, let us denote by $\encpn = (P',T',F')$
the transformed $\pn$ where all transitions with arc weights are modified by the gadget defined above.
To avoid any confusion, we denote markings in $\pn$ as $\m$ and $\m'$, and markings in $\encpn$ as $\n$ and $\n'$.
As $\encpn$ does not remove (but only adds) places, we may treat markings on $\pn$ as markings on $\encpn$,
where all places in $P' \setminus P$ are marked with zero token.

Recall that $\parikh{\sigma}$ is a vector mapping each transition $t$ to the number of times $t$ is used in run $\sigma$. In the following, let $p \in P$ and $t \in T$ be such that $\pre{t}[p] = b \geq 1$.
Let $b_1, \dots, b_n$ be the binary representation of $b$. Furthermore, let $P_{p,t}$ and $T_{p,t}$ be defined as above.

We are ready to state some helpful lemmas.

\begin{lemma}\label{lem:elevator-up}
  Let $i \in [1..n]$. We have $\{p\colon 2^{i-1}\}
    \trans{\sigma} \{r_i\colon 1\}$ in $\encpn$ with
  $\support{\parikh{\sigma}} \subseteq T_{p,t}$. Further, if $i <
    n$, then $\{p\colon 2^{i-1}\} \trans{\sigma'} \{l_i\colon 1\}$ in
  $\encpn$ with $\support{\parikh{\sigma'}} \subseteq T_{p,t}$.
\end{lemma}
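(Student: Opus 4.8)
The plan is to prove both assertions simultaneously by induction on $i$, using the intended meaning of the gadget: each place $r_i$ (and $l_i$ for $i < n$) is supposed to hold a single token that ``represents'' $2^{i-1}$ tokens sitting in $p$, and such a token is assembled by combining two tokens worth $2^{i-2}$ each, one residing in $l_{i-1}$ and one in $r_{i-1}$, via $t_{i,l}$ or $t_{i,r}$ (and via $t_{n,r}$ at the top level). Since the encoding gadget is only introduced when $b > 1$, I may assume $n \geq 2$ throughout, so that $l_1, \ldots, l_{n-1}$ all exist. The combined induction hypothesis at level $i$ reads: ``$\{p \colon 2^{i-1}\} \trans{\sigma} \{r_i \colon 1\}$ with $\support{\parikh{\sigma}} \subseteq T_{p,t}$, and, if $i < n$, also $\{p \colon 2^{i-1}\} \trans{\sigma'} \{l_i \colon 1\}$ with $\support{\parikh{\sigma'}} \subseteq T_{p,t}$.''

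First I would settle the base case $i = 1$. Firing $t_p$ gives $\{p \colon 1\} \trans{t_p} \{l_1 \colon 1\}$, which proves the $l_1$-claim (valid since $1 < n$); firing $t_r$ afterwards gives $\{l_1 \colon 1\} \trans{t_r} \{r_1 \colon 1\}$, proving the $r_1$-claim. Both runs use only transitions of $T_{p,t}$.

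For the inductive step, fix $2 \le i \le n$ and assume both claims for $i-1$; note $i - 1 \le n-1$, so the $l_{i-1}$-claim is available. Let $\sigma_l$ and $\sigma_r$ be runs witnessing $\{p \colon 2^{i-2}\} \trans{\sigma_l} \{l_{i-1} \colon 1\}$ and $\{p \colon 2^{i-2}\} \trans{\sigma_r} \{r_{i-1} \colon 1\}$, both with support in $T_{p,t}$. Writing $\{p \colon 2^{i-1}\} = \{p \colon 2^{i-2}\} + \{p \colon 2^{i-2}\}$, I would first fire $\sigma_l$, then $\sigma_r$, reaching $\{l_{i-1} \colon 1, r_{i-1} \colon 1\}$, and finally fire $t_{i,l}$ to obtain $\{l_i \colon 1\}$ (when $i < n$) and $t_{i,r}$ to obtain $\{r_i \colon 1\}$ (using $t_{n,r}$ instead when $i = n$). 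As all of $\sigma_l$, $\sigma_r$, $t_{i,l}$, $t_{i,r}$, $t_{n,r}$ lie in $T_{p,t}$, the support condition is maintained.

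The hard part will be justifying the concatenation $\sigma_l\,\sigma_r$: I must argue that firing $\sigma_r$ after $\sigma_l$ neither consumes the freshly produced $l_{i-1}$ token nor becomes disabled, even though at the base level $\sigma_r$ itself transiently produces and consumes tokens in $l_{i-1}$. The clean route is to avoid tracking intermediate places and instead combine monotonicity of firing with effect bookkeeping. Concretely, $\effect{\sigma_r} = \{r_{i-1} \colon 1\} - \{p \colon 2^{i-2}\}$ has zero component on $l_{i-1}$; firing $\sigma_l$ from $\{p \colon 2^{i-1}\}$ reaches $\{p \colon 2^{i-2}\} + \{l_{i-1} \colon 1\}$, and since this marking dominates $\{p \colon 2^{i-2}\}$, monotonicity guarantees that $\sigma_r$ remains enabled and, upon firing, merely adds $\effect{\sigma_r}$, yielding $\{l_{i-1} \colon 1, r_{i-1} \colon 1\}$ irrespective of which intermediate places $\sigma_r$ visits. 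This observation is exactly what makes the induction go through without a tedious place-by-place case analysis.
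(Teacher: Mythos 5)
Your proof is correct and follows essentially the same route as the paper's: induction on $i$ with the base case $t_p\,t_r$ and the inductive step splitting $\{p\colon 2^{i-1}\}$ into two halves, applying the hypothesis to reach $\{l_{i-1}\colon 1, r_{i-1}\colon 1\}$, and firing $t_{i,l}$, $t_{i,r}$ or $t_{n,r}$. The only difference is that you explicitly justify the concatenation of the two sub-runs via monotonicity of firing, a detail the paper leaves implicit.
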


\begin{proof}
  We proceed by induction.
  For $i = 1$, we have
  \begin{align*}
    \{p \colon 2^{1-1}\} = \{p \colon 1\} \trans{t_p} \{l_1\colon 1\}
    \trans{t_{r}}                                   \{r_1\colon 1\}.
  \end{align*}

  For $i > 1$, we have
  $\{p\colon 2^{i-1}\} = \{p \colon 2^{i-2} + 2^{i-2}\} \reach \{r_{i-1}\colon 1, l_{i-1}\colon 1\}$
  by the induction hypothesis.
  Thus, we have $\{r_{i-1}\colon 1, l_{i-1}\colon 1\} \trans{t_{i,r}} \{r_{i}\colon 1\}$.
  If $i < n$, then we additionally have $\{r_{i-1}\colon 1, l_{i-1}\colon 1\} \trans{t_{i,l}} \{l_i\colon 1\}$.
  We conclude the proof by pointing out that for all $i$, $t_p, t_r, t_{i,r}, t_{i,l} \in T_{p,t}$. \qed
\end{proof}

The proof of the lemma below follows by the fact that all transitions
of $T_{p,t}$ are reversible.

\begin{lemma}\label{lem:elevator-down}
  Let $i \in [1..n]$. We have $\{r_i\colon 1\} \trans{\sigma}
    \{p\colon 2^{i-1}\}$ in $\encpn$ with $\support{\parikh{\sigma}}
    \subseteq T_{p,t}$. Further, if $i < n$, then $\{l_i\colon 1\}
    \trans{\sigma'} \{p\colon 2^{i-1}\}$ in $\encpn$ with
  $\support{\parikh{\sigma'}} \subseteq T_{p,t}$.
\end{lemma}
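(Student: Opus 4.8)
The plan is to read \Cref{lem:elevator-down} as the exact reversal of \Cref{lem:elevator-up} and to derive it by running that lemma's witnessing sequences backwards. The enabling observation—stated just before the lemma—is that every transition of $T_{p,t}$ is reversible: alongside $t_r$, each $t_{i,l}, t_{i,r}$ and $t_{n,r}$, the set $T_{p,t}$ also contains the corresponding reverse transition, and the same holds for the initial transition $t_p$. First I would isolate the elementary reversibility principle: if $\m \trans{s} \m'$ and $T_{p,t}$ contains $s^{-1}$ with $\pre{s^{-1}} = \post{s}$ and $\post{s^{-1}} = \pre{s}$, then $s^{-1}$ is enabled at $\m'$ and $\m' \trans{s^{-1}} \m$, since $\m' \geq \post{s} = \pre{s^{-1}}$ and firing $s^{-1}$ cancels the effect of $s$.

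Next I would lift this to runs: given $\m \trans{\sigma} \m'$ with $\support{\parikh{\sigma}} \subseteq T_{p,t}$, applying the principle to each step of $\sigma$ in reverse order yields $\m' \trans{\sigma^{-1}} \m$, where $\sigma^{-1}$ reverses $\sigma$ and swaps every transition for its reverse; since all these reverses again lie in $T_{p,t}$, we also obtain $\support{\parikh{\sigma^{-1}}} \subseteq T_{p,t}$. The two claims then follow directly. For the first, \Cref{lem:elevator-up} supplies $\sigma$ with $\{p\colon 2^{i-1}\} \trans{\sigma} \{r_i\colon 1\}$, so $\{r_i\colon 1\} \trans{\sigma^{-1}} \{p\colon 2^{i-1}\}$. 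For the second, when $i < n$, \Cref{lem:elevator-up} supplies $\sigma'$ with $\{p\colon 2^{i-1}\} \trans{\sigma'} \{l_i\colon 1\}$, so $\{l_i\colon 1\} \trans{(\sigma')^{-1}} \{p\colon 2^{i-1}\}$; in both cases the support bound is inherited.

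The one point I expect to require care—the main (if mild) obstacle—is confirming that the initial transition $t_p$, which moves a token from $p$ into $l_1$, is genuinely reversible, since the explicit listing of $T_{p,t}$ spells out reverses only for the other transitions. This reversibility is forced anyway: in the base case $i = 1$ the sequence $\{r_1\colon 1\} \trans{t_r^{-1}} \{l_1\colon 1\}$ can only be completed to $\{p\colon 1\}$ via $t_p^{-1}$, and it is exactly what the pre-lemma remark asserts and what \Cref{fig:encoding} indicates (the reverse of $t_p$ belongs to $\encpn$ but is left undrawn). If one prefers to avoid the reversal bookkeeping, an equivalent route is a direct induction mirroring \Cref{lem:elevator-up}: the base case fires $t_r^{-1}$ then $t_p^{-1}$, while the inductive step fires $t_{i,r}^{-1}$ (resp.\ $t_{i,l}^{-1}$) to split one token in $r_i$ (resp.\ $l_i$) into one token in each of $r_{i-1}$ and $l_{i-1}$, after which the induction hypothesis handles both halves.
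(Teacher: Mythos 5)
Your proposal is correct and matches the paper's own argument, which consists of the single remark that the lemma ``follows by the fact that all transitions of $T_{p,t}$ are reversible,'' i.e.\ one reverses the witnessing runs of \Cref{lem:elevator-up} exactly as you do. Your side observation that the formal listing of $T_{p,t}$ omits $t_p^{-1}$ (while the figure and the lemma itself require it) is a fair catch of a small gap in the paper's definition, and you resolve it the right way.
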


For the next lemma, let  $p \in P$ and $t \in T$ be such that $\post{t}[p] = b \geq 1$.
Let $b_1,\dots,b_n$ be the binary representation of $b$. Let $P'_{p,t}$ and $T'_{p,t}$ be as defined above.

\begin{lemma}\label{lem:ladder-down}
  Let $i \in [1..m]$. We have $\set{d_i \colon 1} \trans{\sigma}
    \set{p \colon 2^{i-1}}$ in $\encpn$ with $\support{\parikh{\sigma}}
    \subseteq T'_{p,t}$. Further, if $i < n$, then $\set{h_i\colon 1}
    \trans{\sigma'} \set{p \colon 2^{i-1}}$ in $\encpn$ with
  $\support{\parikh{\sigma'}} \subseteq T'_{p,t}$.
\end{lemma}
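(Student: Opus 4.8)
The plan is to prove the two statements \emph{simultaneously} by induction on $i$, closely mirroring the proof of \Cref{lem:elevator-up}. The crucial structural observation is that each level-$i$ transition splits a single token into two tokens one level down: firing $t_{i,d}$ turns $\set{d_i \colon 1}$ into $\set{d_{i-1}\colon 1,\, h_{i-1}\colon 1}$, and firing $t_{i,h}$ turns $\set{h_i\colon 1}$ into the same marking (since $\post{t_{i,h}} \defeq \post{t_{i,d}}$). This is exactly what realizes the doubling $2^{i-1} = 2^{i-2} + 2^{i-2}$. Note that, in contrast to \Cref{lem:elevator-down}, I cannot simply appeal to reversibility: the transitions of $T'_{p,t}$ are \emph{not} reversed in the construction, so the run $\sigma$ must be built explicitly.

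For the base case $i = 1$, I would fire $t_{1,h}$ followed by $t_p$ to obtain $\set{d_1\colon 1} \trans{t_{1,h}} \set{h_1\colon 1} \trans{t_p} \set{p\colon 1}$, and I would fire $t_p$ alone to obtain $\set{h_1\colon 1} \trans{t_p} \set{p\colon 1}$; both target markings equal $\set{p\colon 2^{0}}$. For the inductive step $i > 1$, I would first fire $t_{i,d}$ to reach $\set{d_i\colon 1} \trans{t_{i,d}} \set{d_{i-1}\colon 1,\, h_{i-1}\colon 1}$, and then apply the induction hypothesis to each of the two tokens: the hypothesis for $d_{i-1}$ yields a run to $\set{p\colon 2^{i-2}}$, and the hypothesis for $h_{i-1}$ (available because $i-1 < n$) yields another run to $\set{p\colon 2^{i-2}}$, so that together I reach $\set{p\colon 2^{i-2}+2^{i-2}} = \set{p\colon 2^{i-1}}$. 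The second statement for $h_i$ (when $i < n$) is identical, starting instead with $t_{i,h}$, which has the same effect as $t_{i,d}$. Since every transition used is drawn from $T'_{p,t}$, the support condition $\support{\parikh{\sigma}} \subseteq T'_{p,t}$ holds automatically.

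The proof is routine, so the only point requiring care is that the two claims must be coupled in a single joint induction: the step for $d_i$ consumes the $h_{i-1}$ statement, and the step for $h_i$ consumes both the $d_{i-1}$ and $h_{i-1}$ statements, so neither statement can be established on its own. The remaining (minor) obstacle is purely bookkeeping about index ranges---making sure the relevant $t_{i,d}$ and $t_{i,h}$ transitions exist at every level $i$ touched by the recursion, and handling the degenerate case where $b$ has few bits---which follows directly from the definition of the gadget $T'_{p,t}$.
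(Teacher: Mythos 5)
Your proof is correct and takes essentially the same route as the paper's: both proceed by a joint induction on $i$, with the base case given by $\set{d_1\colon 1} \trans{t_{1,h}} \set{h_1\colon 1} \trans{t_p} \set{p\colon 1}$ and the inductive step firing $t_{i,d}$ (resp.\ $t_{i,h}$) to reach $\set{d_{i-1}\colon 1, h_{i-1}\colon 1}$ before applying the induction hypothesis to each of the two tokens. Your added observations---that reversibility is unavailable for $T'_{p,t}$ and that the two claims must be proved together since each step consumes the hypothesis for both $d_{i-1}$ and $h_{i-1}$---are accurate and merely make explicit what the paper leaves implicit.
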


\begin{proof}
  We proceed by induction on $i$.
  If $i = 1$, then we have $2^{i-1} = 1$ and hence $\set{d_1 \colon 1} \trans{t_{h}} \set{h_1 \colon 1} \trans{t_{p}} \set{p \colon 1}$.

  For $i > 1$, we have $\set{d_i\colon 1} \trans{t_{i,d}} \set{d_{i-1}\colon 1, h_{i-1}\colon 1}$.
  If $i < n$, then we additionally have $\set{h_i\colon 1} \trans{t_{i,h}} \set{d_{i-1}\colon 1, h_{i-1}\colon 1}$.
  It follows from the induction hypothesis that
  $\set{d_{i-1}\colon 1, h_{i-1}\colon 1} \trans{\sigma\sigma'} \set{p\colon 2^{i-2} + 2^{i-2}} = \set{p \colon 2^{i-1}}$.
  We conclude by pointing out that, for all $i$, we have $t_{p}, t_h, t_{i,d}, t_{i,h} \in T'_{p,t}$. \qed
\end{proof}

\begin{definition}\label{def:pinv}
  Let $U \subseteq T$.
  A vector $\vec{x} \colon P \to \Q$ is a \emph{place invariant}
  over $U$ if the
  following holds for all $t \in U$:
  \begin{equation}
    \sum_{p \in P} \pre{t}[p] \cdot \vec{x}[p] = \sum_{p \in P}
    \post{t}[p] \cdot \vec{x}[p].\label{eq:pinv}
  \end{equation}
\end{definition}

\begin{proposition}[adapted from {\cite[Prop.~2.27]{desel1995free}}]
  \label{prop:fundinv}
  Let $U \subseteq T$ and let $\vec{x}$ be a place invariant over $U$.
  If $\m \trans{\sigma} \m'$ with $\support{\parikh{\sigma}} \subseteq U$, then $\vec{x} \cdot \m = \vec{x} \cdot \m'$.
\end{proposition}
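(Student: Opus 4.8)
The plan is to reduce the statement to a single application of bilinearity of the inner product, after rephrasing the invariant condition in vector form. First I would observe that, for each $t \in U$, the defining equation of \Cref{def:pinv} is precisely $\vec{x} \cdot \pre{t} = \vec{x} \cdot \post{t}$, which—since $\effect{t} = \post{t} - \pre{t}$—is equivalent to $\vec{x} \cdot \effect{t} = 0$. Thus the hypothesis on $\vec{x}$ amounts to saying that $\vec{x}$ is orthogonal to the effect of every transition in $U$.

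Next I would use the fact that firing $\sigma$ adds its cumulative effect, i.e.\ $\m' = \m + \effect{\sigma}$, together with the identity $\effect{\sigma} = \sum_{t \in T} \parikh{\sigma}[t] \cdot \effect{t}$, which follows from the definition of the effect of a run as the sum of the effects of its transitions, after grouping identical transitions. Because $\support{\parikh{\sigma}} \subseteq U$, every transition with a nonzero coefficient in this sum lies in $U$, so the sum may be restricted to $t \in U$. Taking the inner product with $\vec{x}$ and using linearity then gives
\[
\vec{x} \cdot \m' = \vec{x} \cdot \m + \sum_{t \in U} \parikh{\sigma}[t] \cdot (\vec{x} \cdot \effect{t}) = \vec{x} \cdot \m,
\]
since each term $\vec{x} \cdot \effect{t}$ vanishes by the first paragraph.

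There is no genuine obstacle here; the argument is routine. The only point requiring care is that the support condition $\support{\parikh{\sigma}} \subseteq U$ is exactly what guarantees that the invariant equation is invoked solely for transitions for which it is assumed to hold—transitions outside $U$ simply do not occur in $\sigma$ and so contribute nothing. Alternatively, the same conclusion follows by a straightforward induction on the length of $\sigma$: the empty run is trivial, and appending a single $t \in U$ changes $\vec{x} \cdot \m$ by $\vec{x} \cdot \effect{t} = 0$, so the value is preserved at every step.
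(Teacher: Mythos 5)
Your proof is correct. The paper itself does not prove this proposition---it is imported from the literature (Prop.~2.27 of Desel and Esparza's book on free-choice Petri nets) and stated without argument---so there is nothing to compare against; your argument is the standard one. Rewriting the invariant condition as $\vec{x} \cdot \effect{t} = 0$ for $t \in U$, expressing $\effect{\sigma}$ as $\sum_{t} \parikh{\sigma}[t] \cdot \effect{t}$, and using the support hypothesis to restrict the sum to $U$ is exactly how this is established, and your observation that the support condition is precisely what licenses invoking the invariant equation only where it is assumed to hold is the one point of substance. The inductive variant you sketch is equally valid.
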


Let us define the vector $I_{p,t}$ with
$I_{p,t}[p] \defeq 1$, $I_{p,t}[r_i] \defeq 2^{i-1}$ and $I_{p,t}[l_i] \defeq 2^{i-1}$,
where $r_i$ and $l_i$ are the places previously defined in $P_{p,t}$.
It is easy to see that $I_{p,t}$ is a place invariant of $T_{p,t}$.

Let $R \defeq \{t \in T \mid \pre{t}[p] \geq 2\}$ and $S \defeq \{t \in T \mid \post{t}[p] \geq 2\}$.
We further define the vector $I_{p} \colon \set{p} \cup \bigcup_{t \in R} P_{p,t} \cup \bigcup_{t \in S} P'_{p,t} \to \Q$,
where $P_{p,t} = \emptyset$ if $\pret[p] \leq 1$ and $P'_{p,t} = \emptyset$ if $\postt[p] \leq 1$.
We define $I_p[p] \defeq 1$ and $I_p[p'] \defeq 2^{i-1}$ if $p' \in \{r_i,l_i,d_i,h_i\}$ for some $i$.
Note that this is well-defined by our choice of domain of $I_p$.
It is easy to convince oneself that $I_{p}$ is a
place invariant of $T' \setminus T$.


\newcommand{\Ppre}{G}
\newcommand{\Ppost}{H}
\newcommand{\bin}[1]{n(#1)}

We introduce some notation.
For a transition $t \in T$,
let $\Ppre \defeq \{p \in P \mid \pre{t}[p] \geq 2\}$ and $\Ppost \defeq \set{p \in P \mid \post{t}[p] \geq 2}$.
For a place $p \in \Ppre$, we write $b(p) \defeq \pre{t}[p]$.
For $i \in \N$, we write $\bin{i}$ to denote the number of bits in the binary representation of $i$.
Let $b_1(p),\dots,b_{\bin{b(p)}}(p)$ denote the bits of the binary representation of $b(p)$.
Let $r_i(p)$ denote the place $r_i$ in $P_{p,t}$.
Similarly, given $p \in \Ppost$, we write $c(p) \defeq \post{t}[p]$,
we let $c_1(p),\dots,c_{\bin{c(p)}}(p)$ be the bits of the binary representation of $c(p)$,
and we further write $d_i(p)$ to denote the place $d_i$ of $P'_{p,t}$.
In the following, we denote by $t$ the transition in $\pn$, and by $t'$ the corresponding transition in $\encpn$.

\begin{lemma}\label{lem:preservesreach}
  Let $t \in T$ and let $\m,\m'$ be markings of $\pn$
  with $\m' = \m + \effect{t}$. It holds that $\m \trans{t} \m'$ in $\pn$ iff $\m \trans{\pi t' \pi'} \m'$ in $\encpn$,
  where $\support{\parikh{\pi}} \subseteq \bigcup_{p \in \Ppre} T_{p,t}$ and $\support{\parikh{\pi'}} \subseteq \bigcup_{p \in \Ppost} T'_{p,t}$.
\end{lemma}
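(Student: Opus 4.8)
The plan is to prove the two implications separately, using the helper results \Cref{lem:elevator-up} and \Cref{lem:ladder-down} for the ``only if'' direction, and the place invariant $I_{p,t}$ together with \Cref{prop:fundinv} for the ``if'' direction. Write $b(p) \defeq \pre{t}[p]$ for $p \in \Ppre$ and $c(p) \defeq \post{t}[p]$ for $p \in \Ppost$, and recall that in $\encpn$ the transition $t'$ satisfies $\pre{t'}[r_i(p)] = 1$ for each $p \in \Ppre$ and each $i$ with $b_i(p) = 1$, $\post{t'}[d_i(p)] = 1$ for each $p \in \Ppost$ and each $i$ with $c_i(p) = 1$, and has unit arcs to/from the weight-one neighbours of $t$, while $\pre{t'}[p] = 0$ for $p \in \Ppre$ and $\post{t'}[p] = 0$ for $p \in \Ppost$. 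Since $\m' = \m + \effect{t}$ is given, the content of the lemma is that $t$ is enabled at $\m$ in $\pn$ iff a run of the prescribed shape exists in $\encpn$.

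For the ``only if'' direction, assume $\m \trans{t}$, so $\m \geq \pre{t}$. I would first establish a combined form of \Cref{lem:elevator-up}: for each $p \in \Ppre$, since $\m[p] \geq b(p) = \sum_{i : b_i(p) = 1} 2^{i-1}$, one has $\{p \colon b(p)\} \reach \{r_i(p) \colon 1 \mid b_i(p) = 1\}$ using only transitions of $T_{p,t}$. The point is to concatenate the single-register runs of \Cref{lem:elevator-up} without collision: processing the set bits in \emph{decreasing} order, the run producing $r_i(p)$ only touches gadget places of index at most $i$ and restores all intermediate places to $0$, hence it cannot disturb the higher-index registers already produced. Concatenating over all $p \in \Ppre$ (whose gadgets have disjoint place sets) yields $\pi$ with $\support{\parikh{\pi}} \subseteq \bigcup_{p \in \Ppre} T_{p,t}$, after which $t'$ is enabled: the registers $r_i(p)$ are marked, and the weight-one input places, untouched by $\pi$, still carry their tokens by $\m \geq \pre{t}$. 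Firing $t'$ deposits a token in each $d_i(p)$ with $c_i(p) = 1$; symmetrically, using \Cref{lem:ladder-down} and processing the set bits in \emph{increasing} order, I would drain $\{d_i(p) \colon 1 \mid c_i(p) = 1\}$ to $\{p \colon c(p)\}$ for each $p \in \Ppost$, giving $\pi'$ with $\support{\parikh{\pi'}} \subseteq \bigcup_{p \in \Ppost} T'_{p,t}$. A short bookkeeping check shows all auxiliary places return to $0$ and the net effect on each original place $p$ is $\post{t}[p] - \pre{t}[p] = \effect{t}[p]$, so $\m \trans{\pi t' \pi'} \m + \effect{t} = \m'$.

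For the ``if'' direction, assume $\m \trans{\pi t' \pi'} \m'$ with the stated supports and write $\m \trans{\pi} \m_1$; then $t'$ is enabled at $\m_1$. It suffices to show $\m \geq \pre{t}$, since then $t$ is enabled at $\m$ and $\m \trans{t} \m + \effect{t} = \m'$. Fix $p \in \Ppre$. The vector $I_{p,t}$ is a place invariant over the whole block $\bigcup_{q \in \Ppre} T_{q,t}$: it is one over $T_{p,t}$ by definition, and it is trivially preserved by each $T_{q,t}$ with $q \neq p$, since those transitions act on places disjoint from $\support{I_{p,t}}$. As $\support{\parikh{\pi}} \subseteq \bigcup_{q \in \Ppre} T_{q,t}$, \Cref{prop:fundinv} gives $I_{p,t} \cdot \m_1 = I_{p,t} \cdot \m = \m[p]$, the last equality because $\m$, a marking of $\pn$, is $0$ on all auxiliary places. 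Enabledness of $t'$ at $\m_1$ forces $\m_1[r_i(p)] \geq 1$ whenever $b_i(p) = 1$, so $I_{p,t} \cdot \m_1 \geq \sum_{i : b_i(p) = 1} 2^{i-1} = b(p)$, whence $\m[p] \geq \pre{t}[p]$. For a weight-one input place $p \notin \Ppre$, the run $\pi$ does not touch $p$, so enabledness of $t'$ gives $\m[p] \geq 1 = \pre{t}[p]$; for all other places the inequality is trivial. Thus $\m \geq \pre{t}$.

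The main obstacle is the interference analysis in the ``only if'' direction: \Cref{lem:elevator-up} and \Cref{lem:ladder-down} convert only one register at a time, whereas $t'$ requires several registers of the same gadget to be marked simultaneously. What makes the combination work is the monotone index structure of the gadget—each single-register run confines itself to places of index at most the target bit and restores every intermediate place—so building registers top-down and draining them bottom-up avoids any clash. The ``if'' direction is comparatively routine once one observes that $I_{p,t}$ is an invariant over the entire block of input gadgets and not merely over $T_{p,t}$, which lets \Cref{prop:fundinv} transport the invariant across all of $\pi$.
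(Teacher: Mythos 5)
Your proof is correct and follows essentially the same route as the paper's: the forward direction assembles $\pi$ and $\pi'$ from \Cref{lem:elevator-up} and \Cref{lem:ladder-down}, and the backward direction recovers $\m[p] \geq \pre{t}[p]$ for $p \in \Ppre$ from the invariant $I_{p,t}$ via \Cref{prop:fundinv}. Your extra care about processing the set bits in a particular order is harmless but unnecessary (firing a fixed transition sequence in a larger marking has the same effect by monotonicity, so the single-bit runs compose in any order), and your explicit treatment of the weight-one input places in the backward direction fills a small detail the paper leaves implicit.
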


\begin{proof}
  $\Rightarrow$) By definition of $\encpn$, $\m[p] \geq \pre{t'}[p]$ for all
  $p \in P \setminus \Ppre$.
  By definition of $\encpn$, it holds that
  $\pre{t'}[r_i(p)] = b_i(p)$ for all $i \in [1..\bin{b(p)}]$.
  Note that $\m[p] \geq b(p) = \sum_{i = 1}^{\bin{b(p)}} 2^{i-1} \cdot b_i(p)$.
  Thus, it follows from \Cref{lem:elevator-up} that
  $\{p\colon b(p)\} \trans{\sigma} \sum_{i=1}^{\bin{b(p)}} \{r_i(p) \colon b_i(p)\}$.
  So, in particular, \[\m[p] \trans{\sigma} \m - \{p \colon b(p) \} + \sum_{i=1}^{\bin{b(p)}} \{r_i(p)\colon b_i(p)\}.\]

  Since the transitions in $\sigma$ do not have an effect on places other than $P_{p,t} \cup \{p\}$,
  we can invoke \Cref{lem:elevator-up} individually for each $p \in \Ppre$, and thus obtain
  \[\m \trans{\sigma_1 \cdots \sigma_{\abs{\Ppre}}} \m + \sum_{p \in \Ppre} \sum_{i=1}^{\bin{b(p)}} \{r_i(p) \colon b_i(p)\} - \{p \colon b(p)\},\]
  where $\support{\parikh{\sigma_1}},\dots,\support{\parikh{\sigma_{\abs{\Ppre}}}} \subseteq \bigcup_{p \in \Ppre} T_{p,t}$.
  By definition, $t'$ is enabled in this marking and its firing leads to
  \begin{multline*}
    \m   - \sum_{p \in \Ppre} \{p \colon b(p)\} - \sum_{p \in P \setminus \Ppre} \set{p \colon \pre{t}[p]}
    +
    \sum_{p \in P \setminus \Ppost} \set{p \colon \post{t}[p]} + \sum_{p \in \Ppost} \sum_{i = 1}^{\bin{c(p)}} \set{d_i(p) \colon c_i(p)} \\
    = \m - \pre{t} + \sum_{p \in P \setminus \Ppost} \set{p \colon \post{t}[p]} + \sum_{p \in \Ppost} \sum_{i = 1}^{\bin{c(p)}} \set{d_i(p) \colon c_i(p)}.
  \end{multline*}
  Let us denote the latter marking as $\m'$.
  By invoking \Cref{lem:ladder-down} individually on each $d_i(p)$,
  it follows that for each $p \in \Ppost$:
  \begin{align*}
    \m' \trans{\sigma'_{1}\cdots\sigma'_{\abs{\Ppost}}}\  & \m - \pre{t} + \sum_{p \in P \setminus \Ppost} \set{p \colon \post{t}[p]} + \sum_{p \in \Ppost} \sum_{i = 1}^{\bin{c(p)}} \set{p\colon
      2^{i-1} c_i(p)} =                                                                                                                                                                            \\
                                                          & \m - \pre{t} + \sum_{p \in P \setminus \Ppost} \set{p \colon \post{t}[p]} + \sum_{p \in \Ppost} \set{p\colon
      \postt[p]}=                                                                                                                                                                                  \\
                                                          & \m - \pre{t} + \post{t} = \m + \effect{t}.
  \end{align*}
  We conclude this direction by noting that $\support{\parikh{\sigma_{1}}},\dots,\support{\parikh{\sigma_{\abs{\Ppost}}}} \subseteq \bigcup_{p \in \Ppost} T'_{p,t}$
  by \Cref{lem:ladder-down}.

  $\Leftarrow$) We have $\m \trans{\sigma t' \sigma'} \m'$.
  Let us denote by $\m_{1}$ the marking
  such that $\m \trans{\sigma} \m_{1}$.
  It must be the case that
  \begin{align*}
    \m_{1} \geq \pre{t'} = & \sum_{p \in P \setminus \Ppre} \pre{t}[m] + \sum_{p \in \Ppre} \sum_{i \in \bin{b(p)}} \set{r_i(p)\colon b_{i}(p)}.
  \end{align*}
  Recall that for each $p \in \Ppre$, $I_{p,t}$ is a place invariant of $T_{p,t}$.
  In particular, among transitions from $T' \setminus T$, places in $\{p\} \cup P_{p,t}$
  are only affected by transitions in $T_{p,t}$.
  So, $I_{p,t} \cdot \m = I_{p,t} \cdot \m_{1}$ by \cref{prop:fundinv}.
  Since $\m_{1} \geq \sum_{i \in \bin{b(p)}} \set{r_i(p)\colon b_{i}(p)}$,
  we have $I_{p,t} \cdot \m_{1} \geq \sum_{i \in \bin{b(p)}} 2^{i-1} b_{i}(p)$.
  Thus, the same must hold for $I_{p,t} \cdot \m$.
  But among places in $\{p\} \cup P_{p,t}$, $\m$ marks only $p$, as it is (by projection) a marking of $\pn$.
  Since $I_{p,t}[p] = 1$, it must hold that $\m[p] \geq \sum_{i \in \bin{b(p)}} 2^{i-1} b_{i}(p)
    = b(p)$, where the last equality follows from the fact that $b_{1}(p),\dots,b_{\bin{b(p)}}(p)$ is the binary representation of
  $b(p)$. So, $\m[p] \geq \pre{t}[p]$ holds by definition of $b(p)$.
  Therefore, $\m$ enables $t$,
  and consequently $\m \trans{t} \m + \effect{t} = \m'$, and we are done. \qed
\end{proof}

\begin{lemma}\label{lem:unreach-preserves}
  Let $\m, \m'$ be markings of $\pn$.
  If $\m \trans{\sigma} \m'$ in $\encpn$ and $\support{\parikh{\sigma}} \subseteq T' \setminus T$, then $\m = \m'$.
\end{lemma}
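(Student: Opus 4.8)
The plan is to use the family of place invariants $I_p$ (one for each $p \in P$) defined just before this lemma, together with the fundamental invariant property stated in \Cref{prop:fundinv}. The crucial structural observation is that each $I_p$ is a place invariant over $T' \setminus T$ whose only nonzero coordinate among the genuine places $P$ is $I_p[p] = 1$: all of its remaining support lies on the auxiliary places $r_i, l_i, d_i, h_i$ in $P' \setminus P$. Since $\m$ and $\m'$ are markings of $\pn$, they are supported entirely on $P$, so applying such an invariant simply reads off the token count of $p$.

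First I would fix an arbitrary place $p \in P$. Because $\support{\parikh{\sigma}} \subseteq T' \setminus T$ and $I_p$ is a place invariant over $T' \setminus T$, \Cref{prop:fundinv} immediately yields $I_p \cdot \m = I_p \cdot \m'$. Next I would invoke the hypothesis that $\m$ and $\m'$ are markings of $\pn$, i.e.\ they assign zero tokens to every place in $P' \setminus P$. Among the places in the domain of $I_p$, the only one that also lies in $P$ is $p$ itself, so in the scalar product $I_p \cdot \m$ every term but the one for $p$ vanishes; hence $I_p \cdot \m = I_p[p] \cdot \m[p] = \m[p]$, and symmetrically $I_p \cdot \m' = \m'[p]$. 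Combining these three equalities gives $\m[p] = \m'[p]$.

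Since $p$ was arbitrary, this establishes $\m[p] = \m'[p]$ for all $p \in P$. Finally, both $\m$ and $\m'$ assign zero to every auxiliary place in $P' \setminus P$ by assumption, so they agree on those coordinates as well, and therefore $\m = \m'$. There is no substantial obstacle in this argument: it is a direct application of the place-invariant machinery. The only point requiring care is the choice of $I_p$, namely weight $1$ on $p$ and weights $2^{i-1}$ on the gadget places $r_i, l_i, d_i, h_i$; this is precisely what makes its restriction to markings supported on $P$ collapse to the single coordinate $p$, so that the family $\{I_p\}_{p \in P}$ jointly pins down the entire $P$-component of the marking.
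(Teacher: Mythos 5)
Your proof is correct and follows essentially the same route as the paper: fix $p \in P$, use that $I_p$ is a place invariant over $T' \setminus T$ together with \Cref{prop:fundinv} to get $I_p \cdot \m = I_p \cdot \m'$, and then observe that since $\m$ and $\m'$ are supported on $P$ and $p$ is the only place of $P$ in the domain of $I_p$, the scalar products collapse to $\m[p] = \m'[p]$. No gaps.
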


\begin{proof}
  We argue for each place $p \in P$ individually that
  $\m[p] = \m'[p]$.

  Recall that $I_p$ is a place invariant over $T' \setminus T$.
  Therefore, $I_p \cdot \m = I_p \cdot \m'$ by \Cref{prop:fundinv}.
  Note also that in the domain of $I_p$, the only place in $P$ is $p$.
  Since $\m$ and $\m'$ are markings of $\pn$, and consequently all
  places in the domain of $I_p$ other than $p$ must be unmarked,
  it follows that $I_p[p] \cdot \m[p] = I_p[p] \cdot \m'[p]$. Thus, $\m[p] = \m'[p]$. \qed
\end{proof}

\begin{lemma}\label{lem:runreach}
  Let $\m, \m'$ be markings of $\pn$.
  If $\m \reach \m'$ in $\encpn$,
  then $\m \reach \m'$ in $\pn$.
\end{lemma}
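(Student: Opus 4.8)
The plan is to prove a stronger, invariant-flavoured statement by induction that also tracks tokens sitting in the auxiliary gadget places. I would use the place invariants $I_p$ (each of which is an invariant over $T' \setminus T$) to collapse an arbitrary marking $\n$ of $\encpn$ back to a marking $\kappa(\n)$ of $\pn$ by setting $\kappa(\n)[p] \defeq I_p \cdot \n$ for each $p \in P$. Since every coordinate of $I_p$ lies in $\N$, we have $\kappa(\n) \in \N^P$; and since the only $P$-place in the domain of $I_p$ is $p$ itself (with coefficient $1$), $\kappa$ restricts to the identity on markings of $\pn$, i.e.\ $\kappa(\m) = \m$. Two facts about $\kappa$ drive the argument: \textbf{(i)}~firing any $s \in T' \setminus T$ leaves $\kappa$ unchanged, because each $I_p$ is a place invariant over $T' \setminus T$ (\Cref{prop:fundinv}); and \textbf{(ii)}~firing a main transition $t'$ shifts $\kappa$ by exactly $\effect{t}$, i.e.\ $I_p \cdot \effect{t'} = \effect{t}[p]$ for all $p$. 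Fact~(ii) is a short computation: on the pre-side, $t'$ consumes $b_i(p)$ tokens from each $r_i(p)$ when $p \in \Ppre$ (and $\pre{t}[p]$ tokens from $p$ otherwise), and $\sum_i 2^{i-1} b_i(p) = b(p) = \pre{t}[p]$; the post-side is symmetric through the $d_i(p)$ places using $c(p) = \post{t}[p]$.

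I would then prove the generalised claim: for every marking $\n$ of $\encpn$ and every marking $\m'$ of $\pn$, if $\n \trans{\sigma} \m'$ in $\encpn$, then $\kappa(\n) \reach \m'$ in $\pn$. The lemma follows by taking $\n \defeq \m$ and using $\kappa(\m) = \m$. The induction is on the number $k$ of occurrences of main transitions in $\sigma$. In the base case $k = 0$ the run uses only gadget transitions, so fact~(i) gives $\kappa(\n) = \kappa(\m') = \m'$, and the empty run suffices; this is precisely the generalisation of \Cref{lem:unreach-preserves}.

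For the inductive step I would split $\sigma = \sigma_0\, t'\, \sigma_1$ at its first main transition, so that $\support{\parikh{\sigma_0}} \subseteq T' \setminus T$, and write $\n \trans{\sigma_0} \n_0 \trans{t'} \n_1 \trans{\sigma_1} \m'$. By fact~(i), $\kappa(\n_0) = \kappa(\n)$, and by fact~(ii), $\kappa(\n_1) = \kappa(\n_0) + \effect{t}$. The crucial point is that the corresponding $t \in T$ is enabled at $\kappa(\n)$ in $\pn$. From $\n_0 \geq \pre{t'}$ I get, for $p \in \Ppre$, that $\kappa(\n_0)[p] = I_p \cdot \n_0 \geq \sum_i 2^{i-1}\n_0[r_i(p)] \geq \sum_i 2^{i-1} b_i(p) = \pre{t}[p]$, using nonnegativity of $I_p$; and for $p \notin \Ppre$, $\kappa(\n_0)[p] \geq \n_0[p] \geq \pre{t'}[p] = \pre{t}[p]$. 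Hence $\kappa(\n) = \kappa(\n_0) \trans{t} \kappa(\n_0) + \effect{t} = \kappa(\n_1)$ in $\pn$. Applying the induction hypothesis to $\n_1 \trans{\sigma_1} \m'$, which contains $k-1$ main transitions, gives $\kappa(\n_1) \reach \m'$, and composing yields $\kappa(\n) \trans{t} \kappa(\n_1) \reach \m'$.

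The main obstacle I anticipate is exactly this enabledness step. One is tempted to invoke the $\Leftarrow$ direction of \Cref{lem:preservesreach}, but that requires the gadget transitions preceding $t'$ to belong only to $t$'s own gadgets $\bigcup_{p \in \Ppre} T_{p,t}$; in an arbitrary run, $\sigma_0$ may fire gadget transitions of \emph{other} transitions, which can pull tokens out of a place $p \in P$ shared between several gadgets. This is why the per-gadget invariant $I_{p,t}$ is insufficient and one must argue with the aggregate invariant $I_p$ over all of $T' \setminus T$, and correspondingly generalise the induction to arbitrary markings $\n$ of $\encpn$ rather than only markings of $\pn$.
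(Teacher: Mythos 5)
Your proof is correct, and it takes a genuinely different — and in fact more careful — route than the paper's. The paper's own proof is very short: it factors the run as $\sigma_1 t_1 \sigma_1' \cdots \sigma_n t_n \sigma_n'$ with the $\sigma_i, \sigma_i'$ supported in $T' \setminus T$ and then directly invokes \Cref{lem:preservesreach} (together with \Cref{lem:unreach-preserves} for the purely-gadget case) to conclude that $t_1 \cdots t_n$ can be fired in $\pn$. As you correctly observe, this appeal is not literally licensed by the statement of \Cref{lem:preservesreach}: that lemma's $\Leftarrow$ direction starts from a marking of $\pn$ and assumes the gadget prefix is supported in $\bigcup_{p \in \Ppre} T_{p,t}$ for the \emph{specific} transition $t$ being fired, whereas in an arbitrary factored run the intermediate markings may leave tokens stranded in gadget places and the blocks $\sigma_i$ may mix gadget transitions of several main transitions. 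Your construction of the collapsing map $\kappa(\n)[p] \defeq I_p \cdot \n$, the two facts that gadget transitions preserve $\kappa$ while a main transition $t'$ shifts it by $\effect{t}$, and the strengthened induction hypothesis over arbitrary markings $\n$ of $\encpn$ close exactly this gap: the enabledness of $t$ at $\kappa(\n_0)$ follows from nonnegativity of $I_p$ and $\n_0 \geq \pre{t'}$, and the base case recovers \Cref{lem:unreach-preserves}. What the paper's route buys is brevity by reusing the block-wise lemmas; what yours buys is a self-contained argument that is robust to interleaving of different gadgets and to intermediate markings outside $\N^P$ restricted to $P$ — precisely the situations the one-line appeal glosses over.
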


\begin{proof}
  Let $\m \trans{\pi} \m'$.
  If $\support{\parikh{\pi}} \subseteq T' \setminus T$, then $\m = \m'$ by \Cref{lem:unreach-preserves}, and we are done. So, assume that $t \in T$ for some $t \in \pi$. We factor run $\pi$ so that $\pi = \sigma_1 t_1 \sigma_1' \cdots \sigma_n t_n \sigma'_n$
  with $t_1, \dots, t_n \in T$ and \[\support{\parikh{\sigma_1}},\support{\parikh{\sigma'_1}},\dots,\support{\parikh{\sigma_n}},\support{\parikh{\sigma'_n}} \subseteq T \setminus T'.\]
  It follows from \Cref{lem:preservesreach} that $\imarked{1} \trans{t_1 \cdots t_n} \fmarked{k}$. \qed
\end{proof}

\begin{proposition}
  For any workflow net $\pn$ and any $k \in \Nn$,
  $\pn$ is $k$-quasi-sound iff $\encpn$ is $k$-quasi-sound.
\end{proposition}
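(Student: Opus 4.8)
The plan is to unfold the definition of $k$-quasi-soundness, namely $\imarked{k} \reach \fmarked{k}$, and to observe that both $\imarked{k}$ and $\fmarked{k}$ are markings of $\pn$, in the sense that they place no token on any of the fresh gadget places $P' \setminus P$. Consequently the statement reduces to proving that $\imarked{k} \reach \fmarked{k}$ holds in $\pn$ iff it holds in $\encpn$, and each direction then follows from one of the reachability-transfer lemmas already established.

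For the ($\Leftarrow$) direction I would simply invoke \Cref{lem:runreach} with $\m \defeq \imarked{k}$ and $\m' \defeq \fmarked{k}$: since these are markings of $\pn$, the hypothesis $\imarked{k} \reach \fmarked{k}$ in $\encpn$ immediately yields $\imarked{k} \reach \fmarked{k}$ in $\pn$, so no further work is needed here.

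For the ($\Rightarrow$) direction I would take a witnessing run $\imarked{k} = \m_0 \trans{t_1} \m_1 \trans{t_2} \cdots \trans{t_n} \m_n = \fmarked{k}$ in $\pn$, where each intermediate $\m_i \in \N^P$ is again a marking of $\pn$ (the empty run handling the degenerate case $n = 0$). Applying \Cref{lem:preservesreach} to each single step $\m_{i-1} \trans{t_i} \m_i$ produces a run $\m_{i-1} \trans{\pi_i t_i' \pi_i'} \m_i$ in $\encpn$, where $t_i'$ is the transition of $\encpn$ corresponding to $t_i$ and the subruns $\pi_i, \pi_i'$ use only the gadget transitions associated with $t_i$. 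Crucially, each such subrun starts and ends in a marking of $\pn$, hence with every gadget place empty, so the subruns concatenate without interference: the composite run $\pi_1 t_1' \pi_1' \cdots \pi_n t_n' \pi_n'$ witnesses $\imarked{k} \reach \fmarked{k}$ in $\encpn$.

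The bulk of the argument has in fact already been discharged by \Cref{lem:preservesreach} and \Cref{lem:runreach}, so I do not anticipate a genuine obstacle. The one point demanding care is the compositionality of the simulation in the forward direction: I must confirm that the intermediate markings $\m_i$ carry no tokens on $P' \setminus P$, which holds precisely because they are markings of $\pn$, so that \Cref{lem:preservesreach} is applicable at every step; and that the gadget places are empty at the boundary between consecutive segments, so that firing $\pi_{i+1} t_{i+1}' \pi_{i+1}'$ is unaffected by the preceding part of the run. Both facts are guaranteed by the form of the conclusion of \Cref{lem:preservesreach}, which transfers a step between two genuine $\pn$-markings.
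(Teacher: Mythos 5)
Your proposal is correct and matches the paper's own argument, which simply observes that the claim follows from \Cref{lem:preservesreach} and \Cref{lem:runreach}; you have merely spelled out the step-by-step simulation and the composability of the gadget subruns that the paper leaves implicit.
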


\begin{proof}
  This follows immediately from \Cref{lem:preservesreach,lem:runreach}. \qed
\end{proof}

\begin{proposition}
  For any workflow net $\pn$ and any $k \in \Nn$,
  $\pn$ is $k$-sound iff $\encpn$ is $k$-sound.
\end{proposition}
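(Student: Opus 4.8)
The plan is to reduce both directions to the reachability-preservation lemmas already established, namely \Cref{lem:preservesreach} (which lifts a single firing $\m \trans{t} \m'$ in $\pn$ to a run $\m \reach \m'$ in $\encpn$, and conversely) and \Cref{lem:runreach} (which projects a run between two markings of $\pn$ from $\encpn$ back to $\pn$). Recall that $k$-soundness of $\pn$ asks that every $\m \in \Reach{\pn, \imarked{k}}$ satisfies $\m \reach \fmarked{k}$, and likewise for $\encpn$. The key conceptual point is that a marking reachable in $\encpn$ may place tokens inside the auxiliary ``elevator'' and ``ladder'' places of $P' \setminus P$, whereas a marking of $\pn$ is supported only on $P$. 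So the crux is to relate arbitrary $\encpn$-markings to genuine $\pn$-markings.

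First I would prove a \emph{flushing} claim: from every $\n \in \Reach{\encpn, \imarked{k}}$ there exists a marking $\m$ with $\support{\m} \subseteq P$ such that $\n \reach \m$ in $\encpn$. Each auxiliary place $q \in P' \setminus P$ belongs to a unique gadget $P_{p,t}$ or $P'_{p,t}$ attached to some main place $p$, and the transitions of that gadget affect only $p$ and the places of the gadget. Using \Cref{lem:elevator-down} for the elevator places (via the reverse transitions, which are part of $\encpn$) and \Cref{lem:ladder-down} for the ladder places, each token sitting in an auxiliary place can be routed back to its associated place $p$; doing this gadget by gadget, and token by token within a gadget, empties $P' \setminus P$ and yields the desired $\m$. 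This flushing step is where I expect the main difficulty, since one must argue that the local reversibility lemmas, which are stated for a single token in a single gadget place, can be applied repeatedly in an arbitrary reachable configuration without getting stuck. This is fine because the gadget transitions are local to $\{p\} \cup P_{p,t}$ (resp.\ $\{p\} \cup P'_{p,t}$), and accumulating spare tokens in $p$ never disables them.

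With the flushing claim in hand, the forward direction ($\pn$ $k$-sound $\Rightarrow$ $\encpn$ $k$-sound) proceeds as follows: given $\n \in \Reach{\encpn, \imarked{k}}$, flush to obtain $\n \reach \m$ with $\support{\m} \subseteq P$; then $\imarked{k} \reach \m$ in $\encpn$, so by \Cref{lem:runreach} we get $\imarked{k} \reach \m$ in $\pn$; $k$-soundness of $\pn$ gives $\m \reach \fmarked{k}$ in $\pn$; and applying \Cref{lem:preservesreach} to each transition of this run lifts it to $\m \reach \fmarked{k}$ in $\encpn$, whence $\n \reach \fmarked{k}$. Conversely ($\encpn$ $k$-sound $\Rightarrow$ $\pn$ $k$-sound), take $\m \in \Reach{\pn, \imarked{k}}$; applying \Cref{lem:preservesreach} transition-by-transition gives $\imarked{k} \reach \m$ in $\encpn$, so $k$-soundness of $\encpn$ yields $\m \reach \fmarked{k}$ in $\encpn$, and since $\m$ and $\fmarked{k}$ are both markings of $\pn$, \Cref{lem:runreach} returns $\m \reach \fmarked{k}$ in $\pn$. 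The two directions together establish the claim.
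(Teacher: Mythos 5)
Your proposal is correct and follows essentially the same route as the paper's proof: both directions rest on \Cref{lem:preservesreach} and \Cref{lem:runreach}, and the forward direction handles markings with tokens in $P' \setminus P$ by flushing them back to $P$ via \Cref{lem:elevator-down,lem:ladder-down} before projecting. Your explicit remark that the flushing must be justified token-by-token and gadget-by-gadget is a point the paper glosses over, but the argument is the same.
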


\begin{proof}
  $\Rightarrow$)
  Assume $\pn$ is $k$-sound.
  Let $\m$ be a marking of $\encpn$
  such that $\imarked{k} \reach \m$ in $\encpn$.
  If $\m$ is also a marking of $\pn$,
  then $\imarked{k} \reach \m$ in $\pn$ by \Cref{lem:runreach}.
  Thus, $\m \reach \fmarked{k}$ in $\pn$ by $k$-soundness,
  and $\m \reach \fmarked{k}$ in $\encpn$ by \Cref{lem:preservesreach}.
  If $\m$ is not a marking on $\pn$, then, for each place $p \in P' \setminus P$,
  we can invoke \Cref{lem:elevator-down,lem:ladder-down} in order to
  obtain a marking $\m'$ which marks only places in $P$.
  So, we have $\imarked{k} \reach \m \reach \m'$ in $\encpn$, and it follows by \Cref{lem:runreach}
  that $\imarked{k} \reach \m'$ in $\pn$.
  Thus, $\m' \reach \fmarked{k}$ in $\pn$ by $k$-soundness, and $\m' \reach \fmarked{k}$ in $\encpn$ by \Cref{lem:preservesreach},
  which shows that $\encpn$ is $k$-sound.

  $\Leftarrow$)
  Assume $\encpn$ is $k$-sound.
  Let $\m$ be a marking of $\pn$ such that
  $\imarked{k} \reach \m$ in $\pn$.
  If follows from \Cref{lem:preservesreach}
  that $\imarked{k} \reach \m$ in $\encpn$.
  By $k$-soundness of $\encpn$, we have $\m \reach \fmarked{k}$ in $\encpn$.
  Thus, $\m \reach \fmarked{k}$ in $\pn$ by \Cref{lem:runreach}. \qed
\end{proof}

\subsection{Missing proofs of \Cref{sec:experimental}}

Let us prove the properties claimed about the instances of
\Cref{fig:synthetic}.

\begin{restatable}{proposition}{propThreeSynthetic}\label{prop:three:synthetic}
  It is the case that
  \begin{enumerate}
    \item $\pn_c$ is $c$-unsound and $k$-sound for all $k \in
            [1..c-1]$.\label{itm:gensound}
    \item $\pn_{\text{sound-}c}$ is $kc$-sound for all $k \in
            \Nn$,\label{itm:sound}

    \item $\pn_{\neg \text{quasi-}c}$ is not structurally quasi-sound,
          and\label{itm:nquasi}

    \item $\pn_{\neg \text{sound-}c}$ is $(m c)$-quasi-sound for all $m \in
            \Nn$, not $k$-quasi-sound for any other number $k \in \Nn$, and
          not structurally sound.\label{itm:nsound}
  \end{enumerate}
\end{restatable}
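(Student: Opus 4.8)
The plan is to treat the four families separately, relying on two standard tools: \emph{place invariants} (weightings $\vec{w}$ with $\vec{w}\cdot\effect{t}=0$ for every transition $t$, whose value is constant along every run) and the \emph{total token count}. For the positive soundness claims I would combine an invariant that forces boundedness with the fact that these nets have no nontrivial $T$-invariant, so every run terminates; soundness then reduces to checking that the only reachable deadlock is the final marking. For the unsoundness and non-quasi-soundness claims it suffices to exhibit an explicit witness run or to read off a necessary divisibility condition from the state equation.

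For \eqref{itm:gensound} the key object is the invariant $\vec{w}=(c{+}1,\,1,\,c,\,c{+}1)$ on $(\initial,p,r,\output)$, under which every marking reachable from $\imarked{k}$ has weight $(c{+}1)k$. First I would note that $\pn_c$ is bounded (all weights positive) and that $\sum_t x_t\effect{t}=\vec 0$ with $x\ge\vec 0$ forces $x=\vec 0$ (read off from the $\initial$-, $\output$- and $r$-coordinates), so no marking repeats on a run and, the state space being finite, every maximal run ends in a deadlock. A marking is dead iff $\m[\initial]=0$, $\m[p]\le c{-}1$, and $\bigl(\m[p]=0$ or $\m[r]=0\bigr)$. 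Substituting the invariant and using $\gcd(c,c{+}1)=1$ together with $0\le\m[\output]\le k\le c{-}1$ shows that for $k\le c{-}1$ the only marking satisfying both the invariant and the deadlock constraints is $\fmarked{k}$; hence every reachable marking reaches $\fmarked{k}$, i.e.\ $\pn_c$ is $k$-sound. For $k=c$ I would just exhibit $\imarked{c}\trans{t_\initial^{\,c}\,t_r^{\,c+1}}\{r\colon c{+}1\}$, a dead marking distinct from $\fmarked{c}$, giving $c$-unsoundness.

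Claims \eqref{itm:sound} and \eqref{itm:nquasi} follow from the single-transition structure of the corresponding nets in \Cref{fig:synthetic}. In $\pn_{\text{sound-}c}$ the unique transition moves $c$ tokens from $\initial$ to $\output$, so from $\imarked{kc}$ the reachable markings are exactly $\{\initial\colon (k{-}j)c,\ \output\colon jc\}$ for $j\in[0..k]$, each reaching $\fmarked{kc}$ by firing the transition $k{-}j$ more times; this yields $kc$-soundness. In $\pn_{\neg\text{quasi-}c}$ the transition consumes $c$ and produces $c{-}1$, so it strictly decreases the total token count by $1$, whereas $\imarked{k}$ and $\fmarked{k}$ both have total count $k$; thus $\imarked{k}\reach\fmarked{k}$ would require firing nothing, which is impossible since $\initial\neq\output$, so no $k$ works.

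Claim \eqref{itm:nsound} is the most delicate, and is where I expect the main obstacle. Here $t_\initial$ splits a token into $u$ and $d$, $t_u$ consumes $c$ from $u$ and $1$ from $d$, and $t_d$ has net effect $-1$ on $d$ (it consumes $2$ and returns $1$) while producing one output token; the subtlety is that $t_d$ needs \emph{two} tokens in $d$, so a careless schedule strands a token there. The state equation forces, for $\imarked{k}\reach\fmarked{k}$, that $t_\initial$ fires $k$ times and $t_u$ fires $k/c$ times, so $c\mid k$ is necessary, ruling out every $k$ that is not a multiple of $c$. For $k=mc$ I would give the explicit schedule $t_\initial^{\,mc}\,(t_d^{\,c-1}\,t_u)^{m}$ and verify phase by phase that the minimal intermediate value of $d$ stays $\ge 2$ before each $t_d$ and $\ge 1$ before each $t_u$ (and $u\ge c$ before each $t_u$), establishing $mc$-quasi-soundness. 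Finally, since $k$-soundness implies $k$-quasi-soundness, structural soundness could only hold at some $k=mc$; to refute it I would fire $t_\initial^{\,mc}\,t_u^{\,m}$ to reach $\{d\colon m(c{-}1),\ \output\colon m\}$ with $u$ empty, from which only $t_d$ is available and $d$ gets stuck at value $1$ (a non-final deadlock), so $\pn_{\neg\text{sound-}c}$ is not $mc$-sound for any $m$, hence not structurally sound. The hard part will be getting the interleaving in the $mc$-quasi-soundness schedule exactly right so that no token is stranded in $d$, and dually pinning down the stranding run that witnesses non-soundness.
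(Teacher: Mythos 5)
Your proposal is correct on all four items, and for items~\eqref{itm:sound}, \eqref{itm:nquasi} and \eqref{itm:nsound} it follows essentially the same route as the paper: the paper dismisses \eqref{itm:sound} and \eqref{itm:nquasi} with one line (your token-count argument for $\pn_{\neg\text{quasi-}c}$ and the explicit description of the reachable markings of $\pn_{\text{sound-}c}$ are exactly what that line leaves implicit), and for \eqref{itm:nsound} the paper likewise derives $c \mid k$ from the fact that only $t_u$ empties $u$, exhibits a completing schedule for $k = mc$ (it fires all of $t_d^{m(c-1)}$ before $t_u^m$ rather than interleaving blocks, but both schedules are easily checked enabled), and refutes soundness with the same stranding run $t_\initial^{mc} t_u^m t_d^{\ast}$ ending in $\{d\colon 1, \output\colon mc-1\}$; your intermediate marking $\{d\colon m(c-1), \output\colon m\}$ is in fact the correct count where the paper's displayed $\{d\colon (m-1)c,\output\colon m\}$ appears to be a slip. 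The genuine divergence is in item~\eqref{itm:gensound}. Both proofs rest on the same place invariant $(c{+}1,1,c,c{+}1)$, but the paper builds an explicit canonical run from an arbitrary reachable $\m$ — fire $t_\initial$ maximally, then $t_r$ maximally, then use the invariant to compute $\m_2[p]=\m_2[r]=k-\m[\output]$ and fire $t_\output$ that many times — whereas you prove termination (no nontrivial nonnegative $T$-semiflow plus boundedness from the strictly positive invariant, so every maximal run reaches a deadlock) and then show via the invariant and $\gcd(c,c{+}1)=1$ that the only deadlock with invariant value $(c{+}1)k$ for $k\le c{-}1$ is $\fmarked{k}$. Your version trades the paper's modular-arithmetic bookkeeping along a concrete run for a clean deadlock classification, at the cost of having to establish the termination argument explicitly; both are sound, and the two case analyses (the $\m[p]=0$ case forcing $c \mid k-\m[\output]$ hence $\m[\output]=k$, and the $\m[r]=0$ case forcing $(c{+}1)\mid\m[p]$ with $1\le\m[p]\le c{-}1$, a contradiction) check out. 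The $c$-unsoundness witness $\imarked{c}\trans{t_\initial^{c}t_r^{c+1}}\{r\colon c{+}1\}$ is identical in both proofs.
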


\begin{proof}
  \noindent\emph{Items~\ref{itm:sound} and~\ref{itm:nquasi}}. They
  follow from the definitions of the unique transition.

  \medskip\noindent\emph{Item~\ref{itm:gensound}.}
  We first focus on $k$-soundness. Let $k \in [1..c-1]$ and let $\m$
  be a marking such that $\imarked{k} \reach \m$. We must show that
  $\m \reach \fmarked{k}$.

  Recall the definition of a place invariant from \Cref{def:pinv}.

  Let $\vec{x}[\initial] \defeq c + 1$, $\vec{x}[p] \defeq 1$,
  $\vec{x}[r] \defeq c$ and $\vec{x}[\output] \defeq c + 1$. It is
  readily seen that $\vec{x}$ is a place invariant.
  Recall \Cref{prop:fundinv}: for any two markings $\n$ and $\n'$, if $\n \reach
    \n'$, then $\vec{x} \cdot \n = \vec{x} \cdot \n'$. Since $\imarked{k}
    \reach \m$, we have $\vec{x} \cdot \imarked{k} = (c + 1) \cdot k =
    \vec{x} \cdot \m$.

  From marking $\m$, transition $t_{\initial}$ can be fired
  $\m[\initial]$ times, which leads to marking
  \[
    \m_1 \defeq \{p \colon \m[p] + (c+1) \cdot \m[\initial], r \colon
    \m[r], \output \colon \m[\output]\}.
  \]
  From $\m_1$, transition $t_r$ can be fired $\m_1[\initial] \div c$
  times, which leads to marking
  \[
    \m_2\defeq \{p \colon \m_1[p]~\mathrm{mod}~c, r \colon \m_1[r] +
    \m_1[p] \div c, \output \colon \m_1[\output]\}.
  \]
  Recall that from place invariant $\vec{x}$, we have
  \[
    (c + 1) \cdot k
    = (c + 1) \cdot \m[\initial] + \m[p] + c \cdot \m[r] +
    (c + 1) \cdot \m[\output].
  \]
  By reorganizing this equation, we obtain
  \begin{equation}
    \m[p] + \m[i] =
    (c + 1)(k - \m[\output]) - c \cdot (\m[\initial] + \m[r]).\label{eq:ieq}
  \end{equation}
  This means that
  \begin{align}
    \m_2[p]
     & = \m_1[p]~\mathrm{mod}~c
     &                                                & \text{(by def.\ of $\m_2$)}\nonumber      \\
     & = (\m[p] + (c + 1) \cdot \m[i])~\mathrm{mod}~c
     &                                                & \text{(by def.\ of $\m_1$)}\nonumber      \\
     & = (\m[p] + \m[i])~\mathrm{mod}~c\nonumber                                                  \\
     & = k - \m[\output]
     &                                                & (\text{by~\eqref{eq:ieq}}).\label{eq:m2p}
  \end{align}
  Since $\imarked{k} \reach \m_2$, from place invariant $\vec{x}$, we
  obtain
  \[
    (c + 1) \cdot k
    = (c + 1) \cdot \m_2[\initial] + \m_2[p] + c \cdot \m_2[r] +
    (c + 1) \cdot \m_2[\output].
  \]
  By reorganizing this equation, we obtain
  \begin{equation}
    c \cdot \m_2[r] =
    (c + 1) \cdot (k - \m_2[\initial] - \m_2[\output]) -
    \m_2[p].\label{eq:ieq2}
  \end{equation}
  This means that
  \begin{align*}
    c \cdot \m_2[r]
     & = (c + 1) \cdot (k - \m_2[\initial] - \m_2[\output]) - \m_2[p]
     &                                                                & \text{(by~\eqref{eq:ieq2})}                    \\
     & = (c + 1) \cdot (k - \m_1[\output]) - (k - \m[\output])
     &                                                                & \text{(by def.\ of $\m_2$ and~\eqref{eq:m2p})} \\
     & = (c + 1) \cdot (k - \m[\output]) - (k - \m[\output])
     &                                                                & \text{(by def.\ of $\m_1$)}                    \\
     & = c \cdot (k - \m[\output]).
  \end{align*}

  Altogether, we have $\m_2[r] = (k - \m[\output]) = \m_2[p]$. Thus,
  from $\m_2$, transition $t_{\output}$ can be fired ${k -
        \m[\output]}$ times, which leads to marking
  \[
    \fmarked{\m_1[\output] + (k - \m[\output])}
    = \fmarked{\m[\output] + k - \m[\output]}
    = \fmarked{k}.
  \]
  This concludes the proof of $k$-soundness as $\m \reach \m_1 \reach
    \m_2 \reach \fmarked{k}$.

  It remains to consider the case where $k = c$. We have
  \begin{align*}
    \imarked{c} \trans{t_{\initial}^{c}} \{p \colon (c+1)\cdot c\}
    \trans{t_r^{c+1}} \{r\colon (c+1)\}.
  \end{align*}
  No transition is enabled in the latter marking. So, we have $\{r
    \colon (c+1)\} \not\reach \fmarked{c}$ and hence $c$-unsoundness
  follows. We are done proving this item.

  \medskip\noindent\emph{Item~\ref{itm:nsound}}.
  Let $k \in \Nn$ be a number that is not a multiple of $c$. Let us
  first show that $\imarked{k} \not\reach \fmarked{k}$. For the sake
  of contradiction, assume there exists a run $\rho$ such that
  $\imarked{k} \trans{\rho} \fmarked{k}$. Note that $\rho$ needs to
  fire $t_{\initial}$ exactly $k$ times, since no other transition
  consumes from $\initial$. Without loss of generality, let us reorder
  $\rho$ into a run $\rho'$ such that any firing of $t_{\initial}$
  happens at the beginning. Let us write $\rho' = t_{\initial}^k
    \sigma$, where $\sigma$ does not contain $t_{\initial}$. We have
  that $\imarked{1} \trans{t_{\initial}^k} \{u \colon k, d \colon
    k\}$. The only transition consuming from $u$ is $t_u$. Since $k$ is
  not a multiple of $c$, and since $t_u$ consumes $c$ tokens from $u$,
  place $u$ can never be emptied. Thus $\{u \colon k, d \colon k\}
    \not \reach \fmarked{1}$.

  Next, let us show that $\imarked{mc} \reach \fmarked{mc}$ for any $m
    \in \Nn$. It follows from
  \begin{multline*}
    \imarked{mc}
    \trans{t_{\initial}^{mc}}
    \{u \colon mc, d \colon mc\} \trans{t_d^{m(c - 1)}}
    \{u \colon mc, d \colon m, \output \colon m(c - 1)\} \\
    \trans{t_u^m} \{\output \colon mc\}.
  \end{multline*}

  Finally, we show that $N_{\neg \text{sound}}$ is not structurally
  sound. It suffices to show that it is $mc$-unsound for all $m \in
    \Nn$. Note that
  \begin{multline*}
    \imarked{mc}
    \trans{t_{\initial}^{mc}}
    \{u \colon mc, d \colon mc\}
    \trans{t_u^m}
    \{d \colon (m - 1)c, \output \colon m\} \\
    \trans{t_d^{((m - 1)c) - 1}}
    \{d \colon 1, \output \colon mc - 1\}.
  \end{multline*}
  No transition is enabled in the latter marking, so $mc$-unsoundness
  follows. \qed
\end{proof}

\end{document}